\theoremstyle{plain}
\newtheorem{theorem}{Theorem}[section]
\newtheorem{lemma}[theorem]{Lemma}
\theoremstyle{definition}
\newtheorem{definition}[theorem]{Definition}
\theoremstyle{remark}
\icmltitlerunning{Synonymous Variational Inference for Perceptual Image Compression}
\begin{document}

\twocolumn[
\icmltitle{Synonymous Variational Inference for Perceptual Image Compression}



\icmlsetsymbol{equal}{*}

\begin{icmlauthorlist}
\icmlauthor{Zijian Liang}{uwc}
\icmlauthor{Kai Niu}{uwc,pcl}
\icmlauthor{Changshuo Wang}{uwc}
\icmlauthor{Jin Xu}{uwc}
\icmlauthor{Ping Zhang}{skl}
\end{icmlauthorlist}

\icmlaffiliation{uwc}{Key Laboratory of Universal Wireless Communications, Ministry of Education, Beijing University of Posts and Telecommunications, Beijing, China}
\icmlaffiliation{pcl}{Peng Cheng Laboratory, Shenzhen, China}
\icmlaffiliation{skl}{State Key Laboratory of Networking and Switching Technology, Beijing University of Posts and Telecommunications, Beijing, China}

\icmlcorrespondingauthor{Kai Niu}{niukai@bupt.edu.cn}

\icmlkeywords{Synonymous, Variational Inference, Perceptual, Image Compression, Semantic Information Theory}

\vskip 0.3in
]



\printAffiliationsAndNotice{}  

\begin{abstract}
Recent contributions of semantic information theory reveal the set-element relationship between semantic and syntactic information, represented as synonymous relationships. In this paper, we propose a synonymous variational inference (SVI) method based on this synonymity viewpoint to re-analyze the perceptual image compression problem. It takes perceptual similarity as a typical synonymous criterion to build an ideal synonymous set (Synset), and approximate the posterior of its latent synonymous representation with a parametric density by minimizing a partial semantic KL divergence. This analysis theoretically proves that the optimization direction of perception image compression follows a triple tradeoff that can cover the existing rate-distortion-perception schemes. Additionally, we introduce synonymous image compression (SIC), a new image compression scheme that corresponds to the analytical process of SVI, and implement a progressive SIC codec to fully leverage the model’s capabilities. Experimental results demonstrate comparable rate-distortion-perception performance using a single progressive SIC codec, thus verifying the effectiveness of our proposed analysis method.
\end{abstract}

\section{Introduction}
\label{Section_I}

Image compression is a typical topic for lossy source coding, aiming to achieve the optimal tradeoff between reconstructed image quality and coding rate. Following the rate-distortion optimization instructed by Shannon’s classic information theory \yrcite{shannon1948mathematical, shannon1959coding}, traditional image compression like JPEG \cite{wallace1991jpeg} and BPG \cite{bellard2015bpg} have pursued this goal with the peak signal-to-noise ratio (PSNR) and multi-scale structural similarity index (MS-SSIM) quality metrics through handcrafted algorithms including transform coding, quantization, and entropy coding. With the growing research in artificial intelligence, recent advancements in learned image compression (LIC) \cite{balle2018variational, minnen2018joint, cheng2020learned, he2021checkerboard, he2022elic, li2024frequencyaware} combine the optimization principles of traditional image coding with the capabilities of deep learning. Its basic optimization idea can be theoretically analyzed using a variational inference method similar to that in the variational auto-encoder \cite{kingma2013auto}, leading to a loss function in the form of a rate-distortion tradeoff \cite{balle2016end, balle2018variational}. These methods have demonstrated significant rate-distortion performance compared with conventional methods.

As the inconsistency exists between ``low distortion'' and ``high perceptual quality'', Blau and Michaeli explored the tradeoff between distortion and perception \yrcite{blau2018perception}. They further incorporated the perceptual constraint into compression limit analysis and explored the rate-distortion-perception tradeoff \cite{blau2019rethinking}. Further theoretical analysis \cite{theis2021on, theis2021a, yan2021perceptual, qian2022rate, theis2024position, Hami2024rate} and empirical results \cite{agustsson2019generative, mentzer2020high, he2022po, theis2022lossy, agustsson2023multi, muckley2023improving, hoogeboom2023high, xu2023conditional, careil2024towards} demonstrate the effectiveness of this new optimization direction and suggest that perceptual image compression (PIC) with high perceptual quality at low bitrates can be achieved with generative compression \cite{Shibani2018generative} empowered by generative adversarial networks (GAN) \cite{goodfellow2014generative} and diffusion models \cite{ho2020denoising}.

The great success of the rate-distortion-perception trade-off has effectively shifted the focus from symbol-level accuracy in traditional image compression and tends more towards semantic information accuracy, which aligns with Shannon and Weaver's discussion on communication problem levels \cite{weaver1953recent}. However, these works adopt diverse empirical optimization approaches for perceptual optimization, such as Kullback-Leibler (KL) divergence, discriminator-based adversarial loss like Wasserstein divergence \cite{blau2019rethinking}, or mixed with ``perceptual'' measure like LPIPS \cite{zhang2018unreasonable, mentzer2020high, muckley2023improving} and DISTS \cite{ding2020image}. The diversity and inconsistency of these methods motivate us to establish a unified perspective for perceptual image compression with a well-established mathematical theoretical framework.

In this paper, we model the perceptual image compression problem mathematically based on a semantic information theory viewpoint. Recent advancements in semantic information theory \cite{niu2024mathematical} highlight a set-element relationship between semantic information (i.e., the meaning) and syntactic information (i.e., data samples), where one meaning can be expressed in diverse syntactic forms. Building on this synonymity perspective, manipulating a set of samples with the same meaning (referred as to a \textbf{synonymous set}, abbreviated as ``\textbf{Synset}'') should be considered as the principle of semantic information processing. This viewpoint has the potential to surpass the theoretical limits of classical information theory while relaxing symbol-level accuracy (i.e., distortion) requirements. We emphasize that although the concept of synonymity originates from text data, it is universal to various types of natural data. For example, in image data, perceptual similarity between different images can be seen as a typical synonymous relationship. 

On this basis, we re-analyze the optimization goal of perceptual image compression and introduce a novel variational inference method to analyze its optimization direction, aiming to guide the design of an image compression scheme. The contributions of our paper are as follows:

\begin{enumerate}
    \item We propose \emph{Synonymous Variational Inference} (SVI), a novel variational inference method to analyze the optimization direction of perceptual image compression. By building an ideal synset based on a typical criterion of perceptual similarity, it approximates the posterior of the corresponding latent synonymous representation with a parametric density by minimizing a partial semantic KL divergence. This method theoretically proves that the optimization direction of perceptual image compression is an expected rate-distortion-perception tradeoff form that covers the existing rate-distortion-perception schemes. To the best knowledge of the authors, our method is \textbf{the first work that can theoretically explain the fundamental reason for the divergence measure's existence in existing perceptual image compression schemes}.

    \item We establish \emph{Synonymous Image Compression} (SIC), a new image compression scheme that corresponds to the analytical process of SVI. By solely encoding the latent synonymous representation partially, SIC interprets this information as an equivalent quantized latent synset. It reconstructs multiple images satisfying the synonymous relationship with the original image by multiple sampling the detailed representations independently from this latent synset.

    \item We implement a progressive SIC codec to validate the theoretical analysis result, fully leveraging the model’s capabilities. Experimental results demonstrate comparable rate-distortion-perception performance using a single neural progressive SIC image codec, thus verifying our method's effectiveness.
\end{enumerate}

\section{Background}
\label{section_II}

\subsection{Rate-Distortion Theory and Variational Inference}

As one of the fundamental theorems in Shannon's classical information theory, rate-distortion theory \cite{shannon1948mathematical, shannon1959coding} aims to address the lossy compression problem. It provides a theoretical lower bound of the compression rate $R\left(D\right)$ with a given distortion $D$, which can be characterized as a rate-distortion function \cite{thomas2006elements}
\begin{equation}
\begin{aligned}
    R\left(D\right) = & \min_{p\left(\hat{x}|x\right)} I\left(X;\hat{X}\right) \\
    & \mathrm{s}.\mathrm{t}. \quad \mathbb{E}_{x,\hat{x} \sim p\left(x, \hat{x}\right)} \left[d\left(x, \hat{x}\right)\right] \leq D,
\end{aligned}
\end{equation}
in which $I\left(X;\hat{X}\right)$ represents mutual information between the source $X$ and the reconstructed $\hat{X}$, numerically equal to the average coding rate for compressing $X$ with a given lossy codec; $D$ can be any reference distortion measure satisfying the condition that $d\left(x, \hat{x}\right) = 0$ if and only if $x = \hat{x}$, typified by the mean squared error (MSE). 

To achieve this, learned image compression achieves optimal rate-distortion performance through end-to-end optimization training. While the ultimate optimization target remains the rate-distortion tradeoff, aligning with the continuous changes in neural network model training, the optimization process is achieved through variational inference for the generative model, specifically the variational auto-encoders \cite{kingma2013auto, balle2016end, balle2018variational}. The core idea of variational inference is to build a parametric latent density $q(\tilde{\boldsymbol{y}}|\boldsymbol{x})$ and minimize the KL divergence, a standard measure in classical information theory, to approximate the true posterior $p_{\tilde{\boldsymbol{y}}|\boldsymbol{x}}(\tilde{\boldsymbol{y}}|\boldsymbol{x})$, i.e.,
\begin{equation}
\begin{aligned}
    \mathbb{E}_{\boldsymbol{x}\sim p\left(\boldsymbol{x}\right)} & D_{\text{KL}}\left[q||p_{\tilde{\boldsymbol{y}}|\boldsymbol{x}}\right] = \mathbb{E}_{\boldsymbol{x}\sim p\left(\boldsymbol{x}\right)}\mathbb{E}_{\tilde{\boldsymbol{y}}\sim q} \\
    & \bigg[ \cancelto{0}{\log q\left(\tilde{\boldsymbol{y}}|\boldsymbol{x}\right)}
     \underbrace{- \log p_{\boldsymbol{x}|\tilde{\boldsymbol{y}}}\left(\boldsymbol{x}|\tilde{\boldsymbol{y}}\right)}_{\text{weighted distortion}}
     \underbrace{- \log p_{\tilde{\boldsymbol{y}}}\left(\tilde{\boldsymbol{y}}\right)}_{\text{rate}}
     \bigg] \\
     & + \text{const}.
\end{aligned}
\end{equation}
As the first term equals 0 under the assumption of a uniform density on the unit interval centered on $\boldsymbol{y}$, and the last term is a constant, the optimization simplifies to the sum of a weighted distortion and a coding rate, thereby achieving the optimal rate-distortion tradeoff.

\subsection{The Rate-Distortion-Perception tradeoff}

Since Blau and Michaeli demonstrated the apparent tradeoff between perceptual quality and distortion measure that widely exists in various distortion measures \yrcite{blau2018perception}, they extended the classic rate-distortion tradeoff to a triple tradeoff version \yrcite{blau2019rethinking}. Specifically, they define the perceptual quality index $d_p\left(p_{\boldsymbol{x}}, p_{\hat{\boldsymbol{x}}}\right)$ based on some divergence between distributions of the source and reconstructed images, and build a new lower bound of compression rate $R\left(D, P\right)$ with considerations of the perception index, i.e.,
\begin{equation}\label{RDPbound}
\begin{aligned}
    R\left(D, P\right) = & \min_{p\left(\hat{\boldsymbol{x}}|\boldsymbol{x}\right)} I\left(\boldsymbol{X};\hat{\boldsymbol{X}}\right) \\
    & \mathrm{s}.\mathrm{t}. \quad \mathbb{E}_{\boldsymbol{x},\hat{\boldsymbol{x}} \sim p\left(\boldsymbol{x}, \hat{\boldsymbol{x}}\right)} \left[d\left(\boldsymbol{x}, \hat{\boldsymbol{x}}\right)\right] \leq D, \\
    & \quad\quad\,\, d_p\left(p_{\boldsymbol{x}}, p_{\hat{\boldsymbol{x}}}\right) \leq P.
\end{aligned}
\end{equation}
Building on this triple tradeoff relationship, the perceptual image compression methods \cite{agustsson2019generative, mentzer2020high, he2022po, agustsson2023multi, muckley2023improving} typically optimize the model using the following loss function form:
\begin{equation}\label{RDPcost}
\begin{aligned}
    \mathcal{L}_{RDP} = & \lambda_r \cdot I\left(\boldsymbol{X};\hat{\boldsymbol{X}}\right) + \lambda_d \cdot \mathbb{E}_{\boldsymbol{x},\hat{\boldsymbol{x}} \sim p\left(\boldsymbol{x}, \hat{\boldsymbol{x}}\right)} \left[d\left(\boldsymbol{x}, \hat{\boldsymbol{x}}\right)\right]  \\
    & \quad + \lambda_p \cdot d_p\left(p_{\boldsymbol{x}}, p_{\hat{\boldsymbol{x}}}\right).
\end{aligned}
\end{equation}

\subsection{Semantic Information Theory}

As the optimization towards perceptual quality is more inclined to the accuracy of conveying meaning (the semantic problem) instead of the symbol-level accuracy that classical information theory focuses on (the technical problem) \cite{weaver1953recent}, this paper considers analyzing the problem of perceptual image compression based on semantic information theory.

Research on semantic information theory has been ongoing since the 1950s, with various viewpoints such as logical probability \cite{carnap1952outline, bar1953semantic, barwise1981situations, floridi2004outline, bao2011towards} and fuzzy probability \cite{de1972definition, de1974entropy, al2001fuzzy} employed to discuss the essence and measures of semantic information, but no consensus has been reached over time. Furthermore, these viewpoints provide limited theoretical guidance for the practical coding of natural information sources.

However, a recent contribution to semantic information theory \cite{niu2024mathematical} presents a potential turning point in the field, which suggests understanding the semantic information from a synonymity perspective. In this theory, semantic information is processed based on a fundamental principle: considering a set of syntactic samples with the same meaning (referred to as a synset). Corresponding semantic information measures are also provided. As an important foundation, \textbf{a semantic variable} $\mathring{U}$ \footnote{Refer to as $\tilde{U}$ in Niu and Zhang's paper \yrcite{niu2024mathematical}. The ring hat symbol $\text{`` }\mathring{}\text{ ''}$ is appplied to distinguish it from the tilde hat symbol $\text{``  }\tilde{}\text{ ''}$ commonly used in variational inference.} \textbf{corresponds to various possible synsets} $\mathcal{U}_{i_s} = \left\{u_i \mid i \in \mathcal{N}_{i_s}\right\}$, where each sample $u_i$ is a possible value of the syntactic variable $U$ and shares the same semantic meaning with all the possible values $\left\{u_j \mid j \in \mathcal{N}_{i_s}\right\}$ indexed in $\mathcal{N}_{i_s}$. On this basis, the semantic entropy of $\mathring{U}$ is defined by
\begin{equation}
    H_s\left(\mathring{U}\right) = - \sum_{i_s} \sum_{i \in \mathcal{N}_{i_s}} p\left(u_i\right) \log \left(\sum_{i \in \mathcal{N}_{i_s}} p\left(u_i\right)\right),
\end{equation}
in which the probability of the synset $p\left(\mathcal{U}_{i_s}\right)$ is defined as the sum of the probabilities of all the samples $p\left(u_i\right)$ within it. This directly leads to the inequality between the semantic entropy and the classical Shannon entropy, i.e., $H_s\left(\mathring{U}\right) \leq H\Big(U\Big)$, being apparently valid, since the uncertainty of syntactic samples is no longer the focus.

\begin{figure*}[ht]
\vskip 0.2in
\begin{center}
\centerline{\includegraphics[width=0.95\textwidth]{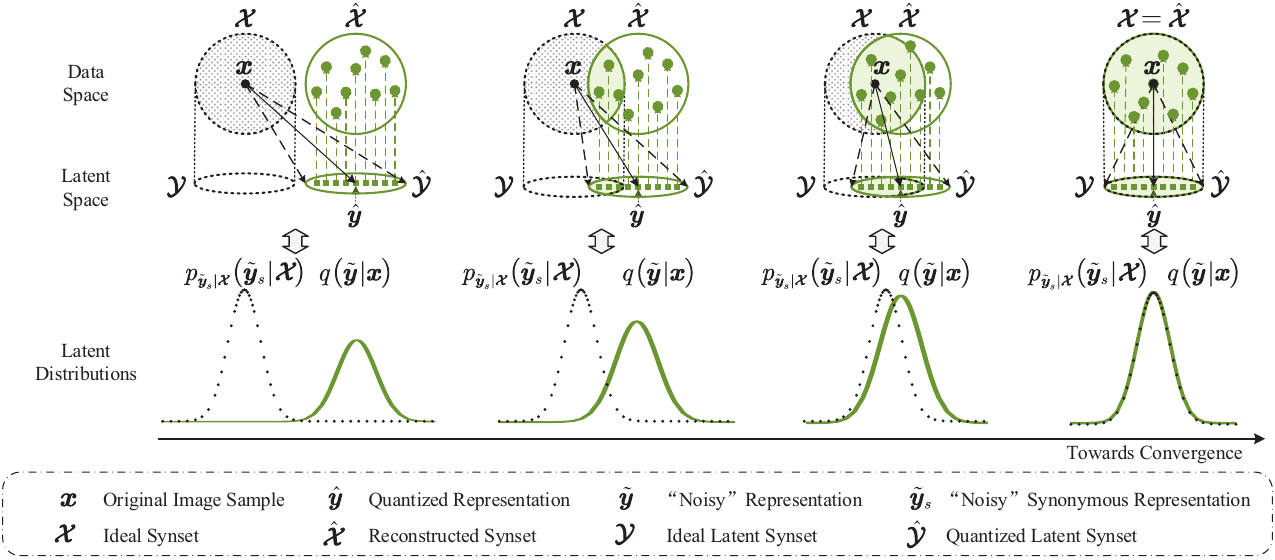}}
\caption{An illustration of the optimization directions of synonymous image compression. By continuously minimizing the partial semantic KL divergence $D_{\text{KL},s}\left[q||p_{\tilde{\boldsymbol{y}}_s|\boldsymbol{\mathcal{X}}}\right]$ in latent space, the reconstructed synset $\hat{\boldsymbol{\mathcal{X}}}$ gradually approaches the ideal synset $\boldsymbol{\mathcal{X}}$ until complete overlap occurs. At that point, every sample $\hat{\boldsymbol{x}}_j \in \hat{\boldsymbol{\mathcal{X}}}$ is a ``synonym'' of the original image sample $\boldsymbol{x}$.}
\label{figure1}
\end{center}
\vskip -0.2in
\end{figure*}

As the foundation of the synonymous variational inference proposed in this paper, a new form of KL divergence needs to be introduced from \cite{niu2024mathematical}, referred to as \emph{partial semantic KL divergence} $D_{\text{KL},s}\left[q||p_s\right]$, which is defined as
\begin{equation}
    D_{\text{KL},s}\left[q||p_s\right] = \sum_{i_s=1}^{\tilde{N}}\sum_{u_i \in \mathcal{U}_{i_s}} q\left(u_i\right) \log \frac{q\left(u_i\right)}{p\left(\mathcal{U}_{i_s}\right)},
\end{equation}
which represents a divergence between a syntactic distribution $q$ and a semantic distribution $p_s$. \footnote{The relationship between the partial semantic KL divergence and the standard KL divergence satisfies $D_{\text{KL},s}\left[q||p_s\right] \leq D_{\text{KL}}\left[q||p\right]$, which can be referred to as \emph{partial semantic relative entropy} in Niu and Zhang's paper \yrcite{niu2024mathematical}.} Clearly, these two distributions emphasize different levels of information, i.e., the syntactic level and the semantic level. However, examining the distance between these distributions holds significant physical meaning in perceptual image compression, which will be detailed in \cref{section_III_2}.

\section{Synonymous Variational Inference: A Semantic Information Viewpoint}
\label{section_III}

\subsection{Overview}
\label{section_III_1}

Consider an image codec, in which the encoder captures the semantic information of the image, while the decoder reconstructs an image with the same semantics as the original instead of directly restoring the original image's pixels. Obviously, in natural image data, there are typically multiple images that share the same semantic information as the original image. For example, all images that exhibit certain perceptual similarities to the original image can be considered to convey the same meaning ``to some extent'' while retaining distinct detailed information. \footnote{In practice, people may have varying judgments about whether two images have the same meaning, due to their varying judge criteria. Thus, samples in a synset based on a given synonymous criterion must share some specific semantic information but do not necessarily have completely identical meanings.} Therefore, when placing these images in a synset, there must be a latent representation and the corresponding coding sequence capable of capturing the shared characteristics among all samples in this set, which should be learned by the semantic encoder. Using this representation, the semantic decoder can sample any image from the synset of the original image, which satisfies certain perceptual similarity with the source image if the synonymous judge criterion is the perceptual similarity criterion. 

When considering using a deep neural network to design the above codec, its continuous optimization problem can be analyzed using a similar idea to the variational autoencoder, based on variational inference \cite{kingma2013auto}. \cref{figure1} illustrates the optimization direction of this problem and provides a schematic representation of the achievable effects within the data space. Ideally, the source image $\boldsymbol{x}$ can be considered as a sample in the ideal synset $\boldsymbol{\mathcal{X}}$, shown as the dashed circles in \cref{figure1}. Correspondingly, there must be an ideal posterior $p_{\tilde{\boldsymbol{y}}_s|\boldsymbol{\mathcal{X}}}\left(\tilde{\boldsymbol{y}}_s|\boldsymbol{\mathcal{X}}\right)$ in the latent space to represent the latent synonymous representations $\boldsymbol{y}_s$ that capture the shared characteristics of all the samples $\left\{\boldsymbol{x}_i|\boldsymbol{x}_i \in \boldsymbol{\mathcal{X}}\right\}$. However, in practice, the ideal synset $\boldsymbol{\mathcal{X}}$ is unavailable, so we rely on the original image $\boldsymbol{x}$ to construct a parametric latent density $q\left(\tilde{\boldsymbol{y}}|\boldsymbol{x}\right)$ to approximate this posterior by minimizing the partial semantic KL divergence between these two distributions, i.e.,
\begin{equation}\label{SVI}
     \min \mathbb{E}_{\boldsymbol{x}\sim p\left(\boldsymbol{x}\right)} D_{\text{KL},s}\left[q||p_{\tilde{\boldsymbol{y}}_s | \boldsymbol{\mathcal{X}}}\right].
\end{equation}
Once this distribution divergence is minimized, a generative model $p_{\boldsymbol{x}|\tilde{\boldsymbol{y}}_s, \boldsymbol{\hat{y}}_\epsilon}
\left(\boldsymbol{x}|\tilde{\boldsymbol{y}}_s, \boldsymbol{\hat{y}}_\epsilon
\right)$ (i.e., the semantic decoder, in which $\boldsymbol{\hat{y}}_\epsilon$ is a sampling for details) can be finally optimized. On this basis, the reconstructed synset $\boldsymbol{\hat{x}}$ produced by the semantic decoder can be considered as a sample of the ideal synset $\boldsymbol{\mathcal{X}}$, which ensures exhibiting certain perceptual similarities to the original image $\boldsymbol{x}$.

\begin{figure*}[ht]
\vskip 0.2in
\begin{center}
\centerline{\includegraphics[width=0.95\textwidth]{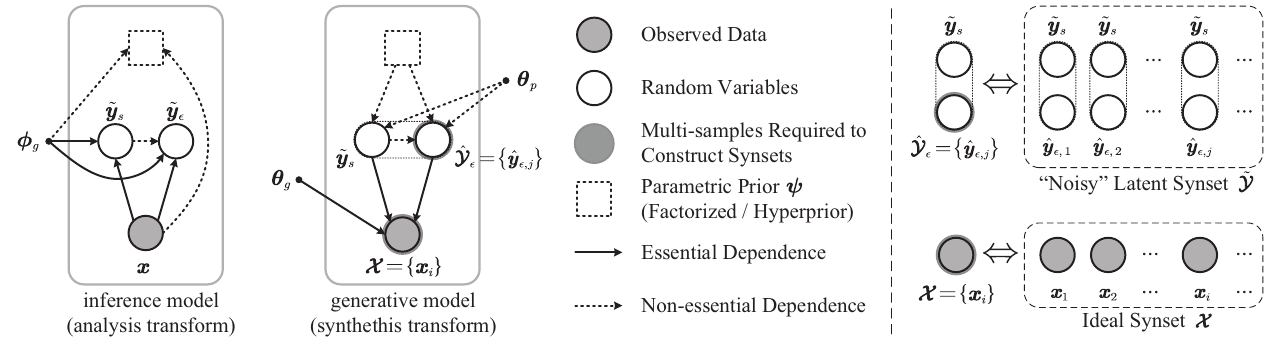}}
\caption{Left: Representation of the proposed encoder as a synonymous variational inference model, and corresponding decoder as a generative Bayesian model. The latent representation $\tilde{\boldsymbol{y}}$ is a merge of the synonymous representation $\tilde{\boldsymbol{y}}_s$ and the detailed representation $\tilde{\boldsymbol{y}}_\epsilon$, achieved through some form of merging or splicing. A fully factorized \cite{balle2016end} or a hyperprior-like \cite{balle2018variational, minnen2018joint} entropy model can be employed in the ``Parametric Prior'' item. An autoregressive \cite{minnen2018joint} or a parallel \cite{he2021checkerboard} context model can also be utilized in $\boldsymbol{\theta}_p$. These two types of methods can be used for accurate probability estimations of $\tilde{\boldsymbol{y}}_s$ or predictions for $\boldsymbol{\hat{y}}_{\epsilon}$. Right: Illustrations for the equivalent relationship of the ``noisy'' latent synset $\tilde{\boldsymbol{\mathcal{Y}}}$ and the ideal synset $\boldsymbol{\mathcal{X}}$. }
\label{figure2}
\end{center}
\vskip -0.2in
\end{figure*}

\subsection{Synonymous Variational Inference}
\label{section_III_2}

Unlike the usual variational inference, the two distributions of \eqref{SVI} work at different levels: The parametric density $q\left(\tilde{\boldsymbol{y}}|\boldsymbol{x}\right)$ works at the syntactic level, while the true posterior $p_{\tilde{\boldsymbol{y}}_s|\boldsymbol{\mathcal{X}}}\left(\tilde{\boldsymbol{y}}_s|\boldsymbol{\mathcal{X}}\right)$ operates at the semantic level, represented in the form of synsets. However, since $\tilde{\boldsymbol{y}}$ can be decomposed into a combination or concatenation of a synonymous representation $\tilde{\boldsymbol{y}}_s$ and a detailed representation $\tilde{\boldsymbol{y}}_\epsilon$, it is possible to effectively process the minimization of the partial semantic KL divergence. To distinguish from the existing variational inference methods, we give the following definition.
\begin{definition}
    \emph{Synonymous Variational Inference} (SVI) is a generic variational inference method that approximates the true posterior of the synonymous representations with a parametric density at the syntactic level by minimizing the partial semantic KL divergence \eqref{SVI}.
\end{definition}

By applying the proposed synonymous variational inference, the optimization direction of perceptual image compression can be determined. To facilitate the subsequent derivation, we first state the following lemma (the detailed proof can be found in \cref{section_A_1}):

\begin{lemma}\label{ENLSL}
    When the source considers the existence of an ideal synset $\boldsymbol{\mathcal{X}}$ and the decoder places the reconstructed sample in a reconstructed synset $\tilde{\boldsymbol{\mathcal{X}}}$, the minimization of the expected negative log synonymous likelihood term
    \begin{equation}
    \begin{aligned}
        & \min \mathbb{E}_{\boldsymbol{x}\sim p\left(\boldsymbol{x}\right)} \mathbb{E}_{\tilde{\boldsymbol{y}}\sim q} \left[-\log p_{\boldsymbol{\mathcal{X}}|\tilde{\boldsymbol{y}}_s}\left(\boldsymbol{\mathcal{X}}|\tilde{\boldsymbol{y}}_s \right)\right] \\
        & \quad \Longleftrightarrow \min \lambda_d \cdot \mathbb{E}_{\boldsymbol{x} \sim p\left(\boldsymbol{x}\right)} \mathbb{E}_{\tilde{\boldsymbol{y}}\sim q} \mathbb{E}_{\tilde{\boldsymbol{x}}_i \in \tilde{\boldsymbol{\mathcal{X}}} |\tilde{\boldsymbol{y}}_s} \left[d\left(\boldsymbol{x}, \tilde{\boldsymbol{x}}_i\right)\right] \\
        & \quad\quad\quad\quad  + \lambda_p \cdot \mathbb{E}_{\tilde{\boldsymbol{y}}\sim q} \mathbb{E}_{\tilde{\boldsymbol{x}}_i \in \tilde{\boldsymbol{\mathcal{X}}} |\tilde{\boldsymbol{y}}_s} D_{\text{KL}}\left[p_{\boldsymbol{x}} || p_{\tilde{\boldsymbol{x}}_i}\right],
    \end{aligned}
    \end{equation}
    in which $\lambda_d$ and $\lambda_p$ are the tradeoff factors for the expected distortion (typically expected means-squared error, i.e., E-MSE) term and the expected KL divergence (E-KLD) term, respectively.
\end{lemma}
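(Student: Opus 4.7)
The plan is to start from the set-based probability definition provided by the semantic information theory recap, namely $p(\mathcal{U}_{i_s}) = \sum_{u_i \in \mathcal{U}_{i_s}} p(u_i)$, and apply it to the reconstructed synset $\tilde{\boldsymbol{\mathcal{X}}}$ on the decoder side. Concretely, I would first marginalise the synonymous likelihood through the reconstructed samples, writing
\begin{equation*}
p_{\boldsymbol{\mathcal{X}}|\tilde{\boldsymbol{y}}_s}(\boldsymbol{\mathcal{X}}|\tilde{\boldsymbol{y}}_s) = \mathbb{E}_{\tilde{\boldsymbol{x}}_i \in \tilde{\boldsymbol{\mathcal{X}}}|\tilde{\boldsymbol{y}}_s}\bigl[p_{\boldsymbol{\mathcal{X}}|\tilde{\boldsymbol{x}}_i}(\boldsymbol{\mathcal{X}}|\tilde{\boldsymbol{x}}_i)\bigr],
\end{equation*}
which is the continuous analogue of the set-probability sum and is justified by viewing $\tilde{\boldsymbol{y}}_s$ as an index of an equivalent ``noisy'' latent synset (exactly the equivalence depicted in the right panel of Figure~\ref{figure2}). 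Because $-\log$ is convex, Jensen's inequality immediately gives a tractable upper bound $-\log p_{\boldsymbol{\mathcal{X}}|\tilde{\boldsymbol{y}}_s} \le \mathbb{E}_{\tilde{\boldsymbol{x}}_i|\tilde{\boldsymbol{y}}_s}[-\log p_{\boldsymbol{\mathcal{X}}|\tilde{\boldsymbol{x}}_i}]$, and minimising this bound is what allows the inner distortion/perception expectations over $\tilde{\boldsymbol{x}}_i$ to appear in the target form.

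Next, I would split the per-sample term $-\log p_{\boldsymbol{\mathcal{X}}|\tilde{\boldsymbol{x}}_i}$ into a pointwise part and a set-level part by inserting the single-sample likelihood $p_{\boldsymbol{x}|\tilde{\boldsymbol{x}}_i}$ as a pivot, i.e.
\begin{equation*}
-\log p_{\boldsymbol{\mathcal{X}}|\tilde{\boldsymbol{x}}_i} = -\log p_{\boldsymbol{x}|\tilde{\boldsymbol{x}}_i} + \log \frac{p_{\boldsymbol{x}|\tilde{\boldsymbol{x}}_i}}{p_{\boldsymbol{\mathcal{X}}|\tilde{\boldsymbol{x}}_i}}.
\end{equation*}
Taking $\mathbb{E}_{\boldsymbol{x}\sim p(\boldsymbol{x})}$, the first summand is converted to a weighted MSE under the standard Gaussian assumption $p_{\boldsymbol{x}|\tilde{\boldsymbol{x}}_i}(\boldsymbol{x}|\tilde{\boldsymbol{x}}_i) \propto \exp(-\lambda_d\,\|\boldsymbol{x}-\tilde{\boldsymbol{x}}_i\|^2)$ — the same device used to derive the distortion term in standard VAE-based compression and reproduced in the excerpt's review of Ballé et al. This directly yields the $\lambda_d \cdot \mathbb{E}[d(\boldsymbol{x},\tilde{\boldsymbol{x}}_i)]$ contribution after absorbing additive constants.

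The second summand is where the KL term must emerge. I would rewrite the log-ratio using Bayes' rule, $p_{\boldsymbol{x}|\tilde{\boldsymbol{x}}_i}/p_{\boldsymbol{\mathcal{X}}|\tilde{\boldsymbol{x}}_i} = (p_{\tilde{\boldsymbol{x}}_i|\boldsymbol{x}} p_{\boldsymbol{x}})/(p_{\tilde{\boldsymbol{x}}_i|\boldsymbol{\mathcal{X}}} p_{\boldsymbol{\mathcal{X}}})$, and then appeal to the set-equivalence interpretation that, over an ideal synset, $p_{\tilde{\boldsymbol{x}}_i|\boldsymbol{\mathcal{X}}}$ coincides with the marginal $p_{\tilde{\boldsymbol{x}}_i}$ of the reconstructed sample and $p_{\tilde{\boldsymbol{x}}_i|\boldsymbol{x}}$ coincides (in expectation over $\boldsymbol{x}\sim p(\boldsymbol{x})$) with the source marginal contribution $p_{\boldsymbol{x}}$. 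Taking expectation over $\boldsymbol{x}$ and cancelling the synset-probability constants, the residual expression collapses to $\mathbb{E}_{\boldsymbol{x}}[\log(p_{\boldsymbol{x}}/p_{\tilde{\boldsymbol{x}}_i})] = D_{\text{KL}}[p_{\boldsymbol{x}}\,\|\,p_{\tilde{\boldsymbol{x}}_i}]$, leaving a free scalar $\lambda_p$ that absorbs the proportionality constants and recovers the advertised form.

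The main obstacle I anticipate is this last identification: justifying rigorously that the ``set-conditioned'' marginals reduce to the usual image marginals $p_{\boldsymbol{x}}$ and $p_{\tilde{\boldsymbol{x}}_i}$ so that $D_{\text{KL}}[p_{\boldsymbol{x}}\,\|\,p_{\tilde{\boldsymbol{x}}_i}]$ appears cleanly. This requires careful handling of the expectations when $\boldsymbol{x}$ is both the synset-generating sample and the integration variable, and it is where the synonymous viewpoint — treating $\boldsymbol{\mathcal{X}}$ as an equivalence class with its own aggregate probability — does the essential work. The distortion side is routine once the Gaussian assumption is in place; the KL side is the conceptually load-bearing step, and I would expect the formal proof in \cref{section_A_1} to devote most of its space to making this identification precise.
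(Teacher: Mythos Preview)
Your overall scaffolding (marginalise over the reconstructed synset, apply Jensen, read off the distortion term via a Gaussian likelihood) matches the paper's route, but your plan for extracting the KL term will not work as written. After Bayes, your ratio becomes
\[
\log\frac{p_{\boldsymbol{x}|\tilde{\boldsymbol{x}}_i}}{p_{\boldsymbol{\mathcal{X}}|\tilde{\boldsymbol{x}}_i}}
=\log\frac{p_{\tilde{\boldsymbol{x}}_i|\boldsymbol{x}}\,p_{\boldsymbol{x}}}{p_{\tilde{\boldsymbol{x}}_i|\boldsymbol{\mathcal{X}}}\,p_{\boldsymbol{\mathcal{X}}}}.
\]
The correct synonymous identity here is not $p_{\tilde{\boldsymbol{x}}_i|\boldsymbol{\mathcal{X}}}=p_{\tilde{\boldsymbol{x}}_i}$ but rather $p_{\tilde{\boldsymbol{x}}_i|\boldsymbol{\mathcal{X}}}=p_{\tilde{\boldsymbol{x}}_i|\boldsymbol{x}}$: every element of the ideal synset, including $\boldsymbol{x}$ itself, induces the same synonymous representation and hence the same conditional on $\tilde{\boldsymbol{x}}_i$ (this is exactly the paper's equation \eqref{replaceEquations}). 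With that identity the two conditionals cancel and the log-ratio collapses to the constant $\log(p_{\boldsymbol{x}}/p_{\boldsymbol{\mathcal{X}}})$, independent of $\tilde{\boldsymbol{x}}_i$. Your pivot split therefore yields only distortion plus a constant; no $D_{\text{KL}}[p_{\boldsymbol{x}}\,\|\,p_{\tilde{\boldsymbol{x}}_i}]$ ever appears.

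The paper produces the KL term by a different device. It first expands $p_{\boldsymbol{\mathcal{X}}|\tilde{\boldsymbol{x}}_j}$ as an integral over the \emph{ideal} synset members $\boldsymbol{x}_i\in\boldsymbol{\mathcal{X}}$ (not just over the reconstructed synset), applies Bayes to each integrand, and then --- crucially --- multiplies and divides by the extra factor $p_{\boldsymbol{x}}/p_{\tilde{\boldsymbol{x}}_j}$ \emph{before} invoking Jensen's inequality twice (once over $\tilde{\boldsymbol{x}}_j$, once over the arithmetic mean on $\boldsymbol{\mathcal{X}}$). This artificial reciprocal insertion is what isolates a standalone $\log(p_{\boldsymbol{x}}/p_{\tilde{\boldsymbol{x}}_j})$ summand after Jensen, which under $\mathbb{E}_{\boldsymbol{x}}$ becomes $D_{\text{KL}}[p_{\boldsymbol{x}}\,\|\,p_{\tilde{\boldsymbol{x}}_j}]$. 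The paper then notes that this KL piece and its companion term (an arithmetic-mean log-ratio over $\boldsymbol{\mathcal{X}}$) actually sum to a constant $f(\boldsymbol{x},\boldsymbol{\mathcal{X}})$; the argument for why one should nonetheless retain and minimise the KL piece is heuristic, tied to whether $|\boldsymbol{\mathcal{X}}|>1$, and the resulting $\lambda_p$ absorbs both a Jensen scaling factor and an indicator $1_{|\boldsymbol{\mathcal{X}}|\neq 1}$. You correctly flagged this step as the load-bearing one, but the mechanism you propose (marginal identifications) is not what makes it work --- the integral over $\boldsymbol{\mathcal{X}}$ together with the reciprocal-insertion trick before Jensen is the missing ingredient.
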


Based on this, we propose the following theorem:

\begin{theorem}\label{SICtheorem}
    For an image source  $\boldsymbol{x} \sim p\left(\boldsymbol{x}\right)$ together with its bounded expected distortion $\mathbb{E}_{\boldsymbol{x} \sim p\left(\boldsymbol{x}\right)}  \mathbb{E}_{\hat{\boldsymbol{x}}_i \in \hat{\boldsymbol{\mathcal{X}}} |\boldsymbol{\hat{y}}_s} \left[d\left(\boldsymbol{x}, \hat{\boldsymbol{x}}_i\right)\right]$ and expected KL divergence $\mathbb{E}_{\hat{\boldsymbol{x}}_i\in \hat{\boldsymbol{\mathcal{X}}} |\boldsymbol{\hat{y}}_s} D_{\text{KL}} \left[p_{\boldsymbol{x}} || p_{\hat{\boldsymbol{x}}_i}\right]$, the minimum achievable rate of perceptual image compression is
    \begin{equation}\label{SVIgoal}
    \begin{aligned}
        R\left(\boldsymbol{\mathcal{X}}\right) = & \min_{p(\hat{\boldsymbol{\mathcal{X}}}|\boldsymbol{x})} I\left(\boldsymbol{X}; \hat{\mathring{\boldsymbol{X}}}\right) \\
        & \mathrm{s}.\mathrm{t}. \quad \mathbb{E}_{\boldsymbol{x} \sim p\left(\boldsymbol{x}\right)}  \mathbb{E}_{\hat{\boldsymbol{x}}_i \in \hat{\boldsymbol{\mathcal{X}}} |\boldsymbol{\hat{y}}_s} \left[d\left(\boldsymbol{x}, \hat{\boldsymbol{x}}_i\right)\right] \leq D, \\
        & \quad\quad\,\, \mathbb{E}_{\hat{\boldsymbol{x}}_i \in \hat{\boldsymbol{\mathcal{X}}} |\boldsymbol{\hat{y}}_s} D_{\text{KL}} \left[p_{\boldsymbol{x}} || p_{\hat{\boldsymbol{x}}_i}\right] \leq P,
    \end{aligned}
    \end{equation}
    where $ I\left(\boldsymbol{X}; \hat{\mathring{\boldsymbol{X}}}\right) = H_s\left(\hat{\mathring{\boldsymbol{X}}}\right) - H_s\left(\hat{\mathring{\boldsymbol{X}}}|\boldsymbol{X}\right)$ with semantic variable $\hat{\mathring{\boldsymbol{X}}}$ corresponds to the reconstructed synset $\hat{\boldsymbol{\mathcal{X}}}$.
\end{theorem}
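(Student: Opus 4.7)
The plan is to start from the SVI objective \eqref{SVI} and expand the partial semantic KL divergence so that it decomposes into a rate-like term, a distortion term, and a perception term, mirroring the classical VAE-style derivation but carried out at the synonymous level.

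First, I would apply Bayes' rule to the true posterior, writing $p_{\tilde{\boldsymbol{y}}_s|\boldsymbol{\mathcal{X}}} = p_{\boldsymbol{\mathcal{X}}|\tilde{\boldsymbol{y}}_s}\,p_{\tilde{\boldsymbol{y}}_s}/p(\boldsymbol{\mathcal{X}})$, and substitute this into the definition of $D_{\text{KL},s}[q \| p_{\tilde{\boldsymbol{y}}_s|\boldsymbol{\mathcal{X}}}]$. Taking expectation over $\boldsymbol{x}\sim p(\boldsymbol{x})$ splits the objective into four pieces: (i) $\mathbb{E}[\log q(\tilde{\boldsymbol{y}}|\boldsymbol{x})]$, which vanishes under the unit-width uniform-noise assumption inherited from classical LIC; (ii) a negative log synonymous likelihood $-\mathbb{E}[\log p_{\boldsymbol{\mathcal{X}}|\tilde{\boldsymbol{y}}_s}]$; (iii) a negative log prior $-\mathbb{E}[\log p_{\tilde{\boldsymbol{y}}_s}]$ on the synonymous latent; and (iv) the constant $\log p(\boldsymbol{\mathcal{X}})$, which plays no role in the minimization.

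Second, I would invoke \cref{ENLSL} to convert term (ii) into the weighted sum $\lambda_d\,\mathbb{E}[d(\boldsymbol{x},\tilde{\boldsymbol{x}}_i)] + \lambda_p\,\mathbb{E}\,D_{\text{KL}}[p_{\boldsymbol{x}}\|p_{\tilde{\boldsymbol{x}}_i}]$, which I would then recognize as the expected distortion and expected KL perception penalty appearing in the two constraints of \eqref{SVIgoal}. For term (iii), because only $\tilde{\boldsymbol{y}}_s$ is transmitted while $\tilde{\boldsymbol{y}}_\epsilon$ is resampled at the decoder, the relevant code length is the cross-entropy of $q$ against $p_{\tilde{\boldsymbol{y}}_s}$. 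I would then push this cross-entropy through the synonymous aggregation rule $p(\mathcal{U}_{i_s}) = \sum_{i\in\mathcal{N}_{i_s}} p(u_i)$ to identify it with the semantic mutual information $I(\boldsymbol{X};\hat{\mathring{\boldsymbol{X}}}) = H_s(\hat{\mathring{\boldsymbol{X}}}) - H_s(\hat{\mathring{\boldsymbol{X}}}|\boldsymbol{X})$ of the semantic variable $\hat{\mathring{\boldsymbol{X}}}$ induced by the reconstructed synset $\hat{\boldsymbol{\mathcal{X}}}$.

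Finally, the resulting unconstrained Lagrangian $I(\boldsymbol{X};\hat{\mathring{\boldsymbol{X}}}) + \lambda_d\,\mathbb{E}[d] + \lambda_p\,\mathbb{E}\,D_{\text{KL}}$ is converted into the constrained form \eqref{SVIgoal} by the usual Lagrangian--constraint correspondence: for every feasible pair $(D,P)$ there exist multipliers $(\lambda_d,\lambda_p)$ whose associated Lagrangian minimizer is also the constrained minimizer, yielding the stated $R(\boldsymbol{\mathcal{X}})$. The main obstacle I foresee is term (iii): rigorously justifying that the negative log prior of the syntactic latent under SVI collapses to the \emph{semantic} mutual information $I(\boldsymbol{X};\hat{\mathring{\boldsymbol{X}}})$ rather than the classical Shannon mutual information $I(\boldsymbol{X};\hat{\boldsymbol{Y}})$. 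This requires carefully exploiting the set-element relationship between $\tilde{\boldsymbol{y}}_s$ and $\boldsymbol{\mathcal{X}}$ so that summations over syntactic samples within each synset aggregate consistently with the semantic entropy $H_s$ defined earlier.
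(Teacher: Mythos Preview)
Your proposal is correct and follows essentially the same route as the paper: decompose the partial semantic KL divergence via Bayes' rule into the four terms you list, kill term (i) with the uniform-noise assumption, apply \cref{ENLSL} to term (ii), identify term (iii) with $I(\boldsymbol{X};\hat{\mathring{\boldsymbol{X}}})$, and then read off the constrained form via Lagrange multipliers. The paper handles the step you flag as the main obstacle exactly as you anticipate: it rewrites $-\log p_{\tilde{\boldsymbol{y}}_s}(\tilde{\boldsymbol{y}}_s)$ as $-\log\int_{\tilde{\boldsymbol{y}}_\epsilon} p_{\tilde{\boldsymbol{y}}}(\tilde{\boldsymbol{y}})\,d\tilde{\boldsymbol{y}}_\epsilon$ to recognize it as the semantic entropy $H_s(\tilde{\mathring{\boldsymbol{Y}}})$, then uses the deterministic encoder/decoder to pass to $H_s(\tilde{\mathring{\boldsymbol{X}}})$ and to argue that $H_s(\tilde{\mathring{\boldsymbol{X}}}\mid\boldsymbol{X})=0$, yielding the claimed semantic mutual information.
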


\begin{figure*}[ht]
\vskip 0.2in
\begin{center}
\centerline{\includegraphics[width=0.96\textwidth]{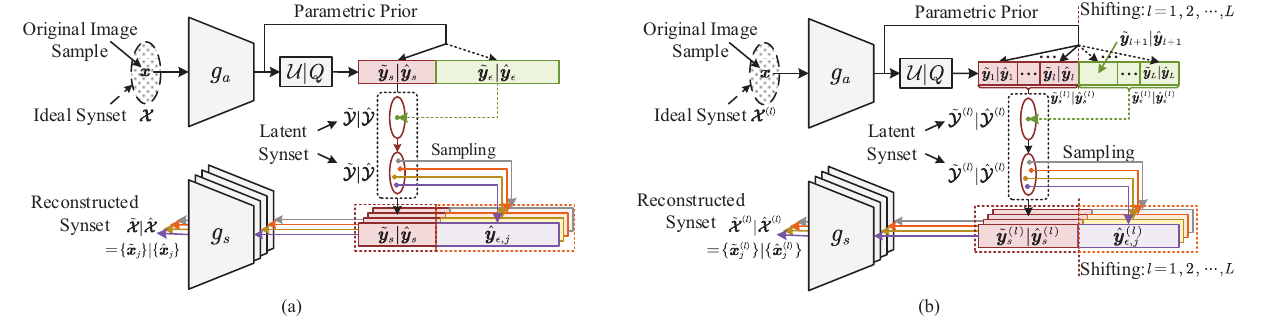}}
\caption{Processing frameworks of SIC. (a): The general framework. (b): The progressive framework.}
\label{figure3}
\end{center}
\vskip -0.2in
\end{figure*}

\begin{proof}
    As stated in \cref{figure2}, the model of synonymous image compression can be considered as a generalized variational auto-encoder. By using the proposed SVI, i.e., minimizing the partial semantic KL divergence given in \eqref{SVI},
    \begin{equation}
    \begin{aligned}
        & \mathbb{E}_{\boldsymbol{x}\sim p\left(\boldsymbol{x}\right)}  D_{\text{KL},s}\left[q||p_{\tilde{\boldsymbol{y}}_s | \boldsymbol{\mathcal{X}}}\right] = \mathbb{E}_{\boldsymbol{x}\sim p\left(\boldsymbol{x}\right)}\mathbb{E}_{\tilde{\boldsymbol{y}}\sim q} \\
        & \,\,\, \bigg[ \cancelto{0}{\log q\left(\tilde{\boldsymbol{y}}|\boldsymbol{x}\right)}
        -\log p_{\boldsymbol{\mathcal{X}|\tilde{\boldsymbol{y}}_s}}\left(\boldsymbol{\mathcal{X}}|\tilde{\boldsymbol{y}}_s\right)
        - \log p_{\tilde{\boldsymbol{y}}_s}\left(\tilde{\boldsymbol{y}}_s\right) \bigg] \\
        & \,\,\, + \text{const}.
    \end{aligned}
    \end{equation}
    The first term equals 0 under the assumption of a uniform density on the unit interval centered on $\boldsymbol{y}$, and the last term is a constant for a determined $\boldsymbol{x}$ and corresponding ideal synset $\boldsymbol{\mathcal{X}}$. For the third term, with a determined inference and generative model, the coding rate of the synonymous representation $\mathbb{E}_{\boldsymbol{x}\sim p\left(\boldsymbol{x}\right)}\mathbb{E}_{\tilde{\boldsymbol{y}}\sim q} \left[- \log p_{\tilde{\boldsymbol{y}}_s}\left(\tilde{\boldsymbol{y}}_s\right)\right]$ is equal to $I\left(\boldsymbol{X}; \hat{\mathring{\boldsymbol{X}}}\right)$, as stated in \cref{section_A_2}.

    By \cref{ENLSL}, the minimization of the second term is equivalent to minimizing a weighted expected distortion $\mathbb{E}_{\boldsymbol{x} \sim p\left(\boldsymbol{x}\right)} \mathbb{E}_{\tilde{\boldsymbol{y}}\sim q} \mathbb{E}_{\tilde{\boldsymbol{x}}_i\in \tilde{\boldsymbol{\mathcal{X}}} |\tilde{\boldsymbol{y}}_s} \left[d\left(\boldsymbol{x}, \tilde{\boldsymbol{x}}_i\right)\right]$ plus a weighted E-KLD term $\mathbb{E}_{\tilde{\boldsymbol{y}}\sim q} \mathbb{E}_{\tilde{\boldsymbol{x}}_i\in \tilde{\boldsymbol{\mathcal{X}}} |\tilde{\boldsymbol{y}}_s} D_{\text{KL}} \left[p_{\boldsymbol{x}} || p_{\tilde{\boldsymbol{x}}_i}\right]$. These weights can be considered as Lagrange multipliers to the rate term, which makes the optimization goal equivalent to minimizing $I\left(\boldsymbol{X}; \hat{\mathring{\boldsymbol{X}}}\right)$ with quantized bounded expected distortion and E-KLD constraints to obtain the optimal $p(\hat{\boldsymbol{\mathcal{X}}}|\boldsymbol{x})$, shown as \eqref{SVIgoal}. This target corresponds to a \emph{\textbf{Synonymous Rate-Distortion-Perception Tradeoff}}, which can be shown as
    \begin{equation}\label{SICcost}
    \begin{aligned}
        \mathcal{L}_{{\boldsymbol{\mathcal{X}}}} &= \lambda_r \cdot \mathbb{E}_{\boldsymbol{x}\sim p\left(\boldsymbol{x}\right)} \left[- \log p_{\hat{\boldsymbol{y}}_s}\left(\hat{\boldsymbol{y}}_s\right)\right] \\
        & \quad + \lambda_d \cdot \mathbb{E}_{\boldsymbol{x}\sim p\left(\boldsymbol{x}\right)}  \mathbb{E}_{\hat{\boldsymbol{x}}_i \in \hat{\boldsymbol{\mathcal{X}}} |\boldsymbol{\hat{y}}_s} \left[d\left(\boldsymbol{x}, \hat{\boldsymbol{x}}_i\right)\right] \\
        & \quad + \lambda_p \cdot \mathbb{E}_{\hat{\boldsymbol{x}}_i \in \hat{\boldsymbol{\mathcal{X}}} |\boldsymbol{\hat{y}}_s} D_{\text{KL}} \left[p_{\boldsymbol{x}} || p_{\hat{\boldsymbol{x}}_i}\right],
    \end{aligned}
    \end{equation}
    thus we finish the proof of the theorem.
\end{proof}

Due to space limitations, only a brief outline of the proof is provided above. Please refer to \cref{section_A_2} for detailed proof. Additionally, we state that the existing rate-distortion-perception tradeoff \eqref{RDPcost} is a special case of \eqref{SICcost} when there is only one sample $\hat{\boldsymbol{x}}$ in the reconstructed synset $\hat{\boldsymbol{\mathcal{X}}}$. More comprehensive discussions are provided in \cref{section_A_3}.

\section{Synonymous Image Compression}
\label{section_IV}

According to \cref{figure2}, the main difference between our proposed processing scheme and the general LIC method is that the generator only needs part of the accurate latent features, while the other part can be obtained through random sampling instead of accurate coding. Thus, the generator can obtain all the information required to reconstruct the image. To this end, we name this new image compression scheme \emph{Synonymous Image Compression} (SIC). 

A general framework of SIC is given in \cref{figure3}(a). The SIC codec requires an analysis transform $g_a\left(\boldsymbol{x};\boldsymbol{\phi}_g\right)$ and a synthesis transform $g_s\left(\hat{\boldsymbol{y}}_s, \hat{\boldsymbol{y}}_{\epsilon, j};\boldsymbol{\theta}_g\right)$ to achieve the bidirectional nonlinear mapping between the data space and the latent space. Only the synonymous representation $\hat{\boldsymbol{y}}_s$ is required to be coded, and this representation is equivalent to a quantized latent synset $\hat{\boldsymbol{\mathcal{Y}}}$ contains all the samples with different detailed representations. With obtaining $\hat{\boldsymbol{y}}_s$, the generative model can generate multiple images $\hat{\boldsymbol{\mathcal{X}}} = \left\{\hat{\boldsymbol{x}}_j\right\}$ by sampling diverse detailed representations $\left\{\hat{\boldsymbol{y}}_{\epsilon, j}\right\}$ in $\hat{\boldsymbol{\mathcal{Y}}}$.

A uniform noise with unit interval should be used for $\boldsymbol{y}_s$ to achieve efficient continuous optimization, and an entropy model assisted by an arbitrary form of parametric prior should be employed to estimate the coding rate for $\hat{\boldsymbol{y}}_s$, whereas both these two are not essential for $\boldsymbol{y}_{\epsilon}$ and $\hat{\boldsymbol{y}}_{\epsilon}$. Although this aligns with the analytical process of synonymous variational inference, it results in the SIC model's capabilities being underutilized, as the details still have the potential to provide additional information.

The progressive framework of SIC is proposed to solve this problem, as shown in \cref{figure3}(b). It partition the latent feature $\hat{\boldsymbol{y}}$ into $L$ synonymous levels, treating the first $l$ levels as the synonymous representation $\hat{\boldsymbol{y}}_s^{(l)}$, while the subsequent levels are considered as the detailed representation $\hat{\boldsymbol{y}}_{\epsilon}^{(l)}$. By varying $l$ from $1$ to $L$ and optimizing through the loss function of the corresponding level in the training process, the SIC model can be optimized for approaching different ideal synsets $\boldsymbol{\mathcal{X}}^{\left(l\right)}$. After training, synonymous representations at each level can be encoded progressively and fed to the decoder, making SIC a progressive image codec that can produce images at diverse synonymous levels (corresponding to varying coding rates) using a single generator. In our upcoming experiments, we implement a progressive SIC model to fully leverage the model's capabilities, where the $L$ levels are equally slicing the channels $C$ into $L$ groups, shown as \cref{figureA1} in \cref{section_C_I}.

\begin{figure*}[ht]
\vskip 0.2in
\begin{center}
\centerline{\includegraphics[width=0.98\textwidth]{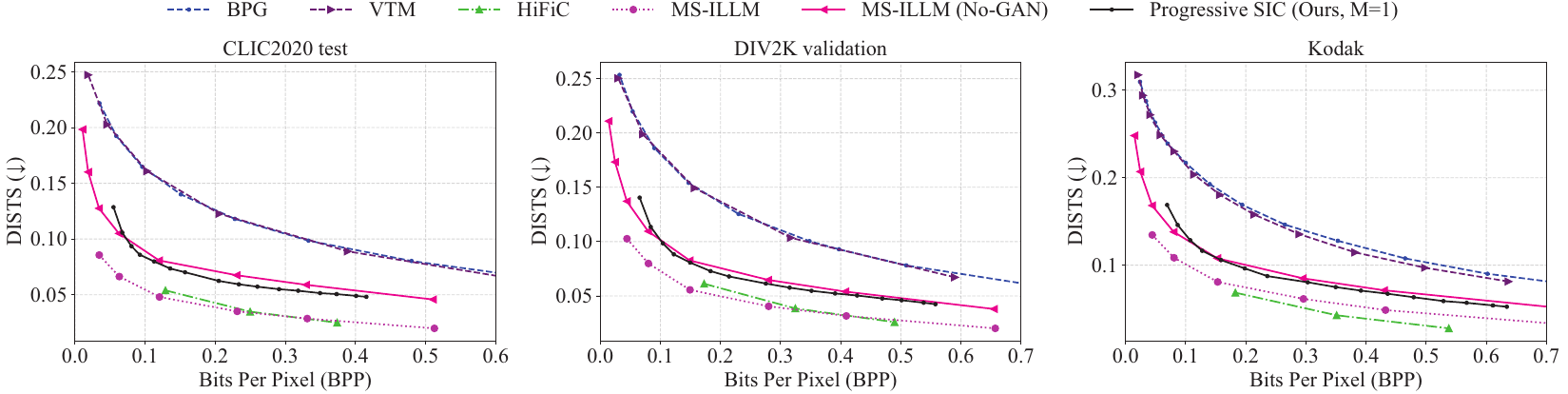}}
\caption{Comparisons of methods using DISTS on different datasets. Each point on the HiFiC and MS-ILLM performance curves is from a single model, while our entire performance curves are achieved by a single progressive SIC model.}
\label{figure4}
\end{center}
\vskip -0.2in
\end{figure*}

The designing of the training loss for the progressive SIC model should take several practical issues into account. Firstly, while the loss function requires traversing samples in the reconstructed synset $\hat{\boldsymbol{\mathcal{X}}}$, this is not practical during the actual training process. Therefore, the real training loss function utilizes a small number of reconstructed samples and computes the arithmetic mean as a practical approximation. Secondly, the training should also take into account the synonymous rate-distortion-perception trade-offs at each level, meaning that the trade-offs between levels are also required to be balanced. To this end, we design a group of loss functions for the progressive SIC model that alternatively trains for the level $l=1,2,\cdots, L$ step by step, i.e.,
\begin{equation}\label{AlternativeTrainingLoss1}
    \mathcal{L}^{\left(l\right)} = \alpha\mathcal{L}_{\boldsymbol{\mathcal{X}}}^{\left(l\right)} +  \left(1 - \alpha\right) \mathcal{L}_{\boldsymbol{\mathcal{X}}}^{\left(L\right)} +\mathcal{L}_c^{\left(l\right)}, l=1,2,\cdots,L,
\end{equation}
in which $\mathcal{L}_{\boldsymbol{\mathcal{X}}}^{\left(l\right)}$ is represented by
\begin{equation}\label{AlternativeTrainingLoss2}
\begin{aligned}
    & \mathcal{L}_{\boldsymbol{\mathcal{X}}}^{\left(l\right)} 
    = \mathbb{E}_{\boldsymbol{x}\sim p\left(\boldsymbol{x}\right)} \left[- \lambda_r^{\left(l\right)} \cdot \log p_{\hat{\boldsymbol{y}}_s^{\left(l\right)}}\left(\hat{\boldsymbol{y}}_s^{\left(l\right)}\right) + 
    \right. \\
    & \frac{1}{M} \sum_{i=1}^{M} \Big( \lambda_d^{\left(l\right)} \cdot  \text{MSE}\big(\boldsymbol{x}, \hat{\boldsymbol{x}}_i^{\left(l\right)}\big)
    \Big.\Big.+ \lambda_p^{\left(l\right)} \cdot \text{LPIPS}\big(\boldsymbol{x}, \hat{\boldsymbol{x}}_i^{\left(l\right)}\big)\Big) \Big],
\end{aligned}
\end{equation}
where the MSE is utilized as the distortion item, and the LPIPS \cite{zhang2018unreasonable} is directly replaced by the KL divergence term since accurately calculating the KL divergence for image datasets is challenging; $\alpha$ is set between the $l$ level and the $L$ level (equivalent to the conventional rate-distortion-perception tradeoff) to indirectly achieve multi-level tradeoffs; $L_c^{\left(l\right)}$ is additional constraints in training, which detailed elaborated in \cref{section_C_I}. Moreover, $M$ denotes the number of reconstructed samples in training. Subsequent experimental results will show that with $M=1$ the proposed method can achieve comparable rate-distortion-perception performance, while bigger $M$ will perform advantages in certain synonymous levels.

\section{Experimental Illustration}
\label{Section_V}

In this section, we examine the effectiveness of our proposed analytical theory through experimental results. 

\subsection{Implementation Setup and Comparison Schemes}
\label{Section_V_1}

\textbf{Model architecture:} The analysis transform $g_a$ and the synthesis transform $g_s$ are implemented using the Swin Transformer \cite{liu2021swin}, while the coding rate is estimated using a joint autoregressive and hierarchical prior architecture \cite{minnen2018joint} based on deep convolutional neural networks, with structural adjustments made for our level partitioning mechanism. We set the number of latent representation channels to $C=512$ and the number of the equally partitioned synonymous levels to $L=16$, giving each synonymous level a channel dimension of 32. This allows a single progressive SIC codec to support 16 coding rates and their corresponding image quality levels.

\textbf{Model Training:} We randomly select 100,000 images from the OpenImages V6 dataset \cite{kuznetsova2020open} as the training data, resizing them to a uniform resolution of 256×256 using random crop and resized crop for the training process. We train our model for $1\times10^6$ iterations with a batch size of 16, a learning rate of $1\times10^{-4}$, and the AdamW optimizer with a weight decay of $5\times10^{-5}$. We evaluate our models with the test set of CLIC2020 \cite{toderici2020clic}, the validation set of DIV2K \cite{agustsson2017ntire}, and the Kodak dataset \footnote{Kodak PhotoCD dataset, URL \href{http://r0k.us/graphics/kodak/}{http://r0k.us/graphics/kodak/}.}. Refer to \cref{section_C_I} for the hyperparameter settings and implementation details.

\textbf{Comparison schemes:} The comparison schemes for traditional image compression use BPG \cite{bellard2015bpg} and the state-of-the-art VTM \footnote{VVCSoftware\_VTM, URL \href{https://vcgit.hhi.fraunhofer.de/jvet/VVCSoftware\_VTM.git}{https://vcgit.hhi.fraunhofer.de/\\jvet/VVCSoftware\_VTM.git}, Version 23.4.} to serve as the benchmarks for distortion measures. The PIC schemes for comparison include HiFiC \cite{mentzer2020high} and MS-ILLM \cite{muckley2023improving}, including the No-GAN fine-tuning version of MS-ILLM. These two schemes use adversarial loss to optimize perceptual quality, serving as benchmarks for perceptual measures. As the pre-training model of MS-ILLM, the No-GAN scheme directly uses LPIPS, the same perception measure as our choice, which is the focus of comparison.

\textbf{Evaluation Metrics:} PSNR is utilized for distortion measuring, while LPIPS and DISTS \cite{ding2020image} are for perception measuring. It should be noted that the DISTS measure, due to its resampling tolerance, aligns more closely with the human understanding of perceptual similarity-typified synonymous relationships than LPIPS, thus as our focus in our following analysis.

\begin{figure*}[ht]
\vskip 0.2in
\begin{center}
\centerline{\includegraphics[width=0.98\textwidth]{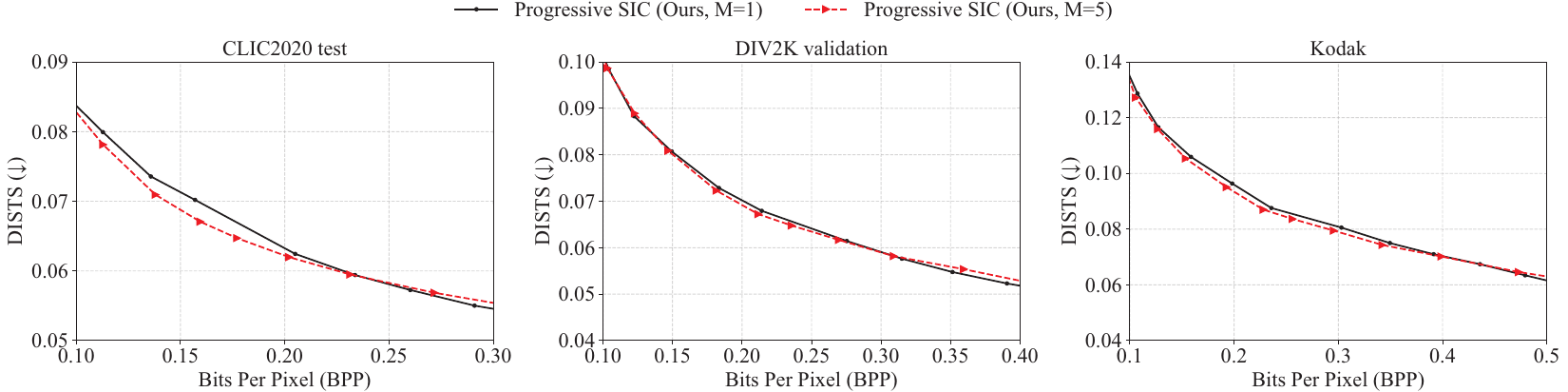}}
\caption{Comparisons of our progressive SIC schemes with different sampling numbers in reconstructed $\hat{\boldsymbol{\mathcal{X}}}$ on different datasets.}
\label{figure5}
\end{center}
\vskip -0.2in
\end{figure*}

\subsection{Performance Analysis}
\label{Section_V_2}

We first examine our progressive SIC model's capabilities in the adaptation of full-rate perception qualities. \cref{figure4} compares the performance of our method and the comparison schemes under DISTS, in which our model is trained with sampling $\hat{\boldsymbol{y}}_{\epsilon, j}$ only once (i.e., $M=1$), as this follows the common usage in existing generative model training. As shown in this figure, our scheme can achieve perceptual quality adaptability across various rates using a single model, with the perceptual quality of the reconstructed image improving as the coding rate increases. 

For the concerned DISTS measure, our method surpasses the No-GAN MS-ILLM solution (also trained with LPIPS) in a large coding rate range. This performance is demonstrated under conditions where the PSNR quality continuously approaches and even exceeds the comparison No-GAN schemes, and the LPIPS quality remains very similar, thus verifying a comparable rate-distortion-perception performance, shown as \cref{figureA5} in \cref{Section_C_II}. This surpassing reflects the advantage of incorporating the concept of synset in our proposed SVI. However, this advantage is modest compared to HiFiC and MS-ILLM guided by the GAN's adversarial loss, since our loss directly uses LPIPS to replace the KL divergence. Therefore, utilizing discriminative mechanisms in SIC models with the synonymous viewpoint (i.e., replacing the E-KLD term) may further enhance perceptual quality in future considerations.

In addition, \cref{figure5} compares the DISTS qualities for reconstruction sampling numbers $M=1$ and $M=5$ under identical hyperparameter settings. With $M=5$, the expected distortion and perception loss are estimated via arithmetic mean, aligning more closely with theoretical optimization directions. Results indicate that expanding the reconstructed synset $\hat{\boldsymbol{\mathcal{X}}}$ by increasing $\hat{\boldsymbol{y}}_{\epsilon, j}$ samples during training offers slight performance advantages across various datasets in general. These advantages are especially evident in the low and medium rate range, while the intersections at relatively high rates are due to insufficient fine-tuning of the hyperparameter settings, which needs more precise exploration in future works.

For further additional analysis and visualization results, please refer to \cref{Section_C_II}.

\section{Limitations}
\label{Section_VI}

According to the above experimental results and analysis, the limitation of the implemented progressive SIC model mainly lies in \textbf{using the LPIPS measure to replace the divergence term in the loss function}. Based on related PIC work, the \textbf{GAN-based adversarial loss} is suggested as a replacement for the KL divergence term to further improve model performance.

To verify the impact of adversarial loss on performance, we build a discriminator based on HiFiC’s structure and fine-tune the synthesis transform $g_s$ of our $M=1$ and $M=5$ models with non-saturating loss \cite{goodfellow2014generative} for $2 \times 10^5$ steps as supplementary experiments. Specifically, we use a single conditional discriminator (refer to \cref{section_D_1}) to obtain the discriminative loss for reconstructed images at all synonymous levels, with the synonymous representation $\hat{\boldsymbol{y}}_s^{(l)}$ of each synonymous level $l$ introduced as a separate condition. The non-saturating loss can be expressed as
\begin{equation}\label{adversarial_generator}
    \mathcal{L}_G^{(l)} = \mathbb{E}_{\boldsymbol{x},\hat{\boldsymbol{x}}_i^{(l)}}\left[-\log\left(D\left(\hat{\boldsymbol{x}}_i^{(l)}, \hat{\boldsymbol{y}}_s^{(l)}\right)\right)\right],
\end{equation}
\begin{equation}\label{adversarial_discriminator}
\begin{aligned}
    \mathcal{L}_D^{(l)} = & \mathbb{E}_{\boldsymbol{x},\hat{\boldsymbol{x}}_i^{(l)}}\left[-\log\left(1 - D\left(\hat{\boldsymbol{x}}_i^{(l)}, \hat{\boldsymbol{y}}_s^{(l)}\right)\right)\right] \\
    & + \mathbb{E}_{\boldsymbol{x} \sim p\left(\boldsymbol{x}\right)}\left[-\log\left(D\left(\boldsymbol{x}, \hat{\boldsymbol{y}}_s^{(l)}\right)\right)\right],
\end{aligned}
\end{equation}
and thus the perception term in \cref{AlternativeTrainingLoss2} is required to be adjusted from solely the LPIPS term to the sum of the LPIPS term and the generative loss in \cref{adversarial_generator}, i.e.,
\begin{equation}
    \text{LPIPS}\big(\boldsymbol{x}, \hat{\boldsymbol{x}}_i^{\left(l\right)}\big) \xRightarrow{} \text{LPIPS}\big(\boldsymbol{x}, \hat{\boldsymbol{x}}_i^{\left(l\right)}\big) + \mathcal{L}_G^{(l)}.
\end{equation}

\begin{figure*}[ht]
\vskip 0.2in
\begin{center}
\centerline{\includegraphics[width=0.98\textwidth]{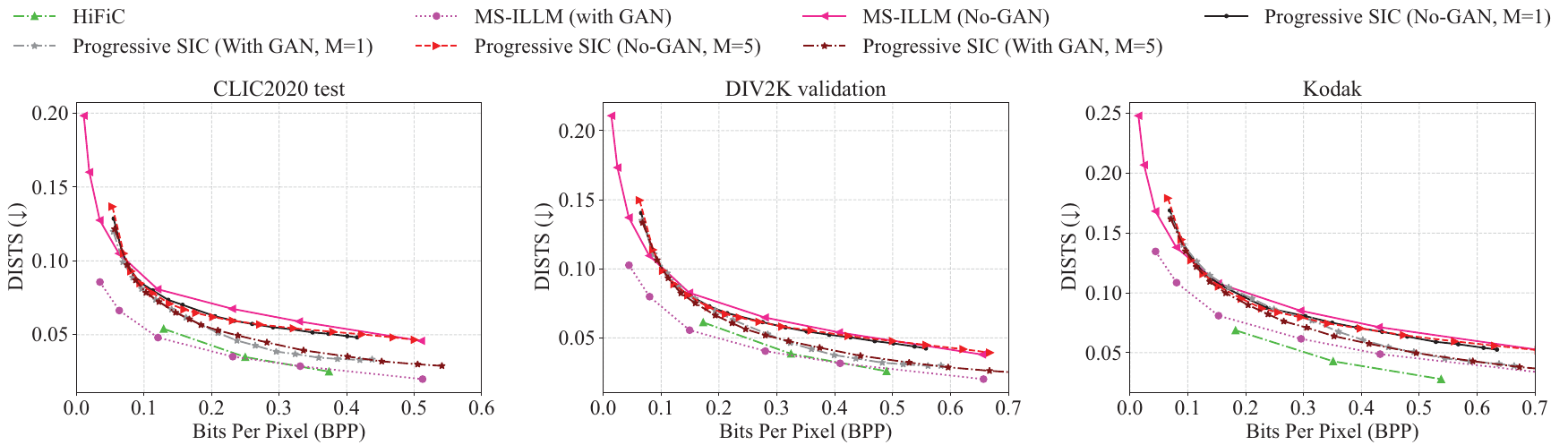}}
\caption{Comparisons of methods using DISTS on different datasets (supplemented fine-tuned model performance). Each point on the HiFiC and MS-ILLM performance curves is from a single model, while our entire performance curves are achieved by a single progressive SIC model.}
\label{figure6}
\end{center}
\vskip -0.2in
\end{figure*}

\cref{figure6} shows the fine-tuned performance using DISTS as the evaluation measure, while other measures, including PSNR, LPIPS, and FID \cite{Heusel2017GANs}, and visualization results are given in \cref{section_D_2}. These results show that:
\begin{itemize}
    \item \textbf{The perceptual quality (DISTS, FID) has improved}. While LPIPS performance remains nearly unchanged, DISTS and FID show greater gains at higher bitrates, with DISTS gradually approaching that of MS-ILLM (with GAN). This means that the fine-tuning loss with the adversarial loss is closer to the ideal optimization direction, which can make better perceptual qualities of the reconstructed images. Besides, while DISTS and FID values change little at low rates (bpp $<$ 0.10), the visual qualities improve significantly.
    
    \item \textbf{The improvement of DISTS is relatively obvious, while the enhancement in FID remains limited}. This is a noteworthy phenomenon since \textbf{it is absent in the non-sampling schemes}, as confirmed by the experimental results in the MS-ILLM paper \cite{muckley2023improving}. Unlike DISTS' resampling tolerance, FID focuses on the consistency of the distribution between the original and reconstructed image groups. This suggests that our implementation is still insufficient in optimizing the distribution of reconstructed images, especially in our detailed sampling mechanism.
    
    \item \textbf{The gap compared to HiFiC and MS-ILLM is still obvious, especially on FID}. In addition to the reason of the detailed sampling mechanism, this issue may also be due to insufficient fine-tuning with multiple synonymous layers battling against each other, or inefficient perceptual loss is chosen. Besides, to address the limited improvement at low bit rates (bpp $<$ 0.10), another solution is to find a better mechanism than equal channel slicing for synonymous level partitioning.
    
    \item \textbf{The distortion (PSNR) has degraded}, which can verify the distortion-perception tradeoff as mentioned in \cite{blau2018perception}.
\end{itemize}

Based on the above analyses, the current SIC framework still has limitations but shows potential for further exploration and offers belief directions for future research on perceptual image compression.

\section{Relevant Thoughts on Semantic Information Theory}
\label{section_VII}




As an extension to synonymity-based semantic information theory, our work provide guidance for practical image coding designs. In \cref{section_B}, we further discuss the relationships with existing conclusions in semantic information theory, including the semantic entropy and the down semantic mutual information proposed in \cite{niu2024mathematical}.

\section{Conclusions}
\label{section_VIII}

In this paper, we consider the perceptual image compression problem from the perspective of synonymity-based semantic information theory. Specifically, we propose a synonymous variational inference (SVI) method to re-analyze the optimization direction of perceptual image compression. Based on this analysis method, we theoretically prove that the optimization direction of perceptual image compression is a triple tradeoff, i.e., synonymous rate-distortion-perception tradeoff, which is compatible with the existing rate-distortion-perception tradeoff empirically presented by previous works. Additionally, we propose a new perceptual image compression scheme, namely synonymous image compression, corresponding to the SVI analytical process, and implement a rough progressive SIC model to fully leverage the model's capabilities. Experimental results demonstrate full-rate rate-distortion perception performance and notable advantages on DISTS, thereby verifying the effectiveness of our proposed analysis method.

\section*{Software and Data}

We will upload code for reproducing our results to the repository at \url{https://github.com/ZJLiang6412/SynonymousImageCompression}.

\section*{Acknowledgements}

This work was supported by the National Natural Science Foundation of China (No. 62293481, No. 62471054).

\section*{Impact Statement}

Our work makes a significant contribution to semantic information processing. A key challenge in semantic information theory is the lack of a unified viewpoint, as many perspectives fail to effectively guide coding design. Guided by the synonymity of semantic information, this paper demonstrates the fundamental theoretical reason why the distribution distance, used to measure perceptual quality in perceptual image compression empirically, appears in the optimization objective. Moreover, our theoretical analysis provides consistent and universal guidance for designing image compression approaches, regardless of whether compression is considered for perceptual quality. To summarize, our contribution links semantic information theory to practical image coding problems, which paves the way for a more unified viewpoint of semantic information theory.

\nocite{langley00}

\bibliography{SVI}
\bibliographystyle{icml2025}

\newpage
\appendix
\onecolumn


\section{The Proof of the Key Theoretical Results}
\label{section_A}

In the main text, we present \cref{ENLSL} and provide a brief proof of the optimization direction of SIC in \cref{SICtheorem}, based on this lemma. In this appendix section, we first provide detailed proofs of \cref{ENLSL} and \cref{SICtheorem}, and briefly discuss their relationships with the existing image compression theories.

\subsection{The Proof of \cref{ENLSL}}
\label{section_A_1}

\renewcommand{\thetheorem}{3.2}
\begin{lemma}
    When the source considers the existence of an ideal synset $\boldsymbol{\mathcal{X}}$ and the decoder places the reconstructed sample in a reconstructed synset $\tilde{\boldsymbol{\mathcal{X}}}$, the minimization of the expected negative log synonymous likelihood term
    \begin{equation}\label{DPequivalent}
         \min \mathbb{E}_{\boldsymbol{x}\sim p\left(\boldsymbol{x}\right)} \mathbb{E}_{\tilde{\boldsymbol{y}}\sim q} \left[-\log p_{\boldsymbol{\mathcal{X}}|\tilde{\boldsymbol{y}}_s}\left(\boldsymbol{\mathcal{X}}|\tilde{\boldsymbol{y}}_s \right)\right]
        \, \Longleftrightarrow \, \min \mathbb{E}_{\tilde{\boldsymbol{y}}\sim q} \mathbb{E}_{\tilde{\boldsymbol{x}}_i \in \tilde{\boldsymbol{\mathcal{X}}} |\tilde{\boldsymbol{y}}_s}
        \left\{\lambda_d \cdot \mathbb{E}_{\boldsymbol{x} \sim p\left(\boldsymbol{x}\right)}\left[d\left(\boldsymbol{x}, \tilde{\boldsymbol{x}}_i\right)\right] + \lambda_p \cdot  D_{\text{KL}}\left[p_{\boldsymbol{x}} || p_{\tilde{\boldsymbol{x}}_i}\right] \right\},
    \end{equation}
    in which $\lambda_d$ and $\lambda_p$ are the tradeoff factors for the expected distortion term and the expected KL divergence term, respectively.
\end{lemma}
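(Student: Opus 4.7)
The plan is to unpack the synset-level log-likelihood into quantities indexed by individual reconstructed samples, and then to identify the two terms that appear on the right-hand side of \eqref{DPequivalent}. First I would use the fact that, conditioned on $\tilde{\boldsymbol{y}}_s$, the decoder produces the reconstructed synset $\tilde{\boldsymbol{\mathcal{X}}}=\{\tilde{\boldsymbol{x}}_i\}$ by sampling the detailed representation $\tilde{\boldsymbol{y}}_\epsilon$, so that by marginalizing over that sampling
\[
p_{\boldsymbol{\mathcal{X}}|\tilde{\boldsymbol{y}}_s}(\boldsymbol{\mathcal{X}}|\tilde{\boldsymbol{y}}_s) = \mathbb{E}_{\tilde{\boldsymbol{x}}_i\in\tilde{\boldsymbol{\mathcal{X}}}\,|\,\tilde{\boldsymbol{y}}_s}\!\bigl[p_{\boldsymbol{\mathcal{X}}|\tilde{\boldsymbol{x}}_i}(\boldsymbol{\mathcal{X}}|\tilde{\boldsymbol{x}}_i)\bigr].
\]
Applying Jensen's inequality to $-\log(\cdot)$ then yields the upper bound
\[
-\log p_{\boldsymbol{\mathcal{X}}|\tilde{\boldsymbol{y}}_s}(\boldsymbol{\mathcal{X}}|\tilde{\boldsymbol{y}}_s) \;\le\; \mathbb{E}_{\tilde{\boldsymbol{x}}_i\in\tilde{\boldsymbol{\mathcal{X}}}\,|\,\tilde{\boldsymbol{y}}_s}\!\bigl[-\log p_{\boldsymbol{\mathcal{X}}|\tilde{\boldsymbol{x}}_i}(\boldsymbol{\mathcal{X}}|\tilde{\boldsymbol{x}}_i)\bigr],
\]
so that minimization of the LHS reduces to minimization of its per-sample surrogate on the RHS, paralleling the role Jensen plays in the ordinary ELBO derivation.

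Next I would argue that the per-sample term $-\log p_{\boldsymbol{\mathcal{X}}|\tilde{\boldsymbol{x}}_i}$ decomposes additively into a distortion piece and a divergence piece. The ideal synset $\boldsymbol{\mathcal{X}}$ carries two complementary kinds of information: it contains the concrete source image $\boldsymbol{x}$, and it is characterized (as a set of perceptually equivalent images) by the underlying distribution $p_{\boldsymbol{x}}$ over its members. Accordingly, I would posit the synonymous likelihood to take the Gibbs form
\[
p_{\boldsymbol{\mathcal{X}}|\tilde{\boldsymbol{x}}_i}(\boldsymbol{\mathcal{X}}|\tilde{\boldsymbol{x}}_i) \;\propto\; \exp\!\Bigl(-\lambda_d\, d(\boldsymbol{x},\tilde{\boldsymbol{x}}_i) \;-\; \lambda_p\, D_{\text{KL}}\bigl[p_{\boldsymbol{x}}\,\|\,p_{\tilde{\boldsymbol{x}}_i}\bigr]\Bigr),
\]
where the first exponent penalizes pointwise deviation of the reconstructed sample from the source and the second penalizes mismatch between the ideal-synset distribution and the reconstructed-sample distribution. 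Taking $-\log$ produces, up to an additive constant independent of $\tilde{\boldsymbol{x}}_i$, exactly $\lambda_d\, d(\boldsymbol{x},\tilde{\boldsymbol{x}}_i) + \lambda_p\, D_{\text{KL}}[p_{\boldsymbol{x}}\|p_{\tilde{\boldsymbol{x}}_i}]$, which mirrors the standard way a Gaussian VAE decoder collapses its log-likelihood into an MSE term, with the additional distribution-level factor appearing precisely because the "data'' scored here is an entire synset rather than a single image.

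Combining the two steps, I would take the outer expectations over $\boldsymbol{x}\sim p(\boldsymbol{x})$ and $\tilde{\boldsymbol{y}}\sim q$, push the $\boldsymbol{x}$-expectation inside the distortion term (which is the only term that depends on the concrete source) and keep the KL term as an inner expectation over the reconstructed samples (since it depends only on the distributions), and then read off precisely the right-hand side of \eqref{DPequivalent}. Equivalence in the sense of sharing the same minimizer over the parametric density $q$ follows because the discarded normalizing constant and the Jensen slack do not affect the optimization.

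The main obstacle will be justifying the Gibbs-form assumption for $p_{\boldsymbol{\mathcal{X}}|\tilde{\boldsymbol{x}}_i}$ in a principled way. Merely positing $\exp(-\lambda_d d - \lambda_p D_{\text{KL}})$ is mathematically convenient but not obviously canonical; a clean argument needs to explain why the pointwise distortion and the distributional divergence should combine additively rather than as some coupled functional, and why identifying the "representative sample'' with $\boldsymbol{x}$ and the "representative distribution'' with $p_{\boldsymbol{x}}$ is forced by the semantic-synset viewpoint. I expect the paper to rely on an independence or product-of-experts factorization between the concrete-source factor and the distribution factor characterizing the synset, and that modeling step is the one I would scrutinize most carefully.
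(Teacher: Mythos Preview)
Your first step---marginalizing over the decoder's detailed-representation sampling and applying Jensen to pull the expectation over $\tilde{\boldsymbol{x}}_i$ outside the logarithm---matches the paper's opening move. The gap is in your second step: you \emph{posit} the Gibbs form $p_{\boldsymbol{\mathcal{X}}|\tilde{\boldsymbol{x}}_i}\propto\exp(-\lambda_d d-\lambda_p D_{\text{KL}})$, and as you yourself note, this is not obviously canonical. The paper does not make this assumption at all; instead it \emph{derives} the distortion and divergence pieces from structural properties of synsets.

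Concretely, the paper uses the semantic-information-theory convention that the probability of a synset is the integral of its members' densities, so $p_{\boldsymbol{\mathcal{X}}|\tilde{\boldsymbol{x}}_j}=\int_{\boldsymbol{x}_i\in\boldsymbol{\mathcal{X}}}p_{\boldsymbol{x}_i|\tilde{\boldsymbol{x}}_j}(\boldsymbol{x}_i|\tilde{\boldsymbol{x}}_j)\,d\boldsymbol{x}_i$. It then applies Bayes to each integrand, multiplies and divides by $p_{\boldsymbol{x}}(\boldsymbol{x})/p_{\tilde{\boldsymbol{x}}_j}(\tilde{\boldsymbol{x}}_j)$, and invokes the key identity $p_{\tilde{\boldsymbol{x}}_j|\boldsymbol{x}_i}=p_{\tilde{\boldsymbol{x}}_j|\boldsymbol{x}}$ (valid because every member $\boldsymbol{x}_i$ of the ideal synset shares the same synonymous representation $\tilde{\boldsymbol{y}}_s$, hence the same decoder output distribution). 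After a second application of Jensen over the integral in $\boldsymbol{x}_i$, the expression splits into four terms: (i) $-\log p_{\boldsymbol{x}|\tilde{\boldsymbol{x}}_j}(\boldsymbol{x}|\tilde{\boldsymbol{x}}_j)$, which under the usual Gaussian decoder assumption becomes the weighted distortion; (ii) $\log\frac{p_{\boldsymbol{x}}(\boldsymbol{x})}{p_{\tilde{\boldsymbol{x}}_j}(\tilde{\boldsymbol{x}}_j)}$, whose expectation over $\boldsymbol{x}$ is exactly $D_{\text{KL}}[p_{\boldsymbol{x}}\|p_{\tilde{\boldsymbol{x}}_j}]$; (iii) an arithmetic-mean log-ratio over the ideal synset that, combined with (ii), is a constant $f(\boldsymbol{x},\boldsymbol{\mathcal{X}})$ depending only on the source; and (iv) $-\log|\boldsymbol{\mathcal{X}}|$, another constant. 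The tradeoff factors $\lambda_d,\lambda_p$ then absorb the Jensen slack and the Gaussian variance, rather than being baked into an assumed likelihood. So the KL term is not posited---it falls out of the Bayes manipulation and the shared-representation identity, which is the idea your proposal is missing.
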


\begin{proof}
    According to the generative model presented in \cref{figure2}, any sample $\boldsymbol{x}_i$ in the ideal synset $\boldsymbol{\mathcal{X}}$ is expected to be capable of being generated using a synonymous representation $\tilde{\boldsymbol{y}}_s$ and a detailed sample $\boldsymbol{\hat{y}}_{\epsilon, j}$, in which the detailed sample $\boldsymbol{\hat{y}}_{\epsilon, j}$ is sampled based on a specific prior $p_{\boldsymbol{\hat{y}}_\epsilon | \tilde{\boldsymbol{y}}_s}\left(\boldsymbol{\hat{y}}_\epsilon | \tilde{\boldsymbol{y}}_s; \boldsymbol{\psi}, \boldsymbol{\theta}_p \right)$. It should be noted that $\boldsymbol{\hat{y}}_{\epsilon, j} \ne \tilde{\boldsymbol{y}}_\epsilon$ since the detailed representation $\tilde{\boldsymbol{y}}_\epsilon$ is not required to be coded and transmitted to the decoder end.

    Additionally, according to the semantic information theory based on synonymity \cite{niu2024mathematical}, the probability of a synset is equal to the sum of the probability or the integral of the density of each sample within the set. Herein, we consider the integral form because image samples within an ideal synset can typically be transformed into one another through continuous changes.

    Based on the above factors, we can expand the expression on the left side of \eqref{DPequivalent} as follows:
    \begin{equation}\label{ENLSL_step1}
    \begin{aligned}
        & \mathbb{E}_{\boldsymbol{x}\sim p\left(\boldsymbol{x}\right)} \mathbb{E}_{\tilde{\boldsymbol{y}}\sim q} \left[-\log p_{\boldsymbol{\mathcal{X}}|\tilde{\boldsymbol{y}}_s}\left(\boldsymbol{\mathcal{X}}|\tilde{\boldsymbol{y}}_s ; \boldsymbol{\theta}_g \right)\right] \\
        & \quad \overset{(\mathrm{a})}{=} \mathbb{E}_{\boldsymbol{x}\sim p\left(\boldsymbol{x}\right)} \mathbb{E}_{\tilde{\boldsymbol{y}}\sim q} \left[ - \log \int_{\boldsymbol{\hat{\boldsymbol{y}}_{\epsilon, j}}} p_{\boldsymbol{\mathcal{X}}|\tilde{\boldsymbol{y}}_s, \hat{\boldsymbol{y}}_{\epsilon, j}} \left(\boldsymbol{\mathcal{X}}|\tilde{\boldsymbol{y}}_s, \hat{\boldsymbol{y}}_{\epsilon, j} ; \boldsymbol{\theta}_g \right) \cdot p_{\hat{\boldsymbol{y}}_{\epsilon, j} | \tilde{\boldsymbol{y}}_s} \left(\hat{\boldsymbol{y}}_{\epsilon, j} | \tilde{\boldsymbol{y}}_s ; \boldsymbol{\psi}, \boldsymbol{\theta}_p\right) d \hat{\boldsymbol{y}}_{\epsilon, j} \right] \\
        & \quad \overset{(\mathrm{b})}{=} \mathbb{E}_{\boldsymbol{x}\sim p\left(\boldsymbol{x}\right)} \mathbb{E}_{\tilde{\boldsymbol{y}}\sim q} \left\{ -\log \mathbb{E}_{\hat{\boldsymbol{y}}_{\epsilon, j} | \tilde{\boldsymbol{y}}_s \sim p_{\hat{\boldsymbol{y}}_{\epsilon, j} | \tilde{\boldsymbol{y}}_s}} \left[p_{\boldsymbol{\mathcal{X}}|\tilde{\boldsymbol{y}}_s, \hat{\boldsymbol{y}}_{\epsilon, j}} \left(\boldsymbol{\mathcal{X}}|\tilde{\boldsymbol{y}}_s, \hat{\boldsymbol{y}}_{\epsilon, j} ; \boldsymbol{\theta}_g \right)\right] \right\} \\
        & \quad \overset{(\mathrm{c})}{=} \mathbb{E}_{\boldsymbol{x}\sim p\left(\boldsymbol{x}\right)} \mathbb{E}_{\tilde{\boldsymbol{y}}\sim q}
        \left\{-\log \mathbb{E}_{\hat{\boldsymbol{y}}_{\epsilon, j} | \tilde{\boldsymbol{y}}_s \sim p_{\hat{\boldsymbol{y}}_{\epsilon, j} | \tilde{\boldsymbol{y}}_s}} \left[\int_{\boldsymbol{x}_i \in \boldsymbol{\mathcal{X}}} p_{\boldsymbol{x}_i | \tilde{\boldsymbol{y}}_s, \hat{\boldsymbol{y}}_{\epsilon, j}} \left(\boldsymbol{x}_i | \tilde{\boldsymbol{y}}_s, \hat{\boldsymbol{y}}_{\epsilon, j} ; \boldsymbol{\theta}_g \right) d \boldsymbol{x}_i \right] \right\} \\
        & \quad \overset{(\mathrm{d})}{=} \mathbb{E}_{\boldsymbol{x}\sim p\left(\boldsymbol{x}\right)} \mathbb{E}_{\tilde{\boldsymbol{y}}\sim q} \left\{ - \log \mathbb{E}_{\tilde{\boldsymbol{x}}_j \in \tilde{\boldsymbol{\mathcal{X}}}|\tilde{\boldsymbol{y}}_s} \left[\int_{\boldsymbol{x}_i \in \boldsymbol{\mathcal{X}}} p_{\boldsymbol{x}_i| \tilde{\boldsymbol{x}}_j} \left(\boldsymbol{x}_i| \tilde{\boldsymbol{x}}_j\right) d \boldsymbol{x}_i \right] \right\},
    \end{aligned}
    \end{equation}
    where $(\mathrm{a})$ is achieved by introducing $\boldsymbol{\hat{y}}_{\epsilon, j}$ into the conditional probability with a corresponding integral; $(\mathrm{b})$ is according to the definition of mathematical expectation; $(\mathrm{c})$ is according to the integral relationship between the probability of the ideal synset and its samples as stated before; $(\mathrm{d})$ is based on a determined generator $\boldsymbol{\theta}_g$ which can definitely map the input $\tilde{\boldsymbol{y}}_s$ and $\hat{\boldsymbol{y}}_{\epsilon, j}$ to the output $\tilde{\boldsymbol{x}}_j$, i.e., $\tilde{\boldsymbol{x}}_j = g_s\left(\tilde{\boldsymbol{y}}_s, \hat{\boldsymbol{y}}_{\epsilon, j} ; \boldsymbol{\theta}_g\right)$. It should be noted that in the equation $(\mathrm{d})$, the reconstructed sample $\tilde{\boldsymbol{x}}_j$ is not required to be completely aligned with the source synonymous sample $\boldsymbol{x}_i$ but for the overall ideal synset $\boldsymbol{\mathcal{X}}$, which is why two subscripts, i.e., $i$ and $j$, are used here.  By minimizing this term, we can use the synonym representation $\hat{\boldsymbol{y}}_s$ to obtain the reconstructed synset $\hat{\boldsymbol{\mathcal{X}}}$, bringing it closer to the ideal synset $\boldsymbol{\mathcal{X}}$, thus achieving the optimization objective shown in \cref{figure1}.

    Since the above results only involve the sample $\boldsymbol{x}_i$ from the ideal synset centered by the original image $\boldsymbol{x}$ and the reconstructed sample $\tilde{\boldsymbol{x}}_j$ obtained by the SIC codec, the influence of the original image sample $\boldsymbol{x}$ has not directly involved. To this end, we consider to rewrite \eqref{ENLSL_step1} as
    \begin{equation}\label{ENLSL_step2}
    \begin{aligned}
        & \mathbb{E}_{\boldsymbol{x}\sim p\left(\boldsymbol{x}\right)} \mathbb{E}_{\tilde{\boldsymbol{y}}\sim q} \left\{ - \log \mathbb{E}_{\tilde{\boldsymbol{x}}_j \in \tilde{\boldsymbol{\mathcal{X}}} |\tilde{\boldsymbol{y}}_s} \left[\int_{\boldsymbol{x}_i \in \boldsymbol{\mathcal{X}}} p_{\boldsymbol{x}_i| \tilde{\boldsymbol{x}}_j} \left(\boldsymbol{x}_i| \tilde{\boldsymbol{x}}_j\right) d \boldsymbol{x}_i \right] \right\} \\
        & \quad \overset{(\mathrm{a})}{=} \mathbb{E}_{\boldsymbol{x}\sim p\left(\boldsymbol{x}\right)} \mathbb{E}_{\tilde{\boldsymbol{y}}\sim q} \left\{ - \log \mathbb{E}_{\tilde{\boldsymbol{x}}_j \in \tilde{\boldsymbol{\mathcal{X}}} |\tilde{\boldsymbol{y}}_s} \left[\int_{\boldsymbol{x}_i \in \boldsymbol{\mathcal{X}}}
        p_{\tilde{\boldsymbol{x}}_j| \boldsymbol{x}_i} \left(\tilde{\boldsymbol{x}}_j | \boldsymbol{x}_i \right)
        \cdot \frac{p_{\boldsymbol{x}_i} \left(\boldsymbol{x}_i\right)}{p_{\tilde{\boldsymbol{x}}_j} \left(\tilde{\boldsymbol{x}}_j\right)} d \boldsymbol{x}_i \right] \right\} \\
        & \quad \overset{(\mathrm{b})}{=} \mathbb{E}_{\boldsymbol{x}\sim p\left(\boldsymbol{x}\right)} \mathbb{E}_{\tilde{\boldsymbol{y}}\sim q} \left\{ - \log \mathbb{E}_{\tilde{\boldsymbol{x}}_j \in \tilde{\boldsymbol{\mathcal{X}}} |\tilde{\boldsymbol{y}}_s} \left[\int_{\boldsymbol{x}_i \in \boldsymbol{\mathcal{X}}}
        p_{\tilde{\boldsymbol{x}}_j| \boldsymbol{x}_i} \left(\tilde{\boldsymbol{x}}_j | \boldsymbol{x}_i \right)
        \cdot \frac{p_{\boldsymbol{x}} \left(\boldsymbol{x}\right)}{p_{\tilde{\boldsymbol{x}}_j} \left(\tilde{\boldsymbol{x}}_j\right)}
        \cdot \frac{p_{\tilde{\boldsymbol{x}}_j} \left(\tilde{\boldsymbol{x}}_j\right)}{p_{\boldsymbol{x}} \left(\boldsymbol{x}\right)}
        \cdot \frac{p_{\boldsymbol{x}_i} \left(\boldsymbol{x}_i\right)}{p_{\tilde{\boldsymbol{x}}_j} \left(\tilde{\boldsymbol{x}}_j\right)} d \boldsymbol{x}_i \right] \right\},
    \end{aligned}
    \end{equation}
    in which $(\mathrm{a})$ is by the Bayes' theorem, and $(\mathrm{b})$ is achieved by introducing the reciprocal terms $\dfrac{p_{\boldsymbol{x}} \left(\boldsymbol{x}\right)}{p_{\tilde{\boldsymbol{x}}_j} \left(\tilde{\boldsymbol{x}}_j\right)}$ and $\dfrac{p_{\tilde{\boldsymbol{x}}_j} \left(\tilde{\boldsymbol{x}}_j\right)}{p_{\boldsymbol{x}} \left(\boldsymbol{x}\right)}$.

    Furthermore, since any sample $\boldsymbol{x}_i$  in the ideal synset (including the original image sample $\boldsymbol{x}$) shares the same synonymous representation $\tilde{\boldsymbol{y}}_s$ by the ideal parametric SIC encoder with $\boldsymbol{\phi^*_g}$,  the posterior term in the \eqref{ENLSL_step2} satisfy the following equations:
    \begin{equation}\label{replaceEquations}
        p_{\tilde{\boldsymbol{x}}_j| \boldsymbol{x}_i} \left(\tilde{\boldsymbol{x}}_j | \boldsymbol{x}_i \right)
        = p_{\tilde{\boldsymbol{x}}_j| \tilde{\boldsymbol{y}}_s} \left(\tilde{\boldsymbol{x}}_j | \tilde{\boldsymbol{y}}_s \right)
        = p_{\tilde{\boldsymbol{x}}_j| \boldsymbol{x}} \left(\tilde{\boldsymbol{x}}_j | \boldsymbol{x} \right).
    \end{equation}

    Therefore, the result of \eqref{ENLSL_step2} can be further derived as
    \begin{equation}\label{ENLSL_step3}
    \begin{aligned}
        & \mathbb{E}_{\boldsymbol{x}\sim p\left(\boldsymbol{x}\right)} \mathbb{E}_{\tilde{\boldsymbol{y}}\sim q} \left\{ - \log \mathbb{E}_{\tilde{\boldsymbol{x}}_j \in \tilde{\boldsymbol{\mathcal{X}}} |\tilde{\boldsymbol{y}}_s} \left[\int_{\boldsymbol{x}_i \in \boldsymbol{\mathcal{X}}}
        p_{\tilde{\boldsymbol{x}}_j| \boldsymbol{x}_i} \left(\tilde{\boldsymbol{x}}_j | \boldsymbol{x}_i \right)
        \cdot \frac{p_{\boldsymbol{x}} \left(\boldsymbol{x}\right)}{p_{\tilde{\boldsymbol{x}}_j} \left(\tilde{\boldsymbol{x}}_j\right)}
        \cdot \frac{p_{\tilde{\boldsymbol{x}}_j} \left(\tilde{\boldsymbol{x}}_j\right)}{p_{\boldsymbol{x}} \left(\boldsymbol{x}\right)}
        \cdot \frac{p_{\boldsymbol{x}_i} \left(\boldsymbol{x}_i\right)}{p_{\tilde{\boldsymbol{x}}_j} \left(\tilde{\boldsymbol{x}}_j\right)} d \boldsymbol{x}_i \right] \right\} \\
        & \quad \overset{(\mathrm{a})}{=} \mathbb{E}_{\boldsymbol{x}\sim p\left(\boldsymbol{x}\right)} \mathbb{E}_{\tilde{\boldsymbol{y}}\sim q} \left\{ - \log \mathbb{E}_{\tilde{\boldsymbol{x}}_j \in \tilde{\boldsymbol{\mathcal{X}}} |\tilde{\boldsymbol{y}}_s} \left[\int_{\boldsymbol{x}_i \in \boldsymbol{\mathcal{X}}}
        \left(
            p_{\tilde{\boldsymbol{x}}_j| \boldsymbol{x}} \left(\tilde{\boldsymbol{x}}_j | \boldsymbol{x} \right)
            \cdot \frac{p_{\boldsymbol{x}} \left(\boldsymbol{x}\right)}{p_{\tilde{\boldsymbol{x}}_j} \left(\tilde{\boldsymbol{x}}_j\right)}
        \right)
        \cdot \frac{p_{\tilde{\boldsymbol{x}}_j} \left(\tilde{\boldsymbol{x}}_j\right)}{p_{\boldsymbol{x}} \left(\boldsymbol{x}\right)}
        \cdot \frac{p_{\boldsymbol{x}_i} \left(\boldsymbol{x}_i\right)}{p_{\tilde{\boldsymbol{x}}_j} \left(\tilde{\boldsymbol{x}}_j\right)} d \boldsymbol{x}_i \right] \right\} \\
        & \quad \overset{(\mathrm{b})}{=} \mathbb{E}_{\boldsymbol{x}\sim p\left(\boldsymbol{x}\right)} \mathbb{E}_{\tilde{\boldsymbol{y}}\sim q} \left\{ - \log \mathbb{E}_{\tilde{\boldsymbol{x}}_j \in \tilde{\boldsymbol{\mathcal{X}}} |\tilde{\boldsymbol{y}}_s} \left[\int_{\boldsymbol{x}_i \in \boldsymbol{\mathcal{X}}}
        p_{\boldsymbol{x} | \tilde{\boldsymbol{x}}_j} \left(\boldsymbol{x} | \tilde{\boldsymbol{x}}_j \right)
        \cdot \frac{p_{\tilde{\boldsymbol{x}}_j} \left(\tilde{\boldsymbol{x}}_j\right)}{p_{\boldsymbol{x}} \left(\boldsymbol{x}\right)}
        \cdot \frac{p_{\boldsymbol{x}_i} \left(\boldsymbol{x}_i\right)}{p_{\tilde{\boldsymbol{x}}_j} \left(\tilde{\boldsymbol{x}}_j\right)} d \boldsymbol{x}_i \right] \right\} \\
        & \quad = \mathbb{E}_{\boldsymbol{x}\sim p\left(\boldsymbol{x}\right)} \mathbb{E}_{\tilde{\boldsymbol{y}}\sim q} \left\{ - \log \mathbb{E}_{\tilde{\boldsymbol{x}}_j \in \tilde{\boldsymbol{\mathcal{X}}} |\tilde{\boldsymbol{y}}_s}
        \left[ p_{\boldsymbol{x} | \tilde{\boldsymbol{x}}_j} \left(\boldsymbol{x} | \tilde{\boldsymbol{x}}_j \right)
        \cdot \frac{p_{\tilde{\boldsymbol{x}}_j} \left(\tilde{\boldsymbol{x}}_j\right)}{p_{\boldsymbol{x}} \left(\boldsymbol{x}\right)}
        \cdot \int_{\boldsymbol{x}_i \in \boldsymbol{\mathcal{X}}}
        \frac{p_{\boldsymbol{x}_i} \left(\boldsymbol{x}_i\right)}{p_{\tilde{\boldsymbol{x}}_j} \left(\tilde{\boldsymbol{x}}_j\right)} d \boldsymbol{x}_i \right] \right\} \\
        & \quad \overset{(\mathrm{c})}{=} \mathbb{E}_{\boldsymbol{x}\sim p\left(\boldsymbol{x}\right)} \mathbb{E}_{\tilde{\boldsymbol{y}}\sim q} \left\{ - \log \mathbb{E}_{\tilde{\boldsymbol{x}}_j \in \tilde{\boldsymbol{\mathcal{X}}} |\tilde{\boldsymbol{y}}_s}
        \left[ p_{\boldsymbol{x} | \tilde{\boldsymbol{x}}_j} \left(\boldsymbol{x} | \tilde{\boldsymbol{x}}_j \right)
        \cdot \frac{p_{\tilde{\boldsymbol{x}}_j} \left(\tilde{\boldsymbol{x}}_j\right)}{p_{\boldsymbol{x}} \left(\boldsymbol{x}\right)}
        \cdot \left(\frac{1}{\left|\boldsymbol{\mathcal{X}}\right|} \int_{\boldsymbol{x}_i \in \boldsymbol{\mathcal{X}}}
        \frac{p_{\boldsymbol{x}_i} \left(\boldsymbol{x}_i\right)}{p_{\tilde{\boldsymbol{x}}_j} \left(\tilde{\boldsymbol{x}}_j\right)}
        d \boldsymbol{x}_i \right)
        \cdot \left|\boldsymbol{\mathcal{X}}\right| \right] \right\} \\
        & \quad \overset{(\mathrm{d})}{\leq}
        \mathbb{E}_{\boldsymbol{x}\sim p\left(\boldsymbol{x}\right)} \mathbb{E}_{\tilde{\boldsymbol{y}}\sim q}
        \mathbb{E}_{\tilde{\boldsymbol{x}}_j \in \tilde{\boldsymbol{\mathcal{X}}} |\tilde{\boldsymbol{y}}_s} \left[
            - \log p_{\boldsymbol{x} | \tilde{\boldsymbol{x}}_j} \left(\boldsymbol{x} | \tilde{\boldsymbol{x}}_j \right)
            + \log \frac{p_{\boldsymbol{x}} \left(\boldsymbol{x}\right)}{p_{\tilde{\boldsymbol{x}}_j} \left(\tilde{\boldsymbol{x}}_j\right)}
            - \log \frac{1}{\left|\boldsymbol{\mathcal{X}}\right|} \int_{\boldsymbol{x}_i \in \boldsymbol{\mathcal{X}}}
            \frac{p_{\boldsymbol{x}_i} \left(\boldsymbol{x}_i\right)}{p_{\tilde{\boldsymbol{x}}_j} \left(\tilde{\boldsymbol{x}}_j\right)} d \boldsymbol{x}_i
            - \log \left|\boldsymbol{\mathcal{X}}\right|
        \right] \\
        & \quad \overset{(\mathrm{e})}{\leq}
        \mathbb{E}_{\boldsymbol{x}\sim p\left(\boldsymbol{x}\right)} \mathbb{E}_{\tilde{\boldsymbol{y}}\sim q}
        \mathbb{E}_{\tilde{\boldsymbol{x}}_j \in \tilde{\boldsymbol{\mathcal{X}}} |\tilde{\boldsymbol{y}}_s} \left[
            - \log p_{\boldsymbol{x} | \tilde{\boldsymbol{x}}_j} \left(\boldsymbol{x} | \tilde{\boldsymbol{x}}_j \right)
            + \log \frac{p_{\boldsymbol{x}} \left(\boldsymbol{x}\right)}{p_{\tilde{\boldsymbol{x}}_j} \left(\tilde{\boldsymbol{x}}_j\right)}
            - \frac{1}{\left|\boldsymbol{\mathcal{X}}\right|} \int_{\boldsymbol{x}_i \in \boldsymbol{\mathcal{X}}}
            \log \frac{p_{\boldsymbol{x}_i} \left(\boldsymbol{x}_i\right)}{p_{\tilde{\boldsymbol{x}}_j} \left(\tilde{\boldsymbol{x}}_j\right)} d \boldsymbol{x}_i
            - \log \left|\boldsymbol{\mathcal{X}}\right|
        \right],
    \end{aligned}
    \end{equation}
    where $(\mathrm{a})$ replaces $p_{\tilde{\boldsymbol{x}}_j| \boldsymbol{x}_i} \left(\tilde{\boldsymbol{x}}_j | \boldsymbol{x}_i \right)$ with $p_{\tilde{\boldsymbol{x}}_j| \boldsymbol{x}} \left(\tilde{\boldsymbol{x}}_j | \boldsymbol{x} \right)$ using \eqref{replaceEquations}; $(\mathrm{b})$ is derived by the Bayes' theorem in reverse; $(\mathrm{c})$ introduces $\left|\boldsymbol{\mathcal{X}}\right|$ and its reciprocal, in which $\left|\boldsymbol{\mathcal{X}}\right|$ denotes the size of the ideal synset $\boldsymbol{\mathcal{X}}$, and can be used to express an arithmetic mean along with the followed integral over $\boldsymbol{\mathcal{X}}$; (d) and (e) can be scaled based on the Jensen's inequalities, respectively, since $-\log \left(\cdot\right)$ is a convex function.

    Next, we examine the result of \eqref{ENLSL_step3} separately.

    \begin{enumerate}
        \item \textbf{The Derivation of the First Term.} $\mathbb{E}_{\boldsymbol{x}\sim p\left(\boldsymbol{x}\right)} \mathbb{E}_{\tilde{\boldsymbol{y}}\sim q}
        \mathbb{E}_{\tilde{\boldsymbol{x}}_j \in \tilde{\boldsymbol{\mathcal{X}}} |\tilde{\boldsymbol{y}}_s}\left[-\log p_{\boldsymbol{x} | \tilde{\boldsymbol{x}}_j} \left(\boldsymbol{x} | \tilde{\boldsymbol{x}}_j \right)\right]$ denotes an expected distortion which averaged on the source images and the corresponding samples of the reconstructed synsets. As a typical case, when the likelihood probability $p_{\boldsymbol{x} | \tilde{\boldsymbol{x}}_j} \left(\boldsymbol{x} | \tilde{\boldsymbol{x}}_j \right)$ follows a Gaussian distribution $\mathcal{N}\left(\boldsymbol{x}|\tilde{\boldsymbol{x}}_j, \sigma^2\boldsymbol{I}_d\right)$ (in which $d$ denotes the dimension of the original image $\boldsymbol{x}$), this term will be equivalent to
        \begin{equation}
            \mathbb{E}_{\boldsymbol{x}\sim p\left(\boldsymbol{x}\right)} \mathbb{E}_{\tilde{\boldsymbol{y}}\sim q}
            \mathbb{E}_{\tilde{\boldsymbol{x}}_j \in \tilde{\boldsymbol{\mathcal{X}}} |\tilde{\boldsymbol{y}}_s}\left[-\log p_{\boldsymbol{x} | \tilde{\boldsymbol{x}}_j} \left(\boldsymbol{x} | \tilde{\boldsymbol{x}}_j \right)\right] = \frac{1}{2\sigma^2} \cdot \mathbb{E}_{\boldsymbol{x}\sim p\left(\boldsymbol{x}\right)} \mathbb{E}_{\tilde{\boldsymbol{y}}\sim q}
            \mathbb{E}_{\tilde{\boldsymbol{x}}_j \in \tilde{\boldsymbol{\mathcal{X}}} |\tilde{\boldsymbol{y}}_s} \left|\left|\boldsymbol{x} - \tilde{\boldsymbol{x}}\right|\right|^2 + \frac{d}{2}\log\left(2\pi\sigma^2\right),
        \end{equation}
        in which the $\sigma^2$ is the variance term of the set Gaussian distribution, i.e., the power of the quantization noise. In this case, the term can be considered as a weighted Expected Mean Squared Error (E-MSE) loss (instead of the Mean Squared Error (MSE) loss) plus a constant.

        In typical LIC methods \cite{balle2016end, balle2018variational}, the multiplier $\dfrac{1}{2\sigma^2}$ is often replaced with a hyperparameter $\lambda$ as the tradeoff factor to the MSE loss to the balance with the coding rate. However, in SIC, if E-MSE is used as the distortion loss term, it is incomplete to define the physical meaning of the hyperparameter as $\dfrac{1}{2\sigma^2}$ alone: The effect of scaling, due to the inequality in \eqref{ENLSL_step3}(d), should be also considered. Therefore, we summarize the analysis results as
        \begin{equation}
            \lambda_d \cdot \mathbb{E}_{\boldsymbol{x}\sim p\left(\boldsymbol{x}\right)} \mathbb{E}_{\tilde{\boldsymbol{y}}\sim q}
            \mathbb{E}_{\tilde{\boldsymbol{x}}_j \in \tilde{\boldsymbol{\mathcal{X}}} |\tilde{\boldsymbol{y}}_s} \left|\left|\boldsymbol{x} - \tilde{\boldsymbol{x}}_j\right|\right|^2 + \text{const},
        \end{equation}
        in which $\lambda_d = \alpha_d\left(\big|\tilde{\boldsymbol{\mathcal{X}}}\big|\right) \cdot \dfrac{1}{2\sigma^2}$. The multiplier $\alpha_d\left(\big|\tilde{\boldsymbol{\mathcal{X}}}\big|\right)$ is a scaling factor influenced by the size of the reconstructed synset $\tilde{\boldsymbol{\mathcal{X}}}$, and it can be implicitly incorporated into the value of the hyperparameter $\lambda_d$ thus it does not need to be explicitly assigned. It should be noted that when the equality condition of Jensen inequality \eqref{ENLSL_step3}(d) is satisfied or the reconstructed synset contains only one sample, the multiplier $\alpha_d\left(\big|\tilde{\boldsymbol{\mathcal{X}}}\big|\right) = 1$, and the analysis result will be degraded into $\lambda \cdot \mathbb{E}_{\boldsymbol{x}\sim p\left(\boldsymbol{x}\right)} \mathbb{E}_{\tilde{\boldsymbol{y}}\sim q} \left|\left|\boldsymbol{x} - \tilde{\boldsymbol{x}}\right|\right|^2 + \text{const}$.

        When other distortion measures (such as Expected MS-SSIM, abbreviated as E-MS-SSIM) are used instead of E-MSE, the situation is similar and will not be discussed further. Herein, we give a general analysis result as
        \begin{equation}
            \lambda_d \cdot \mathbb{E}_{\boldsymbol{x}\sim p\left(\boldsymbol{x}\right)} \mathbb{E}_{\tilde{\boldsymbol{y}}\sim q}
            \mathbb{E}_{\tilde{\boldsymbol{x}}_j \in \tilde{\boldsymbol{\mathcal{X}}} |\tilde{\boldsymbol{y}}_s} \left[d\left(\boldsymbol{x}, \tilde{\boldsymbol{x}}_j\right)\right] + \text{const},
        \end{equation}
        in which $d\left(\cdot\right)$ denotes any distortion measure between the original image $\boldsymbol{x}$ and the reconstructed sample $\tilde{\boldsymbol{x}}_j$.

        \item \textbf{The Derivation of the Second Term and the Third Term.} These two terms should be firstly considered together due to their linkage, which can be derived as
        \begin{equation}\label{combine}
        \begin{aligned}
            & \mathbb{E}_{\boldsymbol{x}\sim p\left(\boldsymbol{x}\right)} \mathbb{E}_{\tilde{\boldsymbol{y}}\sim q} \mathbb{E}_{\tilde{\boldsymbol{x}}_j \in \tilde{\boldsymbol{\mathcal{X}}} |\tilde{\boldsymbol{y}}_s}
            \left[
                \log \frac{p_{\boldsymbol{x}} \left(\boldsymbol{x}\right)}{p_{\tilde{\boldsymbol{x}}_j} \left(\tilde{\boldsymbol{x}}_j\right)}
                - \frac{1}{\left|\boldsymbol{\mathcal{X}}\right|} \int_{\boldsymbol{x}_i \in \boldsymbol{\mathcal{X}}}
                \log \frac{p_{\boldsymbol{x}_i} \left(\boldsymbol{x}_i\right)}{p_{\tilde{\boldsymbol{x}}_j} \left(\tilde{\boldsymbol{x}}_j\right)} d \boldsymbol{x}_i
            \right] \\
            & \quad = \mathbb{E}_{\boldsymbol{x}\sim p\left(\boldsymbol{x}\right)} \mathbb{E}_{\tilde{\boldsymbol{y}}\sim q} \mathbb{E}_{\tilde{\boldsymbol{x}}_j \in \tilde{\boldsymbol{\mathcal{X}}} |\tilde{\boldsymbol{y}}_s}
            \left[
                \frac{1}{\left|\boldsymbol{\mathcal{X}}\right|} \int_{\boldsymbol{x}_i \in \boldsymbol{\mathcal{X}}}
                \left( \log \frac{p_{\boldsymbol{x}} \left(\boldsymbol{x}\right)}{p_{\tilde{\boldsymbol{x}}_j} \left(\tilde{\boldsymbol{x}}_j\right)} - \log \frac{p_{\boldsymbol{x}_i} \left(\boldsymbol{x}_i\right)}{p_{\tilde{\boldsymbol{x}}_j} \left(\tilde{\boldsymbol{x}}_j\right)}\right) d \boldsymbol{x}_i
            \right] \\
            & \quad = \mathbb{E}_{\boldsymbol{x}\sim p\left(\boldsymbol{x}\right)} \mathbb{E}_{\tilde{\boldsymbol{y}}\sim q} \mathbb{E}_{\tilde{\boldsymbol{x}}_j \in \tilde{\boldsymbol{\mathcal{X}}} |\tilde{\boldsymbol{y}}_s}
            \left[
                \frac{1}{\left|\boldsymbol{\mathcal{X}}\right|} \int_{\boldsymbol{x}_i \in \boldsymbol{\mathcal{X}}}
                \log \frac{p_{\boldsymbol{x}} \left(\boldsymbol{x}\right)}{p_{\tilde{\boldsymbol{x}}_j} \left(\tilde{\boldsymbol{x}}_j\right)}
                \cdot \frac{p_{\tilde{\boldsymbol{x}}_j} \left(\tilde{\boldsymbol{x}}_j\right)}{p_{\boldsymbol{x}_i} \left(\boldsymbol{x}_i\right)}
                d \boldsymbol{x}_i
            \right] \\
            & \quad = \mathbb{E}_{\boldsymbol{x}\sim p\left(\boldsymbol{x}\right)}
            \left[
                \frac{1}{\left|\boldsymbol{\mathcal{X}}\right|} \int_{\boldsymbol{x}_i \in \boldsymbol{\mathcal{X}}}
                \log \frac{p_{\boldsymbol{x}} \left(\boldsymbol{x}\right)}{p_{\boldsymbol{x}_i} \left(\boldsymbol{x}_i\right)}
                d \boldsymbol{x}_i
            \right] \\
            & \quad = \frac{1}{\left|\boldsymbol{\mathcal{X}}\right|} \int_{\boldsymbol{x}_i \in \boldsymbol{\mathcal{X}}}
            \mathbb{E}_{\boldsymbol{x}\sim p\left(\boldsymbol{x}\right)}
            \left[\log \frac{p_{\boldsymbol{x}} \left(\boldsymbol{x}\right)}{p_{\boldsymbol{x}_i} \left(\boldsymbol{x}_i\right)}\right]
            d \boldsymbol{x}_i \\
            & \quad = \frac{1}{\left|\boldsymbol{\mathcal{X}}\right|} \int_{\boldsymbol{x}_i \in \boldsymbol{\mathcal{X}}}
            D_{\text{KL}}\left[p_{\boldsymbol{x}} || p_{\boldsymbol{x}_i}\right]
            d \boldsymbol{x}_i = f\left(\boldsymbol{x}, \boldsymbol{\mathcal{X}}\right),
        \end{aligned}
        \end{equation}
        i.e., the arithmetic mean of the KL divergence between the original sample $\boldsymbol{x}$ and the synonymous sample $\boldsymbol{x}_i$, which is a non-negative function $f\left(\boldsymbol{x}, \boldsymbol{\mathcal{X}}\right)$ of the original sample $\boldsymbol{x}$ and the ideal synset $\boldsymbol{\mathcal{X}}$. If these two factors at the source are determined, the result of the function will be constant.
        For a special case, when $\boldsymbol{\mathcal{X}}$ contains only one sample, i.e., the original image $\boldsymbol{x}$, this constant will be equal to 0 since the two distributions reduce to only one distribution $p_{\boldsymbol{x}}$.

        Despite the foregoing facts, if it is treated as a constant, the existence of the ideal synset on the source will lose its meaning during the minimization process, and no samples with perceptual similarity to the source image will be available to provide a reference for the reconstructed samples. Therefore, we need to consider the second and third terms separately to determine the meaning of the constant value in the optimization process:
        \begin{itemize}
            \item For the second term $\mathbb{E}_{\boldsymbol{x}\sim p\left(\boldsymbol{x}\right)} \mathbb{E}_{\tilde{\boldsymbol{y}}\sim q} \mathbb{E}_{\tilde{\boldsymbol{x}}_j \in \tilde{\boldsymbol{\mathcal{X}}} |\tilde{\boldsymbol{y}}_s} \left[ \log \dfrac{p_{\boldsymbol{x}} \left(\boldsymbol{x}\right)}{p_{\tilde{\boldsymbol{x}}_j} \left(\tilde{\boldsymbol{x}}_j\right)} \right]$, it can be further derived as
            \begin{equation}
            \begin{aligned}
                \mathbb{E}_{\boldsymbol{x}\sim p\left(\boldsymbol{x}\right)} \mathbb{E}_{\tilde{\boldsymbol{y}}\sim q} \mathbb{E}_{\tilde{\boldsymbol{x}}_j \in \tilde{\boldsymbol{\mathcal{X}}} |\tilde{\boldsymbol{y}}_s} \left[ \log \frac{p_{\boldsymbol{x}} \left(\boldsymbol{x}\right)}{p_{\tilde{\boldsymbol{x}}_j} \left(\tilde{\boldsymbol{x}}_j\right)} \right]
                & = \mathbb{E}_{\tilde{\boldsymbol{y}}\sim q} \mathbb{E}_{\tilde{\boldsymbol{x}}_j \in \tilde{\boldsymbol{\mathcal{X}}} |\tilde{\boldsymbol{y}}_s} \left[\mathbb{E}_{\boldsymbol{x}\sim p\left(\boldsymbol{x}\right)} \log \frac{p_{\boldsymbol{x}} \left(\boldsymbol{x}\right)}{p_{\tilde{\boldsymbol{x}}_j} \left(\tilde{\boldsymbol{x}}_j\right)}\right] \\
                & = \mathbb{E}_{\tilde{\boldsymbol{y}}\sim q} \mathbb{E}_{\tilde{\boldsymbol{x}}_j \in \tilde{\boldsymbol{\mathcal{X}}} |\tilde{\boldsymbol{y}}_s} D_{\text{KL}}\left[p_{\boldsymbol{x}} || p_{\tilde{\boldsymbol{x}}_j}\right],
            \end{aligned}
            \end{equation}
            i.e., an Expected KL Divergence (E-KLD) between the distribution of the original image $p_{\boldsymbol{x}}$ and the distribution of the reconstructed sample $p_{\tilde{\boldsymbol{x}}_j}$ that averaged on the reconstructed synset $\tilde{\boldsymbol{\mathcal{X}}}$.

            \item As for the third term, i.e., $\mathbb{E}_{\boldsymbol{x}\sim p\left(\boldsymbol{x}\right)} \mathbb{E}_{\tilde{\boldsymbol{y}}\sim q} \mathbb{E}_{\tilde{\boldsymbol{x}}_j \in \tilde{\boldsymbol{\mathcal{X}}} |\tilde{\boldsymbol{y}}_s} \left[ - \dfrac{1}{\left|\boldsymbol{\mathcal{X}}\right|} \displaystyle\int_{\boldsymbol{x}_i\in \boldsymbol{\mathcal{X}}} \log \dfrac{p_{\boldsymbol{x}_i} \left(\boldsymbol{x}_i\right)}{p_{\tilde{\boldsymbol{x}}_j} \left(\tilde{\boldsymbol{x}}_j\right)} d \boldsymbol{x}_i\right]$, although it has a similar form to KL divergence, it cannot be called KL divergence because it uses the arithmetic mean instead of the mathematical expectation, thus lacking the non-negative properties of KL divergence. It should be noted that the outer mathematic expectations $\mathbb{E}_{\boldsymbol{x}\sim p\left(\boldsymbol{x}\right)} \mathbb{E}_{\tilde{\boldsymbol{y}}\sim q} \mathbb{E}_{\tilde{\boldsymbol{x}}_j \in \tilde{\boldsymbol{\mathcal{X}}} |\tilde{\boldsymbol{y}}_s}$ is actually calculated the expectation value according to the conditional probability $p_{\tilde{\boldsymbol{x}}_j|\boldsymbol{x}} \left(\tilde{\boldsymbol{x}}_j|\boldsymbol{x}\right)$ instead of $p_{\tilde{\boldsymbol{x}}_j} \left(\tilde{\boldsymbol{x}}_j\right)$, thus the result cannot be regarded as KL divergence from this perspective neither. In spite of this, we can intuitively find the conditions under which this term equals 0, which can be expressed as
            \begin{equation}\label{equalProbabiltiyConditions}
                \mathbb{E}_{\boldsymbol{x}\sim p\left(\boldsymbol{x}\right)} \mathbb{E}_{\tilde{\boldsymbol{y}}\sim q} \mathbb{E}_{\tilde{\boldsymbol{x}}_j \in \tilde{\boldsymbol{\mathcal{X}}} |\tilde{\boldsymbol{y}}_s} \left[ - \dfrac{1}{\left|\boldsymbol{\mathcal{X}}\right|} \int_{\boldsymbol{x}_i\in \boldsymbol{\mathcal{X}}} \log \dfrac{p_{\boldsymbol{x}_i} \left(\boldsymbol{x}_i\right)}{p_{\tilde{\boldsymbol{x}}_j} \left(\tilde{\boldsymbol{x}}_j\right)} d \boldsymbol{x}_i\right] = 0 \quad \Longleftrightarrow \quad
                p_{\boldsymbol{x}_i} \left(\boldsymbol{x}_i\right) = p_{\tilde{\boldsymbol{x}}_j} \left(\tilde{\boldsymbol{x}}_j\right) \quad \forall{i, j}
            \end{equation}
            i.e., for arbitrary sample pair $\left(\boldsymbol{x}_i, \tilde{\boldsymbol{x}}_j\right)$, the probabilities of both samples are equal. To facilitate subsequent analysis, we label this term as $-\delta_p$.
        \end{itemize}

        Based on the above analysis, we obtain the following equation relationship:
        \begin{equation}
            \mathbb{E}_{\tilde{\boldsymbol{y}}\sim q}
            \mathbb{E}_{\tilde{\boldsymbol{x}}_j \in \tilde{\boldsymbol{\mathcal{X}}} |\tilde{\boldsymbol{y}}_s} D_{\text{KL}}\left[p_{\boldsymbol{x}} || p_{\tilde{\boldsymbol{x}}_j}\right] - \delta_p = \frac{1}{\left|\boldsymbol{\mathcal{X}}\right|} \int_{\boldsymbol{x}_i \in \boldsymbol{\mathcal{X}}}
            D_{\text{KL}}\left[p_{\boldsymbol{x}} || p_{\boldsymbol{x}_i}\right]
            d \boldsymbol{x}_i = f\left(\boldsymbol{x}, \boldsymbol{\mathcal{X}}\right).
        \end{equation}

        From this formula, we can see that although $f\left(\boldsymbol{x}, \boldsymbol{\mathcal{X}}\right)$ is a constant when $\boldsymbol{x}$ and $\boldsymbol{\mathcal{X}}$ are determined, we can approximate it by minimizing the E-KLD term $\mathbb{E}_{\tilde{\boldsymbol{x}}_j \in \tilde{\boldsymbol{\mathcal{X}}} |\tilde{\boldsymbol{y}}_s} D_{\text{KL}}\left[p_{\boldsymbol{x}} || p_{\tilde{\boldsymbol{x}}_j}\right]$. For a special case, by forcing $\delta_p = 0$ with equal probability sampling for $\tilde{\boldsymbol{x}}_j$, the E-KLD term $\mathbb{E}_{\tilde{\boldsymbol{x}}_j \in \tilde{\boldsymbol{\mathcal{X}}} |\tilde{\boldsymbol{y}}_s} D_{\text{KL}}\left[p_{\boldsymbol{x}} || p_{\tilde{\boldsymbol{x}}_j}\right]$ is equal to the arithmetic mean term $\frac{1}{\left|\boldsymbol{\mathcal{X}}\right|} \int_{\boldsymbol{x}_i \in \boldsymbol{\mathcal{X}}} D_{\text{KL}}\left[p_{\boldsymbol{x}} || p_{\boldsymbol{x}_i}\right] d \boldsymbol{x}_i$.

        Similar to the first term, considering the effect of scaling in $\eqref{ENLSL}$(d), we summarize the analysis results as
        \begin{equation}
            \lambda_p \cdot \mathbb{E}_{\tilde{\boldsymbol{y}}\sim q} \mathbb{E}_{\tilde{\boldsymbol{x}}_j \in \tilde{\boldsymbol{\mathcal{X}}} |\tilde{\boldsymbol{y}}_s} D_{\text{KL}}\left[p_{\boldsymbol{x}} || p_{\tilde{\boldsymbol{x}}_j}\right],
        \end{equation}
        in which $\lambda_p = 1_{\left|\boldsymbol{\mathcal{X}}\right| \ne 1} \left(\big|\boldsymbol{\mathcal{X}}\big|\right) \cdot \alpha_p\left(  \left|\tilde{\boldsymbol{\mathcal{X}}}\right|\right)$ is also a scaling factor influenced by the size of the ideal synset $\boldsymbol{\mathcal{X}}$ and the reconstructed synset $\tilde{\boldsymbol{\mathcal{X}}}$, and $1_{\left|\boldsymbol{\mathcal{X}}\right| \ne 1}$ is a indicated function of the size of the ideal synset $\left|\boldsymbol{\mathcal{X}}\right|$. Similar to the first term, we discuss the following two special cases:
        \begin{itemize}
            \item When the equality condition of Jensen inequality \eqref{ENLSL_step3}(e) is satisfied, or the ideal synset is considered with multiple samples while the reconstructed synset contains only one, the multiplier $1_{\left|\boldsymbol{\mathcal{X}}\right| \ne 1} \left(\big|\boldsymbol{\mathcal{X}}\big|\right) = 1$ and  $\alpha_p\left(\big|\tilde{\boldsymbol{\mathcal{X}}}\big|\right) = 1$, and the analysis result will be degraded into $\mathbb{E}_{\tilde{\boldsymbol{y}}\sim q}  D_{\text{KL}}\left[p_{\boldsymbol{x}} || p_{\tilde{\boldsymbol{x}}}\right]$.
            \item When the ideal synset contains only one sample, i.e., the original image $\boldsymbol{x}$, the multiplier $1_{\left|\boldsymbol{\mathcal{X}}\right| \ne 1} \left(\left|\boldsymbol{\mathcal{X}}\right|\right) = 0$, which makes the analysis result equal to 0.
        \end{itemize}

        \item \textbf{The Derivation of the Fourth Term.} With the determination of the ideal synset $\boldsymbol{\mathcal{X}}$ for the original image $\boldsymbol{x}$, the term $- \log \left|\boldsymbol{\mathcal{X}}\right|$ is a constant that cannot be optimized.
    \end{enumerate}

    To summarize, by consolidating the above analysis results, we complete the proof of \cref{ENLSL}, that is,
    \begin{equation}
    \begin{aligned}
        & \min \mathbb{E}_{\boldsymbol{x}\sim p\left(\boldsymbol{x}\right)} \mathbb{E}_{\tilde{\boldsymbol{y}}\sim q} \left[-\log p_{\boldsymbol{\mathcal{X}}|\tilde{\boldsymbol{y}}_s}\left(\boldsymbol{\mathcal{X}}|\tilde{\boldsymbol{y}}_s \right)\right] \\
        & \, \quad = \, \min \left\{\lambda_d \cdot \mathbb{E}_{\boldsymbol{x}\sim p\left(\boldsymbol{x}\right)} \mathbb{E}_{\tilde{\boldsymbol{y}}\sim q}
        \mathbb{E}_{\tilde{\boldsymbol{x}}_j \in \tilde{\boldsymbol{\mathcal{X}}} |\tilde{\boldsymbol{y}}_s} \left[d\left(\boldsymbol{x}, \tilde{\boldsymbol{x}}_i\right)\right] + \text{const}\right\}
        + \left\{\lambda_p \cdot \mathbb{E}_{\tilde{\boldsymbol{y}}\sim q}\mathbb{E}_{\tilde{\boldsymbol{x}}_j \in \tilde{\boldsymbol{\mathcal{X}}} |\tilde{\boldsymbol{y}}_s} D_{\text{KL}}\left[p_{\boldsymbol{x}} || p_{\tilde{\boldsymbol{x}}_j}\right]\right\} + \text{const} \\
        & \quad \Leftrightarrow \, \min \mathbb{E}_{\tilde{\boldsymbol{y}}\sim q} \mathbb{E}_{\tilde{\boldsymbol{x}}_i \in \tilde{\boldsymbol{\mathcal{X}}} |\tilde{\boldsymbol{y}}_s}
        \left\{\lambda_d \cdot \mathbb{E}_{\boldsymbol{x} \sim p\left(\boldsymbol{x}\right)}\left[d\left(\boldsymbol{x}, \tilde{\boldsymbol{x}}_i\right)\right] + \lambda_p \cdot  D_{\text{KL}}\left[p_{\boldsymbol{x}} || p_{\tilde{\boldsymbol{x}}_i}\right] \right\}.
    \end{aligned}
    \end{equation}
\end{proof}

\subsection{The Proof of \cref{SICtheorem}}
\label{section_A_2}

\renewcommand{\thetheorem}{3.3}
\begin{theorem}
    For an image source  $\boldsymbol{x} \sim p\left(\boldsymbol{x}\right)$ together with its bounded expected distortion $\mathbb{E}_{\boldsymbol{x} \sim p\left(\boldsymbol{x}\right)} \mathbb{E}_{\hat{\boldsymbol{x}}_i \in \hat{\boldsymbol{\mathcal{X}}} |\boldsymbol{\hat{y}}_s} \left[d\left(\boldsymbol{x}, \hat{\boldsymbol{x}}_i\right)\right]$ and expected KL divergence $\mathbb{E}_{\hat{\boldsymbol{x}}_i\in \hat{\boldsymbol{\mathcal{X}}} |\boldsymbol{\hat{y}}_s} D_{\text{KL}} \left[p_{\boldsymbol{x}} || p_{\hat{\boldsymbol{x}}_i}\right]$, the minimum achievable rate of perceptual image compression is
    \begin{equation}\label{SVIgoal_inAppendix}
    \begin{aligned}
        R\left(\boldsymbol{\mathcal{X}}\right) = & \min_{p(\hat{\boldsymbol{\mathcal{X}}}|\boldsymbol{x})} I\left(\boldsymbol{X}; \hat{\mathring{\boldsymbol{X}}}\right) \\
        & \mathrm{s}.\mathrm{t}. \quad \mathbb{E}_{\boldsymbol{x} \sim p\left(\boldsymbol{x}\right)}  \mathbb{E}_{\hat{\boldsymbol{x}}_i \in \hat{\boldsymbol{\mathcal{X}}} |\boldsymbol{\hat{y}}_s} \left[d\left(\boldsymbol{x}, \hat{\boldsymbol{x}}_i\right)\right] \leq D, \\
        & \quad\quad\,\, \mathbb{E}_{\hat{\boldsymbol{x}}_i \in \hat{\boldsymbol{\mathcal{X}}} |\boldsymbol{\hat{y}}_s} D_{\text{KL}} \left[p_{\boldsymbol{x}} || p_{\hat{\boldsymbol{x}}_i}\right] \leq P,
    \end{aligned}
    \end{equation}
    where $ I\left(\boldsymbol{X}; \hat{\mathring{\boldsymbol{X}}}\right) = H_s\left(\hat{\mathring{\boldsymbol{X}}}\right) - H_s\left(\hat{\mathring{\boldsymbol{X}}}|\boldsymbol{X}\right)$ with semantic variable $\hat{\mathring{\boldsymbol{X}}}$ corresponds to the reconstructed synset $\hat{\boldsymbol{\mathcal{X}}}$.
\end{theorem}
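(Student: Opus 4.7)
The plan is to apply the proposed synonymous variational inference method and convert its objective step by step into the constrained optimization stated in \eqref{SVIgoal_inAppendix}. First, I would start from the expected partial semantic KL divergence $\mathbb{E}_{\boldsymbol{x}\sim p(\boldsymbol{x})} D_{\text{KL},s}[q||p_{\tilde{\boldsymbol{y}}_s|\boldsymbol{\mathcal{X}}}]$ and expand it via the definition of the partial semantic KL divergence together with Bayes' rule at the semantic level. This produces, inside the expectation, the three terms $\log q(\tilde{\boldsymbol{y}}|\boldsymbol{x})$, $-\log p_{\boldsymbol{\mathcal{X}}|\tilde{\boldsymbol{y}}_s}(\boldsymbol{\mathcal{X}}|\tilde{\boldsymbol{y}}_s)$, and $-\log p_{\tilde{\boldsymbol{y}}_s}(\tilde{\boldsymbol{y}}_s)$, plus an additive constant that depends only on $\boldsymbol{x}$ and the ideal synset $\boldsymbol{\mathcal{X}}$.

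Next, I would dispatch the easy pieces. Under the standard uniform-density-on-the-unit-interval assumption centered on $\boldsymbol{y}$, the $\log q$ term vanishes, and the additive constant can be discarded since it plays no role in the optimization. For the second term, I would invoke \cref{ENLSL} directly, which rewrites the expected negative log synonymous likelihood as a weighted expected distortion $\lambda_d \cdot \mathbb{E}_{\boldsymbol{x}}\mathbb{E}_{\tilde{\boldsymbol{y}}}\mathbb{E}_{\tilde{\boldsymbol{x}}_i}[d(\boldsymbol{x},\tilde{\boldsymbol{x}}_i)]$ plus a weighted expected KL divergence $\lambda_p \cdot \mathbb{E}_{\tilde{\boldsymbol{y}}}\mathbb{E}_{\tilde{\boldsymbol{x}}_i} D_{\text{KL}}[p_{\boldsymbol{x}}||p_{\tilde{\boldsymbol{x}}_i}]$, absorbing the remaining constants.

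For the third term, the key task is to identify the expected negative log-prior of the synonymous representation, $\mathbb{E}_{\boldsymbol{x}\sim p(\boldsymbol{x})}\mathbb{E}_{\tilde{\boldsymbol{y}}\sim q}[-\log p_{\tilde{\boldsymbol{y}}_s}(\tilde{\boldsymbol{y}}_s)]$, with the semantic mutual information $I(\boldsymbol{X};\hat{\mathring{\boldsymbol{X}}}) = H_s(\hat{\mathring{\boldsymbol{X}}}) - H_s(\hat{\mathring{\boldsymbol{X}}}|\boldsymbol{X})$. The plan here is to exploit the correspondence $\tilde{\boldsymbol{y}}_s \leftrightarrow \hat{\boldsymbol{\mathcal{Y}}} \leftrightarrow \hat{\boldsymbol{\mathcal{X}}} \leftrightarrow \hat{\mathring{\boldsymbol{X}}}$ illustrated in \cref{figure2}: each quantized synonymous code indexes a latent synset whose reconstructed image synset is a realization of the semantic variable $\hat{\mathring{\boldsymbol{X}}}$, whose probability is defined in \cite{niu2024mathematical} as the sum (integral) of the sample probabilities in that synset. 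Under this correspondence the expected Shannon code length for $\hat{\boldsymbol{y}}_s$ coincides with the semantic entropy $H_s(\hat{\mathring{\boldsymbol{X}}})$, and the portion tied to $\boldsymbol{x}$ gives the semantic conditional entropy, producing the desired $I(\boldsymbol{X};\hat{\mathring{\boldsymbol{X}}})$.

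Combining all three pieces yields the weighted functional \eqref{SICcost}; interpreting $\lambda_d$ and $\lambda_p$ as Lagrange multipliers for the bounded-distortion and bounded-E-KLD constraints then converts this unconstrained form into the constrained minimization of $I(\boldsymbol{X};\hat{\mathring{\boldsymbol{X}}})$ over $p(\hat{\boldsymbol{\mathcal{X}}}|\boldsymbol{x})$ with the two inequality constraints, which is exactly \eqref{SVIgoal_inAppendix}. The main obstacle I expect is the rigorous identification of the negative log-prior rate with the semantic mutual information: unlike the classical variational argument where the corresponding term is just $I(\boldsymbol{X};\tilde{\boldsymbol{Y}})$, here one must carefully account for the quantization-induced equivalence between $\hat{\boldsymbol{y}}_s$ and the synset $\hat{\boldsymbol{\mathcal{X}}}$ and invoke the synset-entropy identities of \cite{niu2024mathematical}; everything else reduces to bookkeeping already handled by \cref{ENLSL} and the standard variational argument.
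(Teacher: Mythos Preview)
Your proposal is correct and follows essentially the same route as the paper's proof: expand the partial semantic KL divergence via Bayes' rule into the four terms, kill the $\log q$ term by the uniform-noise assumption, invoke \cref{ENLSL} for the synonymous likelihood term, identify the rate term with $I(\boldsymbol{X};\hat{\mathring{\boldsymbol{X}}})$ through the chain $\tilde{\boldsymbol{y}}_s \leftrightarrow \tilde{\boldsymbol{\mathcal{Y}}} \leftrightarrow \tilde{\boldsymbol{\mathcal{X}}}$ and the semantic-entropy definitions of \cite{niu2024mathematical}, and finally reinterpret the weights as Lagrange multipliers. The only sharpening worth noting is that in the paper the conditional semantic entropy $H_s(\hat{\mathring{\boldsymbol{X}}}|\boldsymbol{X})$ is argued to be exactly zero (because the encoder is deterministic up to the unit-interval uniform noise), so the rate term equals $H_s(\hat{\mathring{\boldsymbol{X}}})$ outright rather than emerging as a difference of two nonzero pieces; your phrase ``the portion tied to $\boldsymbol{x}$ gives the semantic conditional entropy'' should be read in that light.
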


\begin{proof}
    The key point to prove this problem is to consider an ideal scenario, in which there are multiple image samples $\boldsymbol{x}_i$ at the source with similar perceptual similarities to the original image $\boldsymbol{x}$. In this scenario, each sample can be assumed to be potentially generated by the ideal perceptual image decoder. To this end, it is necessary to assume the existence of an ideal synset $\boldsymbol{\mathcal{X}}$ at the source which can encompass these samples (including the original image $\boldsymbol{x}$). These samples must share the same synonymous representations which can be represented as $\tilde{\boldsymbol{y}}_s$ in the latent space, while the unique detailed features of each sample should be represented as $\tilde{\boldsymbol{y}}_{\epsilon}$ in the latent space.

    Based on this assumption, there should be an ideal image codec, in which the encoder ensures that any sample within the ideal synset $\boldsymbol{\mathcal{X}}$ obtains the same synonymous representation $\tilde{\boldsymbol{y}}_s$ after encoding, while the decoder, upon receiving only the synonymous representation $\tilde{\boldsymbol{y}}_s$, can generate different samples $\boldsymbol{x}_i$ within the ideal synset $\boldsymbol{\mathcal{X}}$ through sampling $\hat{\boldsymbol{y}}_{\epsilon, j}$. The optimization process leading to this ideal encoder-decoder pair can be modeled as a variational auto-encoder model, which can be achieved by minimizing the partial semantic KL divergence based on the idea of the proposed synonymous variational inference (SVI), that is,
    \begin{equation}\label{SVIderivation}
    \begin{aligned}
        \mathbb{E}_{\boldsymbol{x}\sim p\left(\boldsymbol{x}\right)}  D_{\text{KL},s}\left[q||p_{\tilde{\boldsymbol{y}}_s | \boldsymbol{\mathcal{X}}}\right] 
        & = \mathbb{E}_{\boldsymbol{x}\sim p\left(\boldsymbol{x}\right)} \mathbb{E}_{\tilde{\boldsymbol{y}}|\boldsymbol{x} \sim q} 
        \left[
            \log \frac{q\left(\tilde{\boldsymbol{y}}|\boldsymbol{x}\right)}{p_{\tilde{\boldsymbol{y}}_s | \boldsymbol{\mathcal{X}}} \left(
                \tilde{\boldsymbol{y}}_s | \boldsymbol{\mathcal{X}}
            \right)}
        \right] \\
        & = \mathbb{E}_{\boldsymbol{x}\sim p\left(\boldsymbol{x}\right)} \mathbb{E}_{\tilde{\boldsymbol{y}}|\boldsymbol{x} \sim q} 
        \left[
            \log \frac{q\left(\tilde{\boldsymbol{y}}|\boldsymbol{x}\right)}{p_{\boldsymbol{\mathcal{X}}, \tilde{\boldsymbol{y}}_s} \left(
                \boldsymbol{\mathcal{X}},\tilde{\boldsymbol{y}}_s  
            \right) / p_{\boldsymbol{\mathcal{X}}}\left(\boldsymbol{\mathcal{X}}\right)}
        \right] \\
        & = \mathbb{E}_{\boldsymbol{x}\sim p\left(\boldsymbol{x}\right)} \mathbb{E}_{\tilde{\boldsymbol{y}}|\boldsymbol{x} \sim q} 
        \left[
            \log \frac{q\left(\tilde{\boldsymbol{y}}|\boldsymbol{x}\right)}{p_{\boldsymbol{\mathcal{X}} | \tilde{\boldsymbol{y}}_s} \left(
                \boldsymbol{\mathcal{X}} | \tilde{\boldsymbol{y}}_s  
            \right) \cdot p_{\tilde{\boldsymbol{y}}_s}\left(\tilde{\boldsymbol{y}}_s\right) / p_{\boldsymbol{\mathcal{X}}}\left(\boldsymbol{\mathcal{X}}\right)}
        \right] \\
        & = \mathbb{E}_{\boldsymbol{x}\sim p\left(\boldsymbol{x}\right)} \mathbb{E}_{\tilde{\boldsymbol{y}}|\boldsymbol{x} \sim q} 
        \left[
            \log q\left(\tilde{\boldsymbol{y}}|\boldsymbol{x}\right)
            - \log p_{\boldsymbol{\mathcal{X}} | \tilde{\boldsymbol{y}}_s} \left(\boldsymbol{\mathcal{X}} | \tilde{\boldsymbol{y}}_s\right)
            - \log p_{\tilde{\boldsymbol{y}}_s}\left(\tilde{\boldsymbol{y}}_s\right)
            + \log p_{\boldsymbol{\mathcal{X}}}\left(\boldsymbol{\mathcal{X}}\right)
        \right]
    \end{aligned}
    \end{equation}

    Next, We examine the result of \eqref{SVIderivation} term by term.

    \begin{enumerate}
        \item For the first term $\log q\left(\tilde{\boldsymbol{y}}|\boldsymbol{x}\right)$, since the noisy latent representation $\tilde{\boldsymbol{y}}$ can be separated into two parts, i.e., a synonymous representation $\tilde{\boldsymbol{y}}_s$ and a detailed representation $\tilde{\boldsymbol{y}}_{\epsilon}$, it can be expanded to
        \begin{equation}
            \log q\left(\tilde{\boldsymbol{y}}|\boldsymbol{x}; \boldsymbol{\phi}_g\right) 
            = \log q\left(\tilde{\boldsymbol{y}}_s, \tilde{\boldsymbol{y}}_{\epsilon}|\boldsymbol{x}; \boldsymbol{\phi}_g\right)
            = \log q\left(\tilde{\boldsymbol{y}}_s|\boldsymbol{x}; \boldsymbol{\phi}_g\right) + \log q\left(\tilde{\boldsymbol{y}}_{\epsilon}|\boldsymbol{x}, \tilde{\boldsymbol{y}}_s; \boldsymbol{\phi}_g\right).
        \end{equation}
        Since both $\tilde{\boldsymbol{y}}_s$ and $\tilde{\boldsymbol{y}}_{\epsilon}$ are determined based on a parametric inference model $g_a\left(\boldsymbol{x}; \boldsymbol{\phi}_g\right)$ and uniform density on the unit interval centered on $\boldsymbol{y}_s$ and $\boldsymbol{y}_{\epsilon}$, this term equals a constant $0$.

        \item For the second term $- \log p_{\boldsymbol{\mathcal{X}} | \tilde{\boldsymbol{y}}_s} \left(\boldsymbol{\mathcal{X}} | \tilde{\boldsymbol{y}}_s\right)$, with the outside expectations $\mathbb{E}_{\boldsymbol{x}\sim p\left(\boldsymbol{x}\right)} \mathbb{E}_{\tilde{\boldsymbol{y}}|\boldsymbol{x} \sim q}$, the minimization of this term can be equivalent to minimizing an weighted expected distortion $\mathbb{E}_{\boldsymbol{x} \sim p\left(\boldsymbol{x}\right)} \mathbb{E}_{\tilde{\boldsymbol{y}}\sim q} \mathbb{E}_{\tilde{\boldsymbol{x}}_i\in \tilde{\boldsymbol{\mathcal{X}}} |\tilde{\boldsymbol{y}}_s} \left[d\left(\boldsymbol{x}, \tilde{\boldsymbol{x}}_i\right)\right]$ plus an weighted E-KLD term $\mathbb{E}_{\tilde{\boldsymbol{y}}\sim q} \mathbb{E}_{\tilde{\boldsymbol{x}}_i\in \tilde{\boldsymbol{\mathcal{X}}} |\tilde{\boldsymbol{y}}_s} D_{\text{KL}} \left[p_{\boldsymbol{x}} || p_{\tilde{\boldsymbol{x}}_i}\right]$, 
        by \cref{ENLSL}.

        \item For the third term $- \log p_{\tilde{\boldsymbol{y}}_s}\left(\tilde{\boldsymbol{y}}_s\right)$, it is the coding rate of the synonymous representations. With the outside expectations, it is also equivalent to the semantic entropy of semantic variable $\tilde{\mathring{Y}}$ corresponding to a latent synset $\tilde{\boldsymbol{\mathcal{Y}}}$. This can be derived by
        \begin{equation}\label{latentSemanticEntropy}
        \begin{aligned}
            \mathbb{E}_{\boldsymbol{x}\sim p\left(\boldsymbol{x}\right)} \mathbb{E}_{\tilde{\boldsymbol{y}}|\boldsymbol{x} \sim q} \left[
                - \log p_{\tilde{\boldsymbol{y}}_s}\left(\tilde{\boldsymbol{y}}_s\right)
            \right] 
            & = \mathbb{E}_{\boldsymbol{x}\sim p\left(\boldsymbol{x}\right)} \mathbb{E}_{\tilde{\boldsymbol{y}}|\boldsymbol{x} \sim q} \left[
                - \log \int_{\tilde{\boldsymbol{y}}_{\epsilon}} p_{\tilde{\boldsymbol{y}}_s}\left(\tilde{\boldsymbol{y}}_s, \tilde{\boldsymbol{y}}_{\epsilon}\right) d \tilde{\boldsymbol{y}}_{\epsilon}
            \right] \\
            & = \mathbb{E}_{\boldsymbol{x}\sim p\left(\boldsymbol{x}\right)} \mathbb{E}_{\tilde{\boldsymbol{y}}|\boldsymbol{x} \sim q} \left[
                - \log \int_{\tilde{\boldsymbol{y}} \in \tilde{\boldsymbol{\mathcal{Y}}}} p_{\tilde{\boldsymbol{y}}}\left(\tilde{\boldsymbol{y}}\right) d \tilde{\boldsymbol{y}} 
            \right] \\
            & \overset{(\mathrm{a})}{=} H_s\left(\tilde{\mathring{\boldsymbol{Y}}}\right),
        \end{aligned}
        \end{equation}
        in which $(\mathrm{a})$ is achieved based on the definition of semantic entropy in Niu and Zhang's paper \yrcite{niu2024mathematical}, with the help of the weak law of large numbers. Additionally, considering the determined codec, the semantic entropy $H_s\left(\tilde{\mathring{\boldsymbol{Y}}}\right)$ is equivalent to single-side semantic mutual information, i.e.,
        \begin{equation}\label{mutualInformationEquations}
            H_s\left(\tilde{\mathring{\boldsymbol{Y}}}\right) \overset{(\mathrm{a})}{=} H_s\left(\tilde{\mathring{\boldsymbol{X}}}\right) \overset{(\mathrm{b})}{=} H_s\left(\tilde{\mathring{\boldsymbol{X}}}\right) - H_s\left(\tilde{\mathring{\boldsymbol{X}}}|\boldsymbol{X}\right)
            \overset{(\mathrm{c})}{=} I\left(\boldsymbol{X};\tilde{\mathring{\boldsymbol{X}}}\right),
        \end{equation}
        where $(\mathrm{a})$ is by giving a determined decoder to map the latent synset $\tilde{\boldsymbol{\mathcal{Y}}}$ to the reconstructed synset $\tilde{\boldsymbol{\mathcal{X}}}$; $(\mathrm{b})$ is achieved by a determined encoder, which makes $\log q\left(\tilde{\boldsymbol{\mathcal{X}}}|\boldsymbol{x}\right) = \log q\left(\tilde{\boldsymbol{y}}_s|\boldsymbol{x}\right) = 0$ with a uniform density on the unit interval centered on $\boldsymbol{y}_s$; $(\mathrm{c})$ is by the definition of this single-side semantic mutual information.

        \item For the fourth term $- \log p_{\boldsymbol{\mathcal{X}}}\left(\boldsymbol{\mathcal{X}}\right)$, with the determination of the ideal synset $\boldsymbol{\mathcal{X}}$ for the original image $\boldsymbol{x}$, it is a constant that cannot be optimized.
    \end{enumerate}

    To summarize, the minimization of \eqref{SVIderivation} is equivalent to the following optimization directions
    \begin{equation}
    \begin{aligned}
        \mathcal{L}_{{\boldsymbol{\mathcal{X}}}} &= \lambda_d \cdot \mathbb{E}_{x\sim p\left(x\right)} \mathbb{E}_{\tilde{\boldsymbol{y}}\sim q} \mathbb{E}_{\tilde{\boldsymbol{x}}_i \in \tilde{\boldsymbol{\mathcal{X}}} |\tilde{\boldsymbol{y}}_s} \left[d\left(\boldsymbol{x}, \tilde{\boldsymbol{x}}_i\right)\right] + \lambda_p \cdot \mathbb{E}_{\tilde{\boldsymbol{y}}\sim q} \mathbb{E}_{\tilde{\boldsymbol{x}}_i \in \tilde{\boldsymbol{\mathcal{X}}} |\tilde{\boldsymbol{y}}_s} D_{\text{KL}}\left[p_{\boldsymbol{x}} || p_{\tilde{\boldsymbol{x}}_i}\right] \\
        & \quad\quad + \mathbb{E}_{x\sim p\left(x\right)} \mathbb{E}_{\tilde{\boldsymbol{y}}\sim q}\left[- \log p_{\tilde{\boldsymbol{y}}_s}\left(\tilde{\boldsymbol{y}}_s\right)\right].
    \end{aligned}
    \end{equation}
    At the convergence point, when the minimum value of the loss function is achieved, the model effectively minimizes the single-side semantic mutual information with the quantized form of $\hat{\mathring{\boldsymbol{X}}}$ under the constraints of bounded quantized expected distortion and E-KLD, as follows in \eqref{SVIgoal_inAppendix}. The weights $\lambda_d$ and $\lambda_p$ can be considered as Lagrange multipliers to the rate term. So we conclude this theorem.
\end{proof}

From the proof of \cref{ENLSL} and \cref{SICtheorem} above, we can see that the differences between the proposed SVI and conventional variational inference in guiding image compression tasks mainly include:
\begin{itemize}
    \item \emph{\textbf{The Analysis Method for the Likelihood Term:}} Conventional variational inference treats the likelihood term $-\log \left(\boldsymbol{x}|\tilde{\boldsymbol{y}}\right)$ in usual LIC methods as a weighted distortion \cite{balle2016end, blau2018perception}, primarily because the synonymous relationship emphasized in this paper is not incorporated in these works. 

    In our proposed SVI, since the synonymous relationship is considered, the likelihood term primarily focuses on the mapping relationship between the latent synset and the ideal synset, which is formed by $-\log p_{\boldsymbol{\mathcal{X}|\tilde{\boldsymbol{y}}_s}}\left(\boldsymbol{\mathcal{X}|\tilde{\boldsymbol{y}}_s}\right)$. \cref{ENLSL} states that the minimization of the expected synonymous likelihood term is equivalent to the minimization of a tradeoff function with weighted expected distortion and weighted E-KLD term. 
    
    Additionally, the analytical process of \cref{ENLSL} emphasizes that \textbf{the fundamental reason for the expected KL divergence term's existence is due to the consideration of the ideal synset $\boldsymbol{\mathcal{X}}$ centered by the original image $\boldsymbol{x}$}. Once the ideal synset is unconsidered (equal to the ideal synset only contains the original image, i.e., $\boldsymbol{\mathcal{X}} = \left\{\boldsymbol{x}\right\}$), the expected KL divergence term will disappear in the analysis result, which means the degradation to solely the weighted distortion term. Therefore, we can give the following statement: 
    
    \textbf{To the best knowledge of our authors, our method is the first work that can theoretically explain the fundamental reason for the divergence measure's existence in perceptual image compression, which stems from considering the ideal synset as the reconstruction reference.}
    
    \item \emph{\textbf{The Considerations on Coding Rates:}} Conventional variational inference considers performing entropy coding for all latent representations, whose coding rate can be represented as $- \log p\left(\tilde{\boldsymbol{y}}\right)$.

    In our proposed SVI, only a partial of the latent representation is required to be encoded, whose coding rate is expressed as $- \log p\left(\tilde{\boldsymbol{y}}_s\right)$. We emphasize that the other partial, i.e., the detailed representation, is not required to be coded: it can also be sampled from the latent synset $\tilde{\boldsymbol{\mathcal{Y}}}$ by the decoder, which is not necessary to keep consistency with it at the encoder end. This stems from the change in optimization direction from the original sample-oriented to the ideal synset-oriented, which allows the decoder to generate any samples that can exist in the ideal synset. 

    Since the detailed representation can be obtained by sampling at the receiving end and is allowed to contribute effective information for image reconstruction, encoding only the synonymous representation part theoretically improves coding efficiency compared to traditional methods. From the perspective of mutual information, this can be expressed as $I\Big(\boldsymbol{X};\hat{\mathring{\boldsymbol{X}}}\Big) \leq I\Big(\boldsymbol{X};\hat{\boldsymbol{X}}\Big)$, in which the condition for the inequality to hold as equality is that the reconstruction of the synonym set is restricted to producing only one sample, meaning the decoder is not allowed to sample $\hat{\boldsymbol{y}}_{\epsilon, j}$, nor is it permitted to use $\hat{\boldsymbol{y}}_{\epsilon, j}$ as input to the generator.
    
\end{itemize}

\subsection{Discussions on the Relationships with Existing Image Compression Theories}
\label{section_A_3}

From \cref{ENLSL}, \cref{SICtheorem}, and their respective proof processes, it is evident that the optimization objective derived in this paper through synonymous variational inference is compatible with the optimization objectives of existing image compression theories:

\begin{itemize}
    \item \emph{\textbf{Compatibility with Existing Rate-Distortion-Perception Tradeoff:}} According to the triple tradeoff shown as \eqref{SVIgoal_inAppendix}, when the reconstructed synset is not considered (equal to the reconstructed synset contains only one sample, represented as $\hat{\boldsymbol{\mathcal{X}}} = \left\{\hat{\boldsymbol{x}}\right\}$), the optimization objective will be degraded into the existing rate-distortion-perception tradeoff. This relationship can be expressed by
    \begin{equation}
        \begin{matrix}
	    \begin{aligned}
	       R\left(\boldsymbol{\mathcal{X}}\right) = & \min_{p(\hat{\boldsymbol{\mathcal{X}}}|\boldsymbol{x})} I\left(\boldsymbol{X}; \hat{\mathring{\boldsymbol{X}}}\right) \\
            & \mathrm{s}.\mathrm{t}. \quad \mathbb{E}_{\boldsymbol{x} \sim p\left(\boldsymbol{x}\right)}  \mathbb{E}_{\hat{\boldsymbol{x}}_i \in \hat{\boldsymbol{\mathcal{X}}} |\boldsymbol{\hat{y}}_s} \left[d\left(\boldsymbol{x}, \hat{\boldsymbol{x}}_i\right)\right] \leq D, \\
            & \quad\quad\,\, \mathbb{E}_{\hat{\boldsymbol{x}}_i \in \hat{\boldsymbol{\mathcal{X}}} |\boldsymbol{\hat{y}}_s} D_{\text{KL}} \left[p_{\boldsymbol{x}} || p_{\hat{\boldsymbol{x}}_i}\right] \leq P,
        \end{aligned}&		
        \xRightarrow{\hat{\boldsymbol{\mathcal{X}}}=\left\{ \hat{\boldsymbol{x}} \right\}}&		
        \begin{aligned}
	       R\left( D,P \right) &=\min_{p(\hat{\boldsymbol{x}}|\boldsymbol{x})} I\left( \boldsymbol{X};\hat{\boldsymbol{X}} \right)\\
	       &\mathrm{s}.\mathrm{t}.\quad \mathbb{E} _{\boldsymbol{x}\sim p\left( \boldsymbol{x} \right)} \left[ d\left( \boldsymbol{x},\hat{\boldsymbol{x}} \right) \right] \le D,\\
	       &\quad \quad \,\, D_{\mathrm{KL}}\left[ p_{\boldsymbol{x}}||p_{\hat{\boldsymbol{x}}} \right] \le P,
        \end{aligned}
        \end{matrix}
    \end{equation}
    in which the KL divergence $D_{\mathrm{KL}}\left[ p_{\boldsymbol{x}}||p_{\hat{\boldsymbol{x}}} \right]$ is a typical measure of the divergence between distributions, i.e., $d_p\left(p_{\boldsymbol{x}},p_{\hat{\boldsymbol{x}}}\right)$ in \eqref{RDPbound}, as stated in \cite{blau2019rethinking}. In view of this, the existing rate-distortion-perception tradeoff \eqref{RDPcost} is a special case of \eqref{SICcost} when there is only one sample $\hat{\boldsymbol{x}}$ in the reconstructed synset $\hat{\boldsymbol{\mathcal{X}}}$.
    
    \item \emph{\textbf{Compatibility with Traditional Rate-Distortion Tradeoff:}} Based on the analytical process of \cref{ENLSL}, when the ideal synset is not considered (equal to the ideal synset contains only the original image, represented as $\boldsymbol{\mathcal{X}} = \left\{\boldsymbol{x}\right\}$), the expected synonymous likelihood term will be degraded into the usual likelihood term, i.e.,
    \begin{equation}
        \mathbb{E}_{\boldsymbol{x}\sim p\left(\boldsymbol{x}\right)} \mathbb{E}_{\tilde{\boldsymbol{y}}\sim q} 
        \left[-\log p_{\boldsymbol{\mathcal{X}|\tilde{\boldsymbol{y}}_s}}\left(\boldsymbol{\mathcal{X}|\tilde{\boldsymbol{y}}_s}\right)\right]
        \xRightarrow{\boldsymbol{\mathcal{X}}=\left\{\boldsymbol{x} \right\}} 
        \mathbb{E}_{\boldsymbol{x}\sim p\left(\boldsymbol{x}\right)}
        \left[- \log p_{\boldsymbol{x}|\tilde{\boldsymbol{y}}}\left(\boldsymbol{x}|\tilde{\boldsymbol{y}}\right)\right],
    \end{equation}
    in which the minimization of the usual likelihood term $\mathbb{E}_{\boldsymbol{x}\sim p\left(\boldsymbol{x}\right)}
    \mathbb{E}_{\tilde{\boldsymbol{y}}\sim q}
    \left[- \log p_{\boldsymbol{x}|\tilde{\boldsymbol{y}}}\left(\boldsymbol{x}|\tilde{\boldsymbol{y}}\right)\right]$ is equivalent to the minimization of a weighted distortion $\lambda \mathbb{E}_{\boldsymbol{x}\sim p\left(\boldsymbol{x}\right)}
    \mathbb{E}_{\tilde{\boldsymbol{y}}\sim q} \left[d\left(\boldsymbol{x}, \tilde{\boldsymbol{x}}\right)\right]$, makes the existence of the KL divergence term unnecessary. Additionally, $\boldsymbol{\mathcal{X}} = \left\{\boldsymbol{x}\right\}$ will also implicitly makes the existence of $\hat{\boldsymbol{y}}_{\epsilon, j}$ unnecessary, which results in $\hat{\boldsymbol{\mathcal{X}}} = \left\{\hat{\boldsymbol{x}}\right\}$. Therefore, the relationship with the traditional rate-distortion tradeoff can be represented by
    \begin{equation}
        \begin{matrix}
	    \begin{aligned}
	       R\left(\boldsymbol{\mathcal{X}}\right) = & \min_{p(\hat{\boldsymbol{\mathcal{X}}}|\boldsymbol{x})} I\left(\boldsymbol{X}; \hat{\mathring{\boldsymbol{X}}}\right) \\
            & \mathrm{s}.\mathrm{t}. \quad \mathbb{E}_{\boldsymbol{x} \sim p\left(\boldsymbol{x}\right)}  \mathbb{E}_{\hat{\boldsymbol{x}}_i \in \hat{\boldsymbol{\mathcal{X}}} |\boldsymbol{\hat{y}}_s} \left[d\left(\boldsymbol{x}, \hat{\boldsymbol{x}}_i\right)\right] \leq D, \\
            & \quad\quad\,\, \mathbb{E}_{\hat{\boldsymbol{x}}_i \in \hat{\boldsymbol{\mathcal{X}}} |\boldsymbol{\hat{y}}_s} D_{\text{KL}} \left[p_{\boldsymbol{x}} || p_{\hat{\boldsymbol{x}}_i}\right] \leq P,
        \end{aligned}&		
        \xRightarrow[\left(\hat{\boldsymbol{\mathcal{X}}} = \left\{\hat{\boldsymbol{x}}\right\}\right)]{{\boldsymbol{\mathcal{X}}}=\left\{{\boldsymbol{x}} \right\}}&		
        \begin{aligned}
	       R\left( D\right) &=\min_{p(\hat{\boldsymbol{x}}|\boldsymbol{x})} I\left( \boldsymbol{X};\hat{\boldsymbol{X}} \right)\\
	       &\mathrm{s}.\mathrm{t}.\quad \mathbb{E} _{\boldsymbol{x}\sim p\left( \boldsymbol{x} \right)} \left[ d\left( \boldsymbol{x},\hat{\boldsymbol{x}} \right) \right] \le D.
        \end{aligned}
        \end{matrix}
    \end{equation}
    In view of this, we state that the traditional rate-distortion tradeoff is also a special case of \eqref{SICcost}, where the condition is that the ideal synset $\boldsymbol{\mathcal{X}}$ always contains only a single sample, i.e., the original $\boldsymbol{x}$.
    
\end{itemize}

Therefore, we demonstrate that the foundational theories underlying existing image compression methods can be viewed as special cases of our analysis viewpoint. In other words, our theoretical analysis provides consistent and universal guidance for designing image compression approaches, regardless of whether compression is considered for perceptual quality. 

It should be noted that, although the E-KLD term in the above analysis represents the optimal distribution distance selection based on minimizing the partial semantic KL divergence, accurately calculating the E-KLD between two image sets may be unrealistic due to the complexity of the image source. Therefore, we can refer to existing empirical practices in perceptual image compression, such as using metrics that tend to be subjective, like LPIPS or DISTS, instead of the KL divergence calculation, or using adversarial losses in the GAN training process, such as Wasserstein loss, as a substitute for KL divergence.

\section{Relevant Thoughts on Semantic Information Theory}
\label{section_B}

In this appendix section, we will briefly provide relevant thoughts on semantic information theory based on our analytical results. We emphasize that the semantic information theory referenced here specifically pertains to the synonymity-based semantic information theory, primarily derived from Niu and Zhang's paper \cite{niu2024mathematical}.

\subsection{Relationships with Existing Conclusions in Semantic Information Theory}
\label{section_B_1}

\begin{itemize}
    \item \emph{\textbf{Semantic Entropy:}} Given a semantic variable $\mathring{U}$, whose possible values are the synset $\mathcal{U}_{i_s}=\left\{u_i|i\in \mathcal{N}_{i_s}\right\}$, $, i_s = 1, 2, ..., N$, and $\mathcal{N}_{i_s}$ is the set of ordinal numbers of all syntactic symbols $u_i \in \mathcal{U}_{i_s}$, the semantic entropy of the semantic variable is expressed as:
    \begin{equation}
        H_s\left(\mathring{U}\right) 
        = - \sum_{i_s = 1}^{N} p\left(\mathcal{U}_{i_s}\right) \log p\left(\mathcal{U}_{i_s}\right) 
        = - \sum_{i_s = 1}^{N} \sum_{i \in \mathcal{N}_{i_s}} p\left(u_i\right) \log \left(\sum_{i \in \mathcal{N}_{i_s}} p\left(u_i\right)\right),
    \end{equation}
    in which the probability a single semantic symbol corresponds to the synset $\mathcal{U}_{i_s}$ is the sum of the probability of each syntactic value $u_i, i\in\mathcal{N}_{i_s}$, i.e., $p\left(\mathcal{U}_{i_s}\right) = \sum_{i \in \mathcal{N}_{i_s}} p\left(u_i\right)$.

    Additionally, the semantic source coding theorem in \cite{niu2024mathematical} states that, for a semantic source $\mathring{U}$ with its corresponding syntactic source $U$ and its determined synonymous mapping between these two types of sources (which results in the determined synsets $\mathcal{U}_{i_s}=\left\{u_i|i\in \mathcal{N}_{i_s}\right\}$, $i_s = 1, 2, ..., N$), the achievable coding rate for semantic lossless rate is $R \ge H_s\left(\mathring{U}\right)$ without the necessity to focus on symbol-level accuracy.

    In our work, we point out that the ultimate goal of synonymous variational inference is to construct an ideal synset $\boldsymbol{\mathcal{X}}$ corresponding to the original image $\boldsymbol{x}$ by finding a synonymous mapping rule and encoding the shared latent features of the ideal synset $\hat{\boldsymbol{y}}_s$ as the coding sequences. In the ideal scenario where model training has converged, the reconstructed synset $\hat{\boldsymbol{\mathcal{X}}}$ at the SIC decoder can perfectly overlap with the ideal synset $\boldsymbol{\mathcal{X}}$. Under such conditions, the average coding rate of the synonymous representation $\hat{\boldsymbol{y}}_s$ in the latent space can approach the semantic entropy of the semantic variable corresponding to the ideal synset, i.e.,
    \begin{equation}
        \mathbb{E}_{\boldsymbol{x}\sim p\left(\boldsymbol{x}\right)} \left[- \log p\left(\boldsymbol{\hat{y}}_s\right)\right]
        \overset{(\mathrm{a})}{=} H_s\left(\hat{\mathring{\boldsymbol{Y}}}\right)
        \overset{(\mathrm{b})}{=}
        H_s\left(\hat{\mathring{\boldsymbol{X}}}\right)
        \overset{(\mathrm{c})}{=}
        I\left(\boldsymbol{X}; \hat{\mathring{\boldsymbol{X}}}\right)
        \overset{(\mathrm{d})}{=}
        H_s\left(\mathring{\boldsymbol{X}}\right),
    \end{equation}
    in which the established conditions of $(\mathrm{a}) \sim (\mathrm{c})$ is the same as the conditions in \eqref{latentSemanticEntropy} and \eqref{mutualInformationEquations}, and $(\mathrm{d})$ is achieved by the ideal SIC codec. At this point, the found synonymous mapping rule is determined by the bounded expected distortion and the bounded E-KLD, i.e., $\mathbb{E}_{\boldsymbol{x} \sim p\left(\boldsymbol{x}\right)}  \mathbb{E}_{\hat{\boldsymbol{x}}_i \in \hat{\boldsymbol{\mathcal{X}}} |\boldsymbol{\hat{y}}_s} \left[d\left(\boldsymbol{x}, \hat{\boldsymbol{x}}_i\right)\right] \leq D, \mathbb{E}_{\hat{\boldsymbol{x}}_i\in \hat{\boldsymbol{\mathcal{X}}} |\boldsymbol{\hat{y}}_s} D_{\text{KL}} \left[p_{\boldsymbol{x}} || p_{\hat{\boldsymbol{x}}_i}\right] \leq P$.
    
    \item \emph{\textbf{Down Semantic Mutual Information:}} The down semantic mutual information is a measure defined as
    \begin{equation}
        I_s\left(\mathring{U};\mathring{V}\right) = H_s\left(\mathring{U}\right) + H_s\left(\mathring{V}\right) - H\Big(U, V\Big),
    \end{equation}
    which is proved to be the minimum coding rate in semantic lossy source coding when the ``semantic distortion'' (denoted as $d_s\left(\mathring{X}, \hat{\mathring{X}}\right)$ in concept) satisfying $d_s\left(\mathring{X}, \hat{\mathring{X}}\right) \le D_s$, using an extended joint asymptotic equipartition property (AEP) analysis called \emph{Semantically Joint AEP}.

    In our work, since the ideal synset is not explicitly constructed in practice, directly determining the overlap degree between the reconstructed synset and the ideal synset is challenging. Consequently, the model obtained after training convergence may still function as a semantic lossy coding model. We propose that the coding rate of the SIC model should also serve as an upper bound for the lower semantic mutual information, expressed as:
    \begin{equation}
    \begin{aligned}
        \mathbb{E}_{\boldsymbol{x}\sim p\left(\boldsymbol{x}\right)} \left[- \log p\left(\boldsymbol{\hat{y}}_s\right)\right]
        & \overset{(\mathrm{a})}{=} H_s\left(\hat{\mathring{\boldsymbol{Y}}}\right)
        \overset{(\mathrm{b})}{=}
        H_s\left(\hat{\mathring{\boldsymbol{X}}}\right)
        \overset{(\mathrm{c})}{=}
        H_s\left(\hat{\mathring{\boldsymbol{X}}}\right) - H_s\left(\hat{\mathring{\boldsymbol{X}}} | \boldsymbol{X}\right) \\
        & \overset{(\mathrm{d})}{=}
        H\Big(\boldsymbol{X}\Big) + H_s\left(\hat{\mathring{\boldsymbol{X}}}\right) - H_s\left(\boldsymbol{X}, \hat{\mathring{\boldsymbol{X}}}\right) \\
        & \overset{(\mathrm{e})}{\ge}
        H_s\left(\mathring{\boldsymbol{X}}\right) + H_s\left(\hat{\mathring{\boldsymbol{X}}}\right) - H\left(\boldsymbol{X}, \hat{\boldsymbol{X}}\right) \\
        & = I_s\left(\mathring{\boldsymbol{X}}; \hat{\mathring{\boldsymbol{X}}}\right),
    \end{aligned}
    \end{equation}
    in which the established conditions of $(\mathrm{a}) \sim (\mathrm{c})$ is the same as the conditions in \eqref{latentSemanticEntropy} and \eqref{mutualInformationEquations}, $(\mathrm{d})$ is achieved by the equation $H_s\left(\hat{\mathring{\boldsymbol{X}}} | \boldsymbol{X}\right) = H_s\left(\boldsymbol{X}, \hat{\mathring{\boldsymbol{X}}}\right) - H\Big(\boldsymbol{X}\Big)$, and $(\mathrm{e})$ can be proved by a scaling process according to $H\Big(\boldsymbol{X}\Big) \ge H_s\left(\mathring{\boldsymbol{X}}\right)$ and $H_s\left(\boldsymbol{X}, \hat{\mathring{\boldsymbol{X}}}\right) \le H\left(\boldsymbol{X}, \hat{\boldsymbol{{X}}}\right)$, which follows the fundamental properties of semantic variables stated in \cite{niu2024mathematical}.
\end{itemize}

\begin{figure}[t]
\vskip 0.2in
\begin{center}
\centerline{\includegraphics[width=0.96\textwidth]{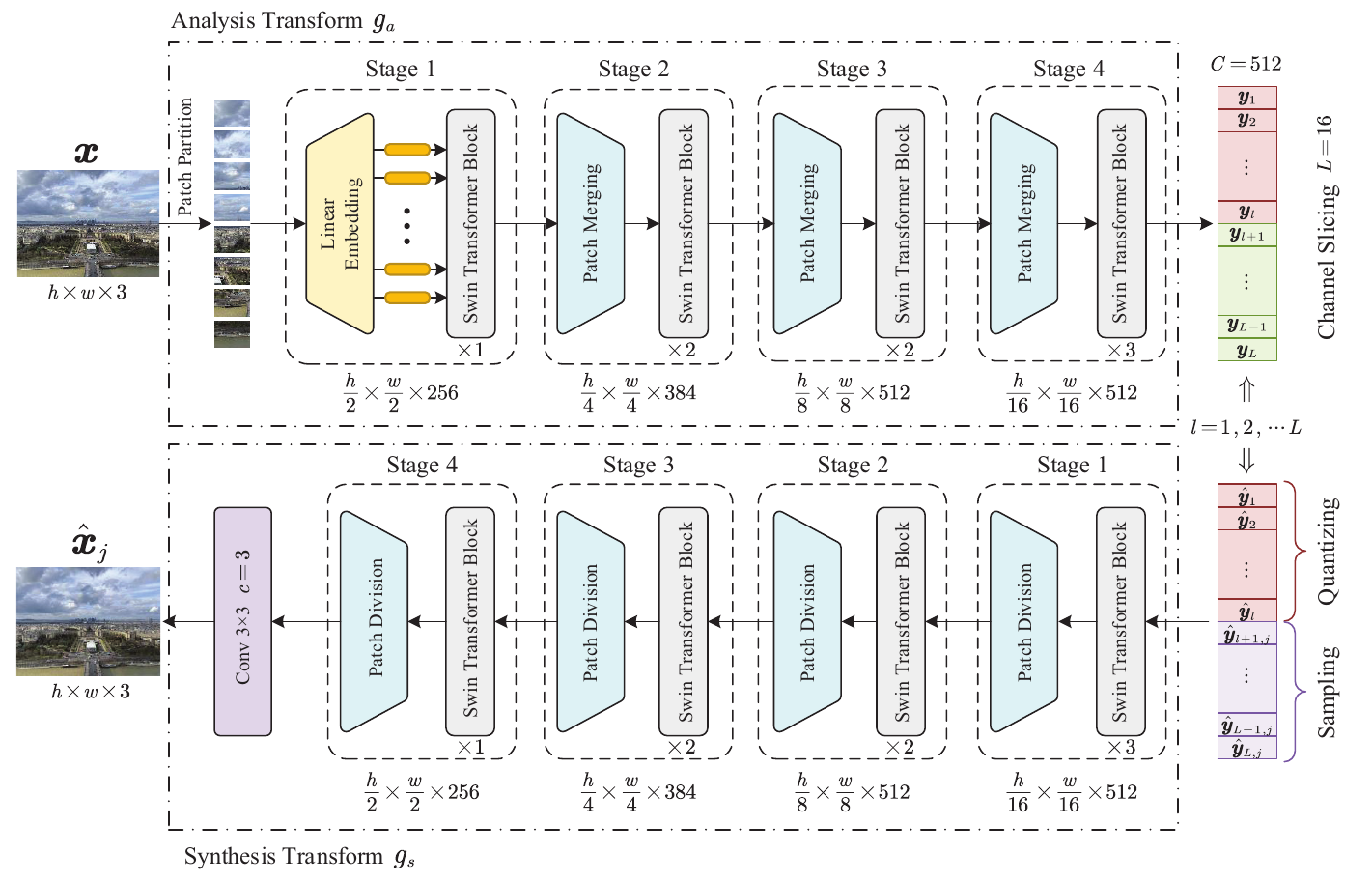}}
\caption{The implementation details of the auto-encoder framework designed for the progressive SIC model. }
\label{figureA1}
\end{center}
\vskip -0.2in
\end{figure}

\begin{figure}
\vskip 0.2in
\begin{center}
\centerline{\includegraphics[width=0.80\textwidth]{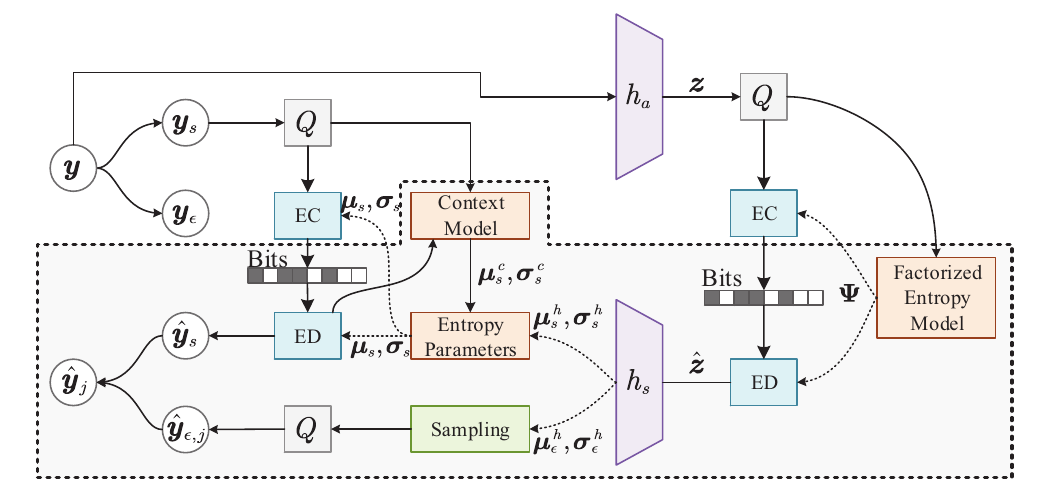}}
\caption{The joint autoregressive and hierarchical prior architecture of the progressive SIC model. }
\label{figureA2}
\end{center}
\vskip -0.2in
\end{figure}

\subsection{Extended Thinking: About Synonymous Idempotence Constraints}
\label{section_B_2}

As discussed earlier, since the ideal synset is not explicitly constructed, directly determining the overlap between the reconstructed synset and the ideal synset is infeasible. However, since the ideal synset directly corresponds to the synonymous representation $\hat{\boldsymbol{y}}_s$ in the latent space, the distance between the two synsets in the data space can be evaluated by first re-encoding the samples from the reconstructed synset, then obtaining the new synonymous representation $\hat{\boldsymbol{y}}'_s$, and finally calculating the distance between $\hat{\boldsymbol{y}}_s$ and $\hat{\boldsymbol{y}}'_s$. This idea can be directly integrated into the training process. By incorporating constraints into the loss function, the SIC codec can be explicitly guided to optimize toward maximizing the overlap between the reconstructed synset and the ideal synset during the optimization process, i.e.,
\begin{equation}
    \mathcal{L}_c^{\text{synset}} = \left|\left|\hat{\boldsymbol{y}}'_s - \hat{\boldsymbol{y}}_s\right|\right|^2,
\end{equation}
in which the adopted distance measure is based on the MSE function in our model optimization.

Additionally, if sufficient diversity among the samples in the reconstructed synset should be ensured, the following constraints can be incorporated into the loss function:
\begin{equation}
    \mathcal{L}_c^{\text{detail}} = \left|\left|\hat{\boldsymbol{y}}'_{\epsilon, j} - \hat{\boldsymbol{y}}_{\epsilon, j}\right|\right|^2.
\end{equation}
It refers to re-feeding the reconstructed image into the encoder, computing the difference between the new detailed representation $\hat{\boldsymbol{y}}'_{\epsilon, j}$ and the original detail representation $\hat{\boldsymbol{y}}_{\epsilon, j}$, and incorporating it into the optimization process of the SIC codec.

When both of the above constraints are equal to 0, the idempotence property of the reconstructed image samples is effectively satisfied (in which idempotence ``refers to the stability of image code to re-compression'', as stated in Xu's paper \yrcite{xu2024idempotence}). Here, we term the above constraint as the synonymous idempotence constraint and incorporate it into the loss function for the neural SIC model implemented in this paper. Please refer to \cref{section_C_I} for specific training details.

\section{Experimental Illustrations: Implementations and Supplementary Results}
\label{section_C}

In this section, we provide implementation details of the progressive SIC model and supplementary results for \cref{Section_V}.

\subsection{Implementation details}
\label{section_C_I}

The auto-encoder architecture, including an analysis transform as the encoder and a synthesis transform as the decoder, is implemented based on the Swin Transformer \cite{liu2021swin}. The implementation details are shown in \cref{figureA1}. Both the analysis transform and the synthesis transform perform a 4-stage nonlinear processing, in which each layer includes a dimension adjustment module (Linear Embedding, Patch Merging, and Patch Division) and a Swin Transformer module. 

For the analysis transform, the image $\boldsymbol{x}$ with the resolution of $h \times w \times 3$ is first partitioned into patches with the resolution of $2 \times 2$. Then, the patches are fed into the Linear Embedding module at stage 1 for dimension expansion, and the features are further extracted using a followed Swin Transformer Block. The following processing stages all use the Patch Merging block and several Swin Transformer blocks to extract deeper features. Finally, the analysis transform outputs a feature map with a dimension of $\frac{h}{16} \times \frac{w}{16} \times 512$, in which the channel dimension $C = 512$.

To support multiple synonymous level partitioning, we perform equal slicing along the channel dimension of the latent representation. We set the number of the synonymous levels $L = 16$, thus the latent representations are partitioned along the channel dimension into $16$ groups. Each synonymous group contains a sub-feature map with the size of $\frac{h}{16} \times \frac{w}{16} \times 32$. When the $l$-th synonymous level is selected, the synonymous representations $\boldsymbol{y}_s$ contains the first $l$ groups of the sub-feature map with the full size of $\frac{h}{16} \times \frac{w}{16} \times 32l$, thus makes the remaining levels serve as detailed representations $\boldsymbol{y}_{\epsilon}$. 

For the synthesis transform, the quantized synonymous representations $\hat{\boldsymbol{y}}_s$ and sampled detailed representations $\hat{\boldsymbol{y}}_{\epsilon}$ are as input. Then, four upsampled Swin Transformer stages and a final convolutional layer are applied to the input to integrate the global information of the image, in which each stage increases the input resolution through the corresponding number of Swin Transformer modules in the analysis transform and a Patch Division module. Finally, a convolutional layer outputs the reconstructed image $\hat{\boldsymbol{y}}_j$.

For rate estimation, the progressive SIC model adopts a joint autoregressive and hierarchical prior architecture based on \cite{minnen2018joint} as shown in \cref{figureA2}. All these modules in this architecture are implemented based on convolutional neural networks represented by \cref{figureA3}, in which $Q$ represents the quantization model using the round function.

In this architecture, the hyperprior $h_a$ and $h_s$ are performed for both the distribution estimation for the synonymous representation $\hat{\boldsymbol{y}}_s$ and the sampling for the detailed representation. The hyperprior performs the ``forward adaptation'' to estimate $\boldsymbol{\mu}_s^h, \boldsymbol{\sigma}_s^h$ and $\boldsymbol{\mu}_{\epsilon}^h, \boldsymbol{\sigma}_{\epsilon}^h$ based on the side information $\hat{\boldsymbol{z}}$, while the context model performs the ``backward adaptation" to estimate $\boldsymbol{\mu}_{s}^c, \boldsymbol{\sigma}_{s}^c$ based on the already-coded synonymous representations, in which the expressions ``forward adaptation'' and ``backward adaptation" are stated in \cite{balle2020nonlinear}. For the synonymous representations $\hat{\boldsymbol{y}}_s$, an Entropy Parameters modules are employed to integrate the input $\boldsymbol{\mu}_s^h, \boldsymbol{\sigma}_s^h$ and $\boldsymbol{\mu}_{s}^c, \boldsymbol{\sigma}_{s}^c$ to an output accurate estimation $\boldsymbol{\mu}_{s}, \boldsymbol{\sigma}_{s}$. And for the detailed representation $\hat{\boldsymbol{y}}_{\epsilon}$, a uniform sampling based on the following equation is utilized:
\begin{equation}
    \hat{\boldsymbol{y}}_{\epsilon, j} = Q\left(\boldsymbol{\mu}_{\epsilon}^h + \boldsymbol{\mathcal{U}}\left(-2, 2\right)\right),
\end{equation}
in which the uniform distribution $\boldsymbol{\mathcal{U}}\left(-2, 2\right)$ is set empirically. We realize that this sampling method cannot fit the ideal conditional distribution \textbf{in vector form}: \textbf{The SVI theoretical analysis provides an ideal detail sampling principle that follows a conditional distribution} $p_{\hat{\boldsymbol{y}}_{\epsilon,j}|\tilde{\boldsymbol{y}}_s}$ \textbf{in vector form (stated in the \cref{ENLSL_step1}), which guides the prediction of the samples vector} $\hat{\boldsymbol{y}}_{\epsilon,j}$ \textbf{conditioned on synonymous representations $\tilde{\boldsymbol{y}}_s$}. However, \textbf{the current unideal sampling mechanism cannot ensure reasonable contextual structure in the details of the reconstructed images}, which may affect the distribution consistency focused by measures like FID. We argue that sampling the detailed representation to match the ideal conditional distribution in vector form is challenging, especially when multiple distinct samples (i.e., $M>1$). Although we are still exploring effective solutions to this problem, adopting a simple yet suboptimal sampling method is currently a necessary compromise. This will be a key breakthrough direction for our future research.

To support the multiple synonymous level partitioning mechanism, we modify the masked convolutional layer of the spatial context autoregressive model in \cite{minnen2018joint} to a spatial-channel context autoregressive model like \cite{li2020spatial}, in which the 3D mask for the masked Context Model is presented by \cref{figureA4}. The core mechanism is to estimate the current feature's probability distribution by conditioning on encoded spatial and channel features. This spatial-channel context autoregressive module is implemented based on a $5 \times 5$ convolutional layer, in which the spatial context autoregressive process within the single group of sub-latent feature map $c_k$ is achieved by the left matrix shown in \cref{figureA4}, and the channel context autoregressive process from the coded groups of sub-latent feature map to the current group is achieved by the full $1$ matrix shown as the right matrix in \cref{figureA4}.

Additionally, for the Entropy Parameters module, all the estimations of $\boldsymbol{\mu}_{s}$ and $\boldsymbol{\sigma}_{s}$ across diverse synonymous levels share the same module. To achieve this, each convolutional layer in the Entropy Parameters module assigns a group parameter of $l$, enabling the layer to process $l$ independent groups in parallel. These groups perform the estimation processes separately for each sub-feature map, and their outputs are concatenated to form $\boldsymbol{\mu}_{s}$ and $\boldsymbol{\sigma}_{s}$.

\begin{figure*}[t]
\vskip 0.2in
\begin{center}
\centerline{\includegraphics[width=0.75\textwidth]{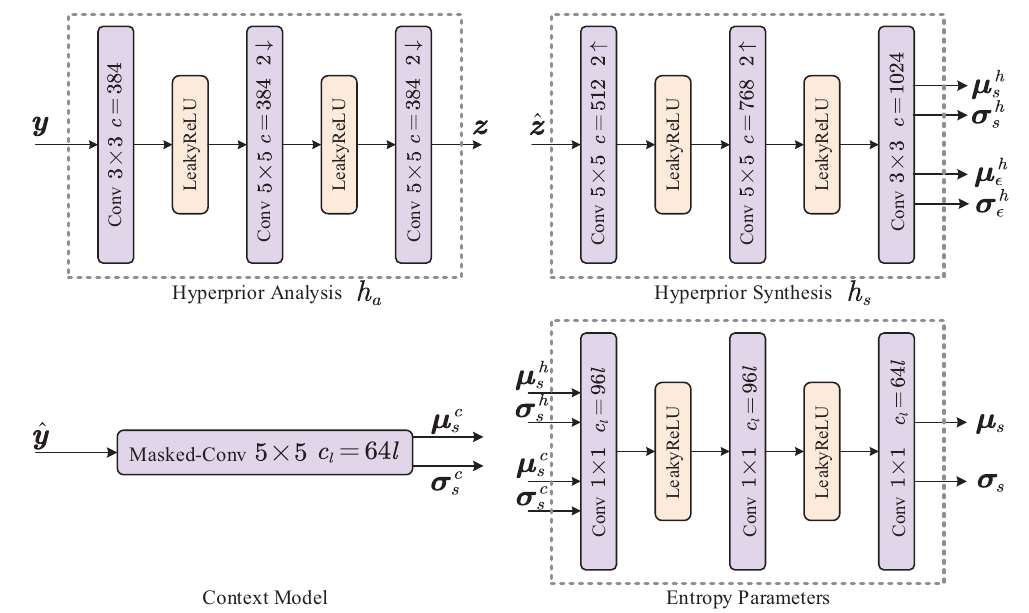}}
\caption{The implementation details of the modules in the joint autoregressive and hierarchical prior architecture.}
\label{figureA3}
\end{center}
\vskip -0.2in
\end{figure*}

\begin{figure*}[t]
\vskip 0.2in
\begin{center}
\centerline{\includegraphics[width=0.75\textwidth]{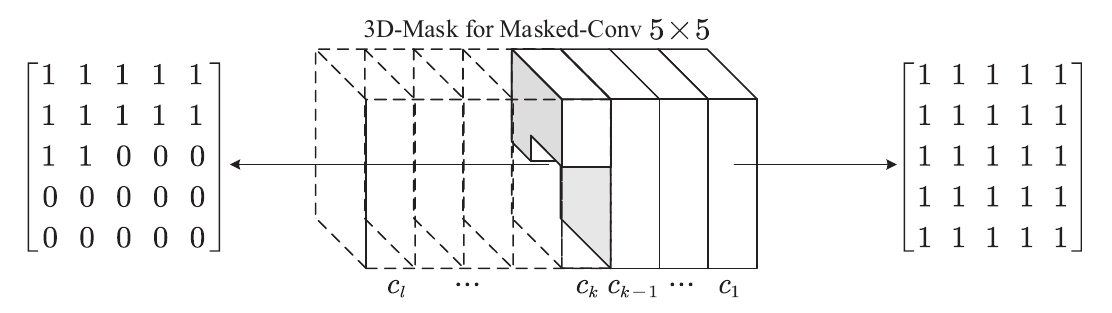}}
\caption{The 3D mask for the masked convolutional layer in the context model.}
\label{figureA4}
\end{center}
\vskip -0.2in
\end{figure*}

Before model training, hyperparameter values for the loss function \eqref{AlternativeTrainingLoss1} and \eqref{AlternativeTrainingLoss2} must be specified before model training. \cref{hyperparameters} presents these hyperparameters of our progressive SIC model, which are configured empirically. 

Since multiple sub-feature maps are partitioned in the channel dimension to build different synonymous levels $l=1, 2, \cdots, L$, each synonymous level $l$ needs to learn different levels of information during model training. However, due to the limitations of computing resources during training, it is not possible to cover all synonymous levels in each forward process, and we can only use the loss functions of different synonymous levels alternately for training. This alternating training between layers can lead to some layers losing the ability to extract effective information early in training, causing the coding rate of the corresponding sub-feature map to approach 0 in the subsequent training process and in the final model. To avoid this, an effective trick is to introduce the following constraints into the loss function during the warming-up phase, ensuring that each sub-feature map learns valid information:
\begin{equation}
    \mathcal{L}_c^{\text{warming}} = a \cdot \log p_{\tilde{\boldsymbol{y}}_l}\left(\tilde{\boldsymbol{y}}_l\right) + b \cdot \textbf{std}\left(-\log p_{\tilde{\boldsymbol{y}}_1}\left(\tilde{\boldsymbol{y}}_1\right), \cdots, -\log p_{\tilde{\boldsymbol{y}}_L}\left(\tilde{\boldsymbol{y}}_L\right)\right),
\end{equation}
in which the former term is the minus coding rate estimation of the current sub-feature map corresponds to the synonymous level $l$; the latter term, i.e., $\textbf{std}\left(\cdot\right)$ is a standard deviation function, which calculates the standard deviation of the coding rates of each sub-feature map; $a$ and $b$ are the corresponding tradeoff factors. This constraint increases the coding rate estimation of the sub-feature map $l$ and limits the standard deviation of all the sub-feature maps' coding rate estimation, allowing each sub-feature map to learn a certain amount of effective information during the warm-up stage without excessive learning. We empirically set $a = 4, b = 64$ in this constraint.

\begin{table}[t]
\caption{Hyperparameters Configurations for progressive SIC model training.}
\label{hyperparameters}
\vskip 0.15in
\begin{center}
\begin{small}
\begin{sc}
\begin{tabular}{ccccccccc}
\toprule
$l$ & 1 & 2 & 3 & 4 & 5 & 6 & 7 & 8 \\
\midrule
$\alpha$    &  \multicolumn{8}{c}{0.5}\\
\midrule
$\lambda_r^{\left(l\right)}$ & 128 & 256 & 384 & 512 & 640 & 768 & 896 & 1024 \\ 
\midrule
$\lambda_d^{\left(l\right)}$    & $ 2 ^ {39 / 8}$ & $ 2 ^ {38 / 8}$ & $ 2 ^ {37 / 8}$ & $ 2 ^ {36 / 8}$ & $ 2 ^ {35 / 8}$ & $ 2 ^ {34 / 8}$ & $ 2 ^ {33 / 8}$ & $ 2 ^ {32 / 8}$\\
\midrule
$\lambda_p^{\left(l\right)}$    & $ 2 ^ {45 / 8}$ & $ 2 ^ {42 / 8}$ & $ 2 ^ {39 / 8}$ & $ 2 ^ {36 / 8}$ & $ 2 ^ {33 / 8}$ & $ 2 ^ {30 / 8}$ & $ 2 ^ {27 / 8}$ & $ 2 ^ {24 / 8}$          \\
\toprule
\toprule
$l$ & 9 & 10 & 11 & 12 & 13 & 14 & 15 & 16 \\
\midrule
$\alpha$    &  \multicolumn{8}{c}{0.5}\\
\midrule
$\lambda_r$ & 1152 & 1280 & 1408 & 1536 & 1664 & 1792 & 1920 & 2048 \\
\midrule
$\lambda_d^{\left(l\right)}$    & $ 2 ^ {31 / 8}$ & $ 2 ^ {30 / 8}$ & $ 2 ^ {29 / 8}$ & $ 2 ^ {28 / 8}$ & $ 2 ^ {27 / 8}$ & $ 2 ^ {26 / 8}$ & $ 2 ^ {25 / 8}$ & $ 2 ^ {24 / 8}$  \\
\midrule
$\lambda_p^{\left(l\right)}$    & $ 2 ^ {21 / 8}$ & $ 2 ^ {18 / 8}$ & $ 2 ^ {15 / 8}$ & $ 2 ^ {12 / 8}$ & $ 2 ^ {9 / 8}$ & $ 2 ^ {6 / 8}$ & $ 2 ^ {3 / 8}$ & $ 2 ^ {0 / 8}$          \\
\bottomrule
\end{tabular}
\end{sc}
\end{small}
\end{center}
\vskip -0.1in
\end{table}

Combining the synonymous constraints discussed at the end of \cref{section_B_2}, the constraints in the training loss function in the warming-up process can be summarized as
\begin{equation}
    \mathcal{L}_c^{\left(l\right)} = \mathcal{L} _{c}^{\text{synset}} + \mathcal{L} _{c}^{\text{detail}} + \mathcal{L} _{c}^{\text{warming}},
\end{equation}
After the warming-up phase, the constraints will be modified to
\begin{equation}
    \mathcal{L}_c^{\left(l\right)} = \mathcal{L} _{c}^{\mathrm{synset}}+\mathcal{L} _{c}^{\text{detail}}.
\end{equation}
During training, we treat the first 12500 iterations as the warming-up phase, and the subsequent iterations are for the formal training process.

In performance validation, for each input image at each synonymous level, we randomly select a single sample $\hat{\boldsymbol{x}}_j$ from the reconstructed synset $\hat{\boldsymbol{\mathcal{X}}}$ as the resulting image. Then we compare this result with the baseline methods by calculating the average coding rate, PSNR (distortion quality), and LPIPS and DISTS (perceptual quality) across the dataset.

\begin{figure*}[t]
\vskip 0.2in
\begin{center}
\centerline{\includegraphics[width=\textwidth]{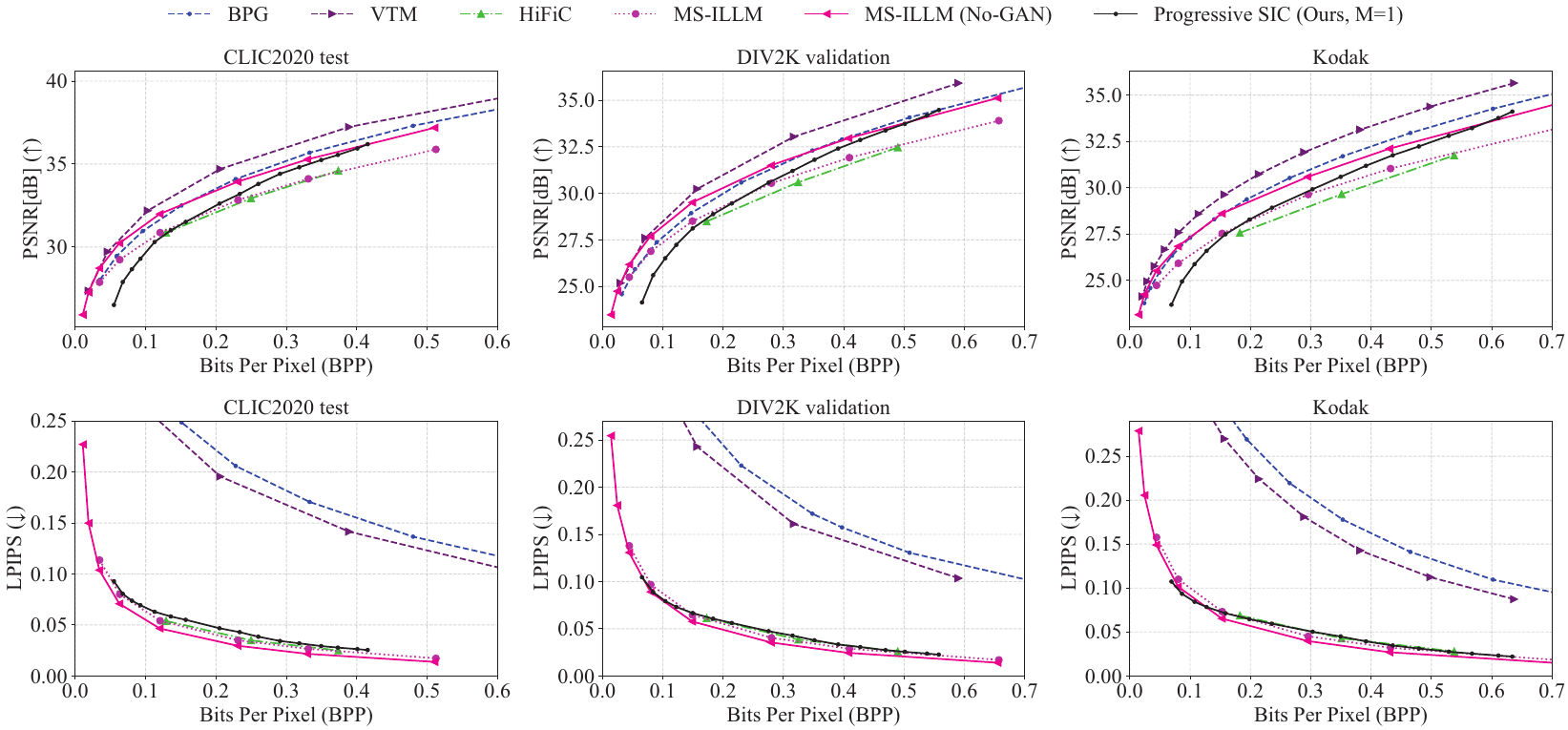}}
\caption{Comparisons of methods using PSNR and LPIPS on different datasets. Each point on the HiFiC and MS-ILLM performance curves is from a single model, while our entire performance curves are achieved by a single progressive SIC model.}
\label{figureA5}
\end{center}
\vskip -0.2in
\end{figure*}

\begin{figure*}[p]
\vskip 0.2in
\begin{center}
\centerline{\includegraphics[width=0.89\textwidth]{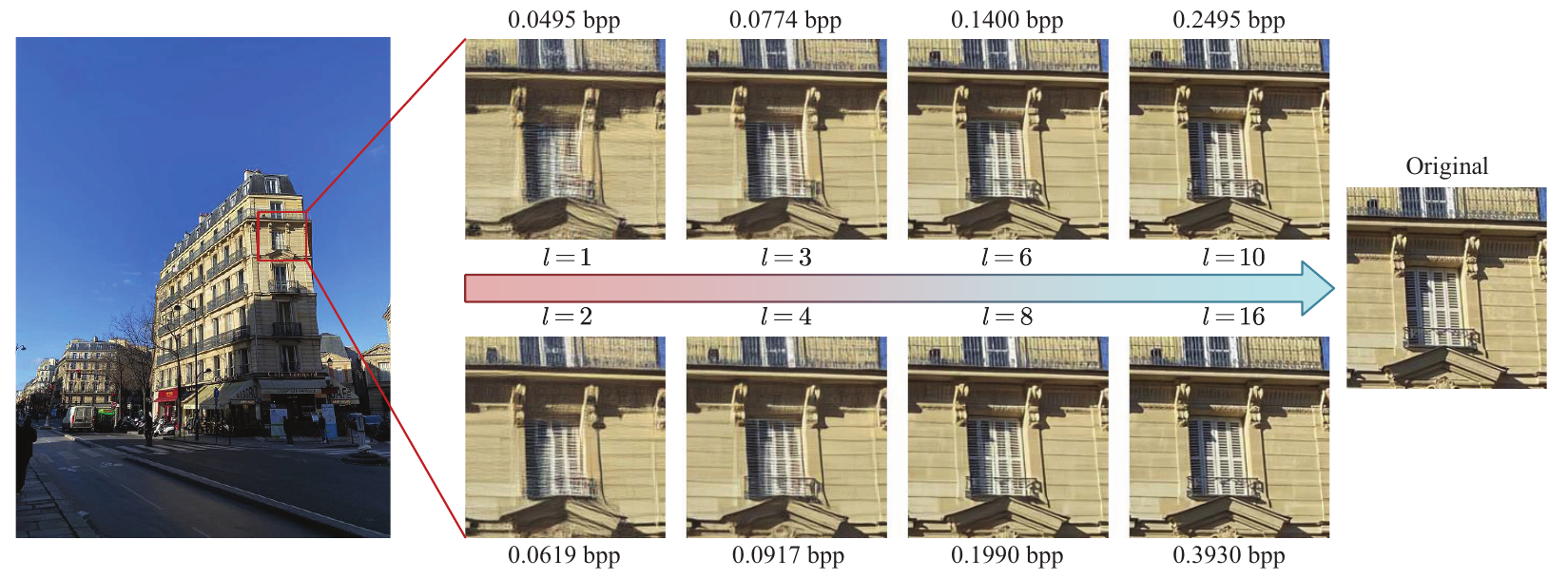}}
\caption{Visualization results of reconstructed images at different synonymous levels using progressive SIC ($M=1$). Image from the CLIC2020 test set.}
\label{figureA6}
\end{center}
\vskip -0.2in
\end{figure*}

\begin{figure*}[p]
\vskip 0.2in
\begin{center}
\centerline{\includegraphics[width=0.89\textwidth]{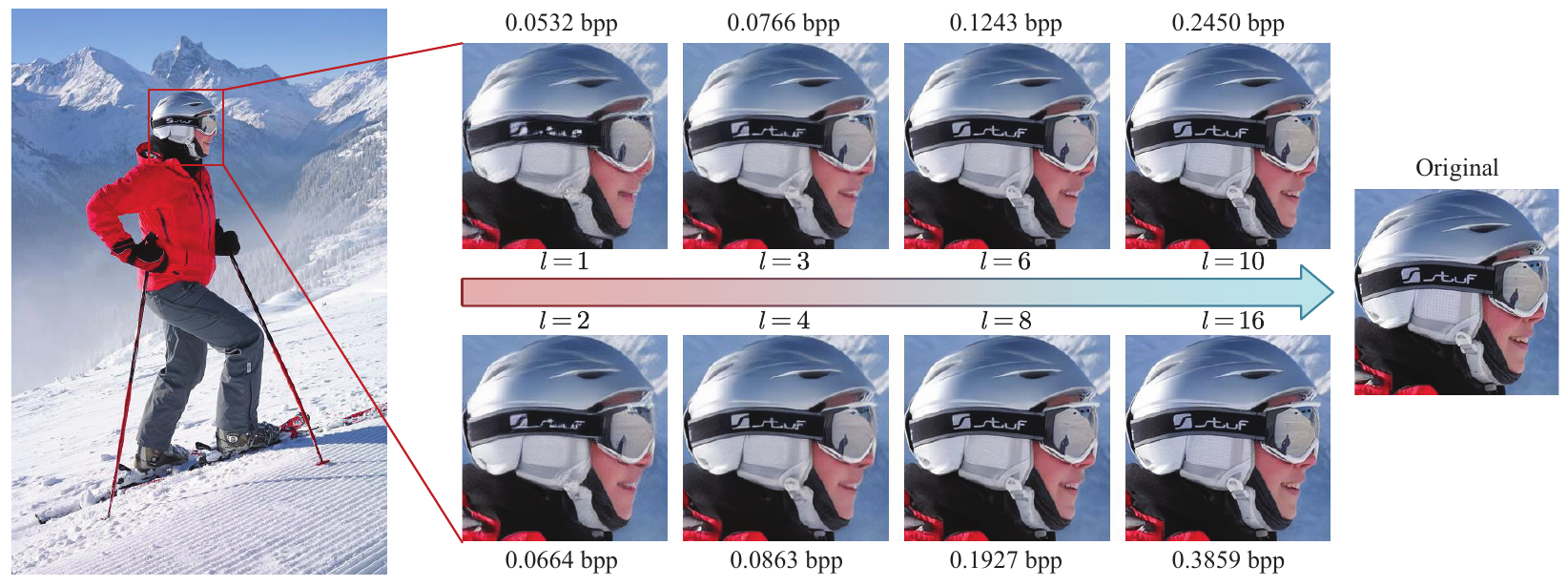}}
\caption{Visualization results of reconstructed images at different synonymous levels using progressive SIC ($M=1$). Image from the DIV2K validation set.}
\label{figureA7}
\end{center}
\vskip -0.2in
\end{figure*}

\begin{figure*}[p]
\vskip 0.2in
\begin{center}
\centerline{\includegraphics[width=0.89\textwidth]{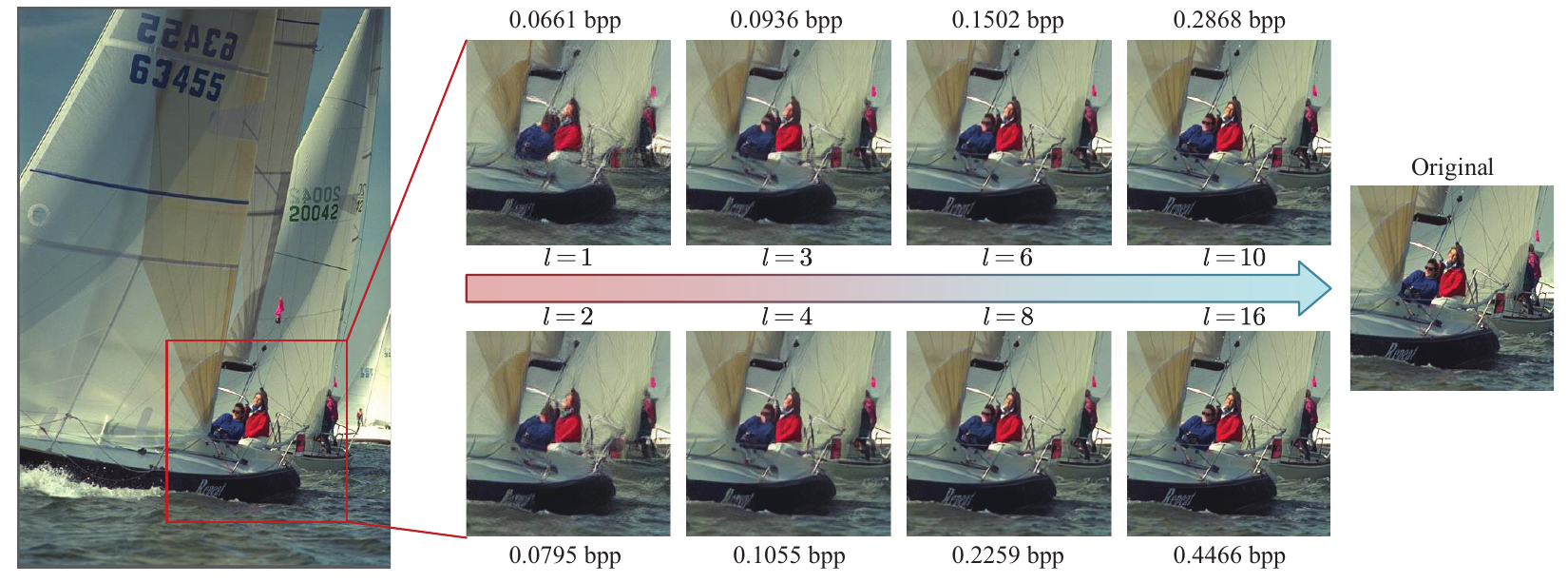}}
\caption{Visualization results of reconstructed images at different synonymous levels using progressive SIC ($M=1$). Image from the Kodak validation set.}
\label{figureA8}
\end{center}
\vskip -0.2in
\end{figure*}

\begin{figure*}[p]
\vskip 0.2in
\begin{center}
\centerline{\includegraphics[width=\textwidth]{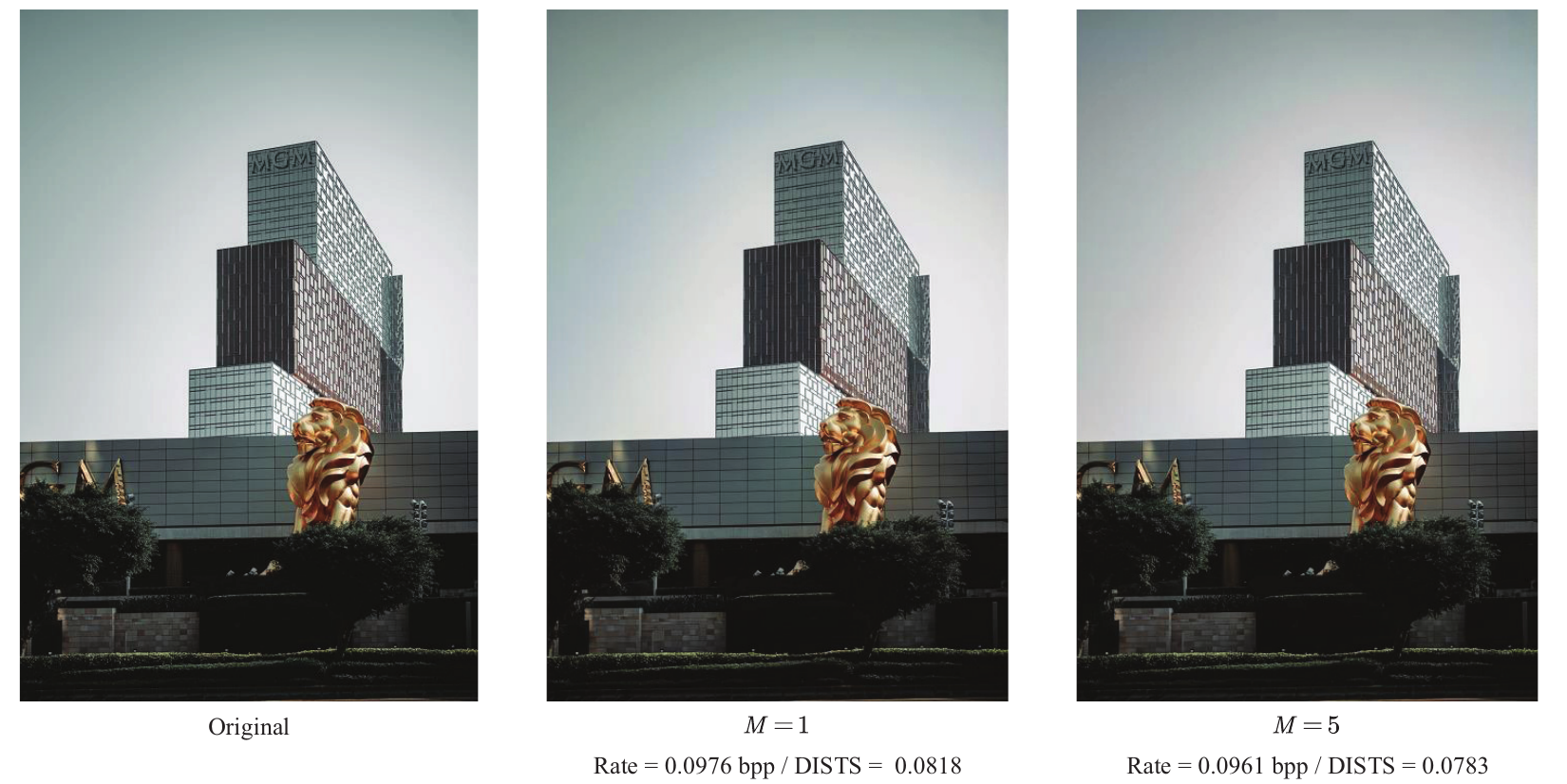}}
\caption{Visualization comparison of reconstructed images at synonymous level $l = 5$ using progressive SIC with $M=1$ and $M=5$. Image from the CLIC2020 test set.}
\label{figureA9}
\end{center}
\vskip -0.2in
\end{figure*}

\begin{figure*}[p]
\vskip 0.2in
\begin{center}
\centerline{\includegraphics[width=\textwidth]{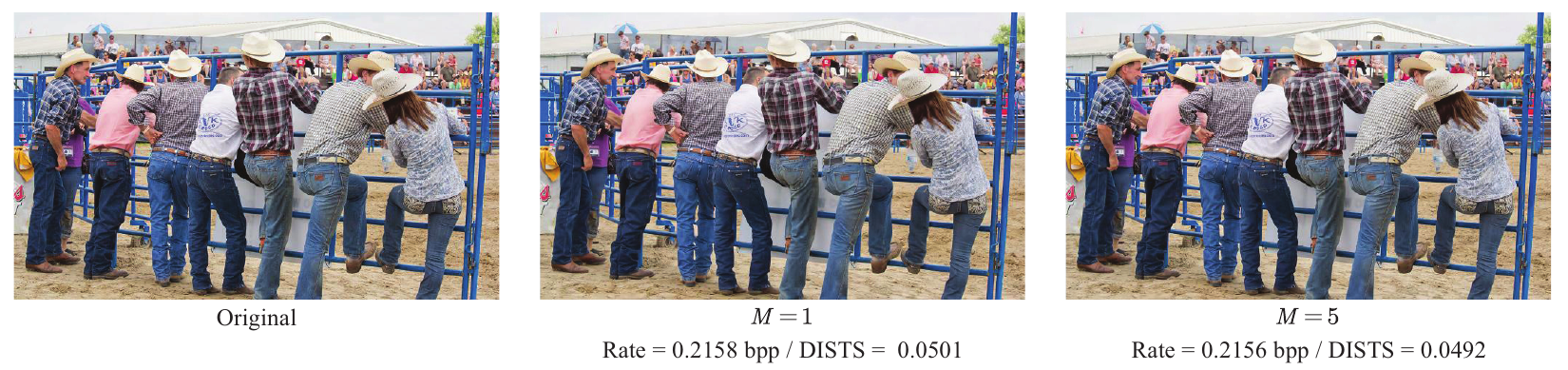}}
\caption{Visualization comparison of reconstructed images at synonymous level $l = 6$ using progressive SIC with $M=1$ and $M=5$. Image from the DIV2K validation set.}
\label{figureA10}
\end{center}
\vskip -0.2in
\end{figure*}

\begin{figure*}[p]
\vskip 0.2in
\begin{center}
\centerline{\includegraphics[width=\textwidth]{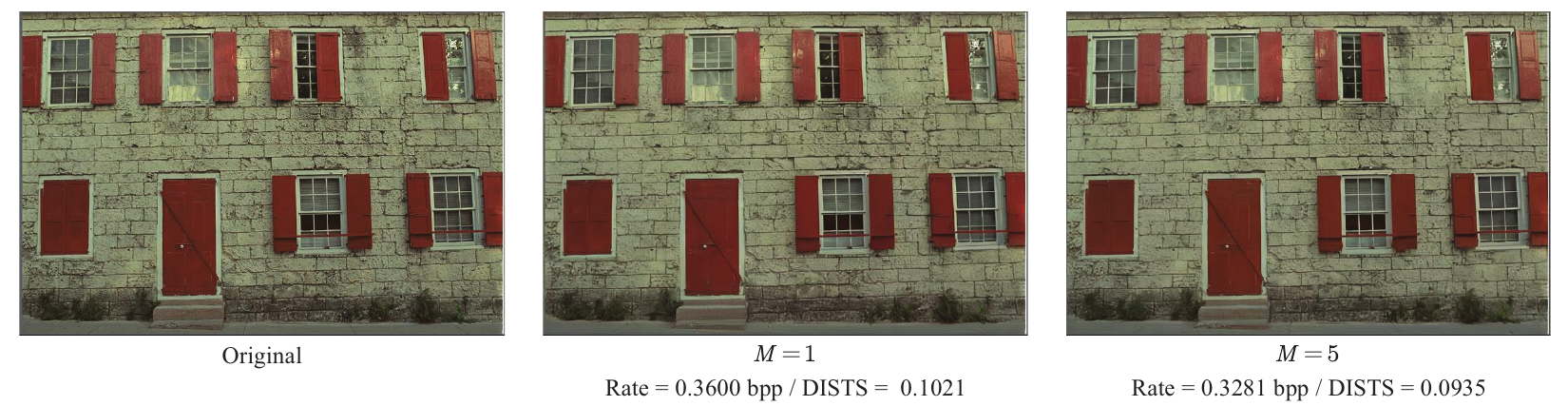}}
\caption{Visualization comparison of reconstructed images at synonymous level $l = 7$ using progressive SIC with $M=1$ and $M=5$. Image from the Kodat dataset.}
\label{figureA11}
\end{center}
\vskip -0.2in
\end{figure*}

\subsection{Supplementary Results}
\label{Section_C_II}

\cref{figureA5} provides the supplementary PSNR and LPIPS quality results for \cref{figure4}, which are quality assessment measures corresponding to the distortion and perceptual terms in the training loss function. As shown in the figure, for the distortion evaluation measure PSNR, as the coding rate increases, PSNR progressively approaches the performance of No-GAN MS-ILLM and even that of BPG. For the perceptual quality evaluation measure LPIPS, our solution reaches near the LPIPS quality of the perceptual solution at full rate. Even for the Kodak dataset, at low rates, our method outperforms the No-GAN MS-ILLM scheme. In this case, even without applying DISTS to the loss function for optimization, our method still demonstrates performance gains under most DISTS rates, shown as the results in \cref{figure4}. This indicates that our method aligns better with the resampling tolerance emphasized by DISTS. 

Consequently, our method achieves comparable rate-distortion-perception performance using a single progressive SIC model, which demonstrates our advantages in variable-rate support.

Besides, the visualization results are presented from \cref{figureA6} to \cref{figureA11}, which can be divided in two groups:

\begin{enumerate}
    \item The first group illustrates the process of improving the reconstructed image as the synonymous level $l$ increases, which includes \cref{figureA6}, \cref{figureA7}, and \cref{figureA8}, corresponding to the test image from CLIC2020 test, DIV2K validation, and the Kodak dataset, respectively. 
    
    We captured the effects of reconstructed images at specific synonymous levels to clearly demonstrate how switching synonymous levels impacts the quality of the reconstructed images: At low synonymous levels, the coding rate for the synonymous representation is relatively low. As a result, the reconstructed image is sampled from a larger synset, capturing only the global semantic content of the image, with limited pixel-level detail accuracy. As the synonymous level increases, the coding rate rises, which reduces the size of the reconstructed synset. This allows for more accurate semantic information and progressively enhances the details in the reconstructed image. 
    
    These visualization results demonstrate that the progressive SIC model we implemented can effectively leverage the switching of synonymous levels to adjust to different coding rates, enabling the accuracy of the reconstructed image to improve as the coding rate increases, while ensuring a smooth enhancement of perceptual quality. 
    
    \item The second group presents a visualized quality comparison of the reconstructed images corresponding to specific synonymous levels, with the number of detailed representations $\hat{\boldsymbol{y}}_{\epsilon, j}$ set to $M=1$ and $M=5$ during training. It includes \cref{figureA9}, \cref{figureA10}, and \cref{figureA11}, corresponding to the test image from CLIC2020 test, DIV2K validation, and the Kodak dataset, respectively. The perceptual quality is evaluated using the DISTS measure. 
    
    These visualization results demonstrate certain advantages of increasing the sampling number of $\hat{\boldsymbol{y}}_{\epsilon, j}$ in perceptual qualities at certain synonymous levels, since more sampling results in effective learning of the shared characteristics within the reconstructed synset $\hat{\boldsymbol{\mathcal{X}}}$. 
\end{enumerate}

We state that the results presented in this article are some of our preliminary results. Currently, our investigations on the perceptual loss (i.e., the divergence term) utilization, synonymous level partition mechanisms, sampling numbers, and hyperparameter settings on progressive SIC schemes are still insufficient. These factors may contribute to the issues observed in \cref{figure4} and \cref{figure5}. Future works are needed to explore these aspects in more detail.

\section{Limitations: Implementation Details and Supplementary Results}
\label{section_D}

In this section, we present the implementation details and results of the supplementary experiments in \cref{Section_VI}, focusing on the primary concern of using GAN-based adversarial loss to replace the divergence term in the loss function of \cref{SICcost}.

\subsection{Implementation Details}
\label{section_D_1}

\begin{figure*}[t]
\vskip 0.2in
\begin{center}
\centerline{\includegraphics[width=0.70\textwidth]{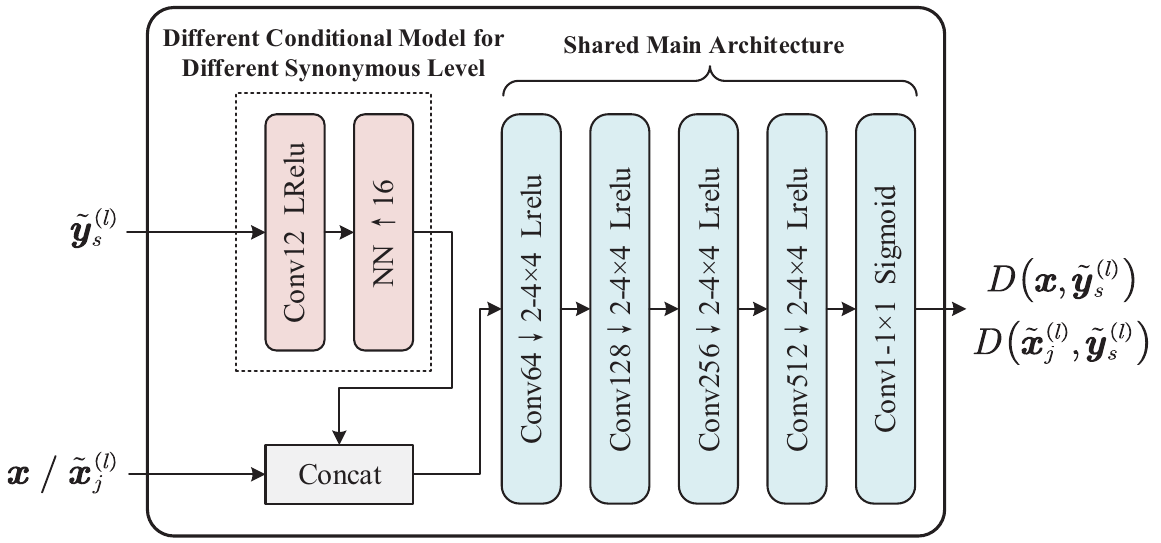}}
\caption{Employed Discriminator Architecture in our finetuning attempts.}
\label{figureA12}
\end{center}
\vskip -0.2in
\end{figure*}

\begin{figure*}[ht]
\vskip 0.2in
\begin{center}
\centerline{\includegraphics[width=\textwidth]{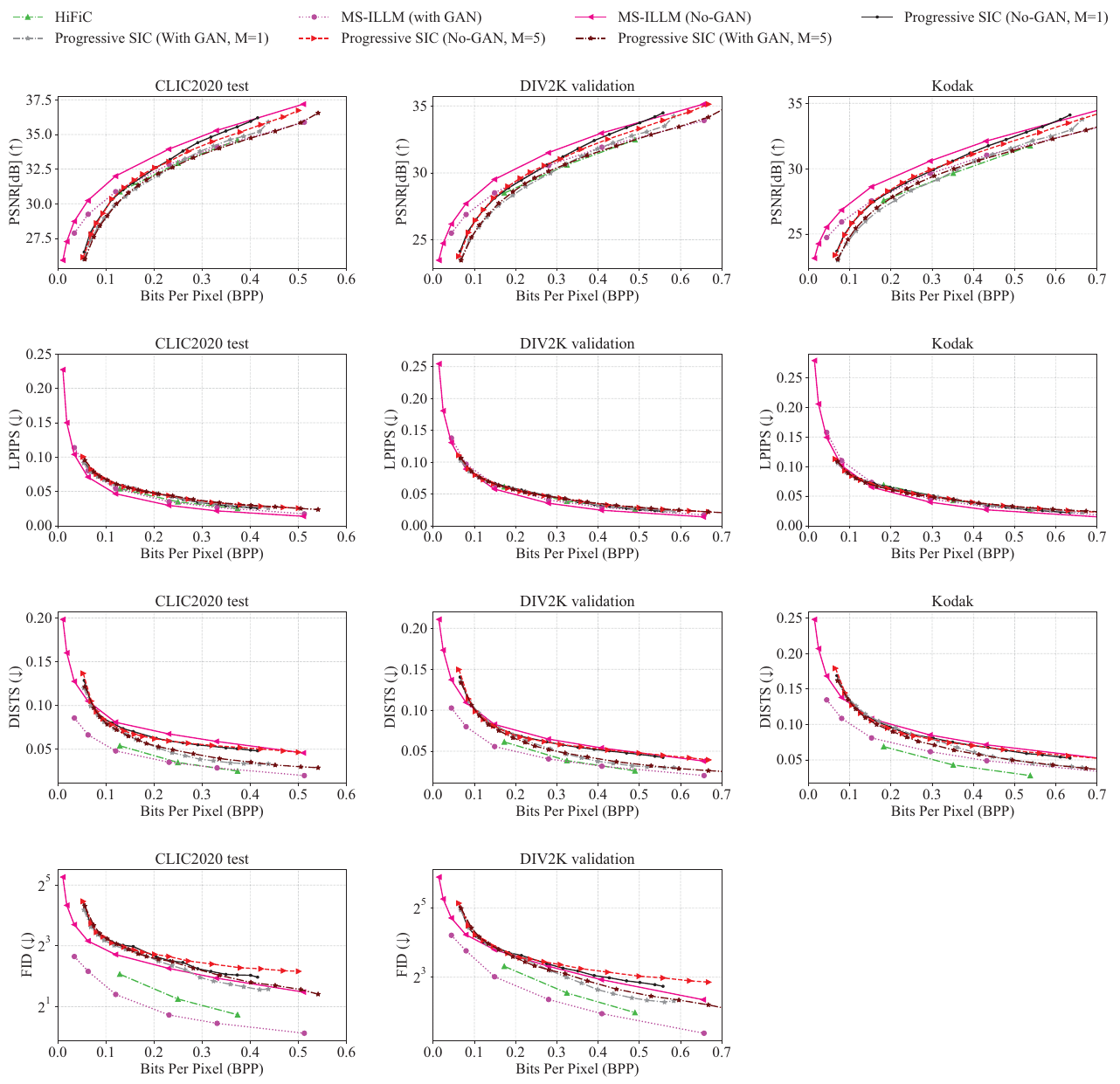}}
\caption{Comparisons of methods using PSNR, LPIPS, DISTS, and FID on different datasets (Supplemented fine-tuned model performance). Each point on the HiFiC and MS-ILLM performance curves is from a single model, while our entire performance curves are achieved by a single progressive SIC model.}
\label{figureA13}
\end{center}
\vskip -0.2in
\end{figure*}

As described in \cref{Section_VI}, we use GAN-based adversarial loss to replace the divergence term in the loss function and improve the performance of our implemented SIC model. Hence, the auto-encoder architecture remains consistent with those in \cref{section_C_I}. This section mainly details the discriminator design.

The utilized conditional discriminator is a convolutional neural network with two input branches—the original image $\boldsymbol{x}$ / reconstructed image $\tilde{\boldsymbol{x}}_j^{(l)}$ as a main branch and the synonymous representation $\tilde{\boldsymbol{y}}_s^{(l)}$ as a condition branch—and outputs the probability that the input image is judged as real. Besides, the discriminator model consists of two parts, i.e., a group of conditional models and a main architecture. The conditional branch is first upsampled by 16 times, concatenated with the main branch in the channel dimension, and then fed into the main architecture to estimate the output probability. We fine-tune the synthesis transform $g_s$ (i.e., the generator) with a single discriminator model, using a corresponding conditional model for each synonymous level $l$ and sharing the main architecture across all synonymous levels. 

\subsection{Supplementary Results}
\label{section_D_2}

\cref{figureA13} shows the performance of the fine-tuned model (labeled as ``with GAN'') using PSNR, LPIPS, DISTS, and FID, compared with the original model (labeled as ``no-GAN'') and other comparison schemes across different datasets. It should be noted that calculating FID on the Kodak dataset is unavailable as this dataset has only 24 images to yield the useful metrics, which is also stated in \cite{muckley2023improving}. These results show that introducing adversarial loss improves perceived quality, indicating its optimization direction is closer to the optimal divergence than metrics like LPIPS.

However, unlike in classical perceptual compression methods \cite{muckley2023improving}, we observe that while DISTS—more tolerent to resampling—improves significantly, FID—which measures distribution distance—shows limited gains. This suggests our empirical detail sampling leads to a suboptimal reconstructed image distribution, likely because it does not follow the ideal probability distribution of the detail representations in vector form, which verifies our statements in \cref{section_C_I}. Therefore, solving this problem is also one of the key research directions of our future work.

\cref{figureA14} compares reconstructed images before and after fine-tuning with adversarial loss at synonymous level $l=1$, using samples from different datasets. Although DISTS and FID show limited numerical improvement at this level, the perceptual improvement can be visibly enhanced, which confirms that adversarial loss improves generation quality across synonymous levels (i.e., across different rate ranges).

Although we acknowledge a performance gap between our method and state-of-the-art approaches, we also need to clarify once again the advantages of our approach. That is, our implementation scheme—unlike schemes that require separate models for each rate—offers much easier deployment by achieving acceptable perceptual qualities using one single model when we set different synonymous levels. We have identified its current limitations and proposed potential solutions; addressing these could enable SIC to approach the theoretically optimal scheme instructed by SVI and thus potentially surpass existing methods in future works.

\begin{figure*}[hp]
\vskip 0.2in
\begin{center}
\centerline{\includegraphics[width=0.88\textwidth]{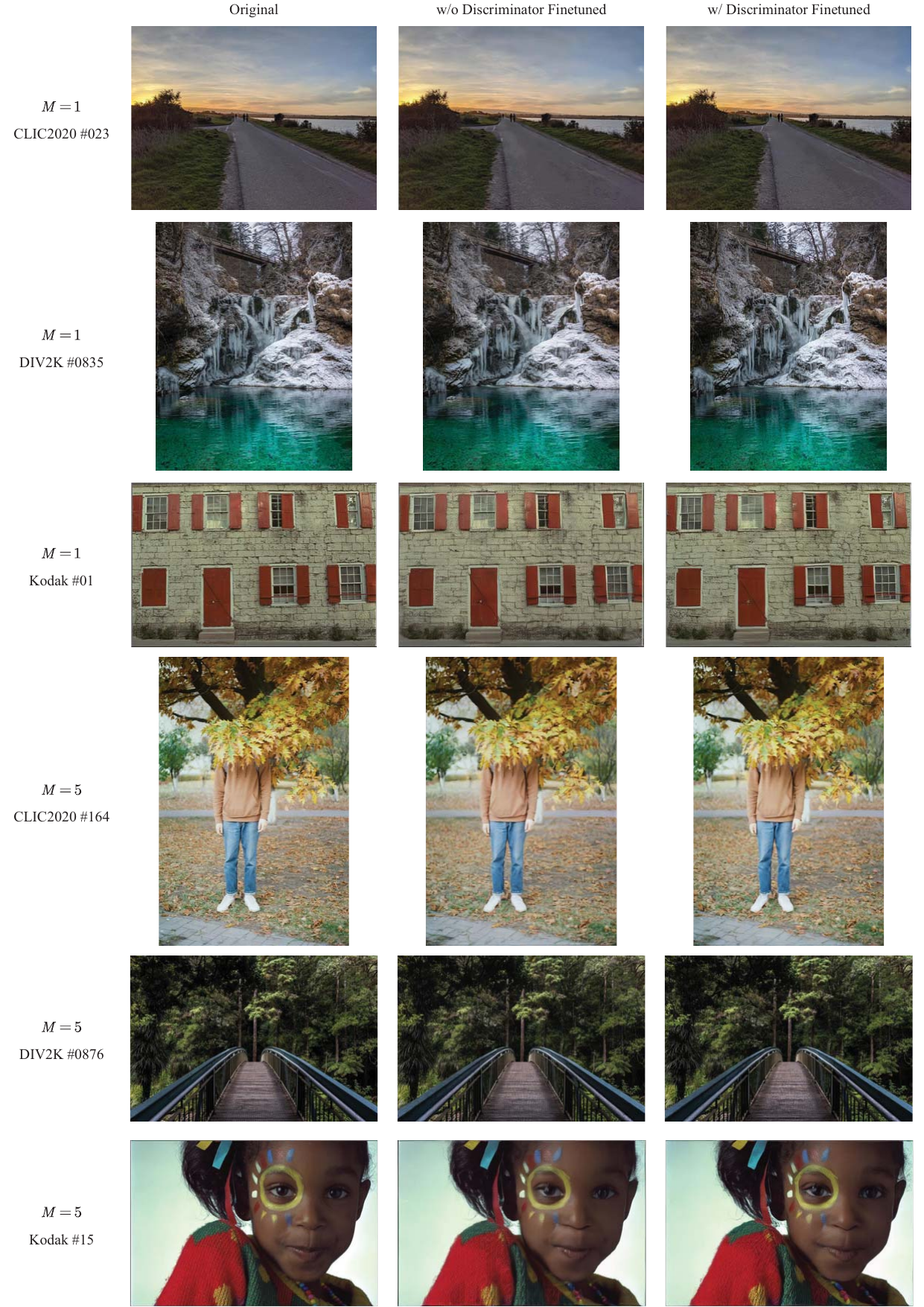}}
\caption{Visualization results of reconstruction images: No-GAN vs. with GAN (Synonymous level $l=1$). Images from the CLIC2020 test set, the DIV2K validation set, and the Kodak dataset.}
\label{figureA14}
\end{center}
\vskip -0.2in
\end{figure*}


\end{document}